%2multibyte Version: 5.50.0.2960 CodePage: 1252
% !TeX spellcheck = en_US
%1130140925535334280

\documentclass{article}
%%%%%%%%%%%%%%%%%%%%%%%%%%%%%%%%%%%%%%%%%%%%%%%%%%%%%%%%%%%%%%%%%%%%%%%%%%%%%%%%%%%%%%%%%%%%%%%%%%%%%%%%%%%%%%%%%%%%%%%%%%%%%%%%%%%%%%%%%%%%%%%%%%%%%%%%%%%%%%%%%%%%%%%%%%%%%%%%%%%%%%%%%%%%%%%%%%%%%%%%%%%%%%%%%%%%%%%%%%%%%%%%%%%%%%%%%%%%%%%%%%%%%%%%%%%%
\usepackage{eurosym}
\usepackage{epsfig}
\usepackage{graphicx}
\usepackage{color}
\usepackage{anysize}
\usepackage{tikz}
\usepackage{cite}
\usepackage{amsmath,amsthm,verbatim,amssymb,amsfonts,amscd}
\usepackage{multirow}
\usepackage{subfigure}
\usepackage{float}
\usepackage{epsfig}
\usepackage{epstopdf}
\usepackage{amsxtra}
\usepackage{natbib}
\usepackage{pgfplots}
\usepackage{amsmath,amsthm,verbatim,amsbsy,amsfonts,amscd}
\usepackage{subfigure}
\usepackage{float}
\usepackage{enumerate}
\usepackage{amsxtra}
\usepackage[pdftex, colorlinks=true, linkcolor=blue, citecolor=blue, urlcolor=blue]{hyperref}
\usepackage{multirow}
\usepackage[toc,page]{appendix}

\setcounter{MaxMatrixCols}{10}
%TCIDATA{OutputFilter=LATEX.DLL}
%TCIDATA{Version=5.50.0.2960}
%TCIDATA{Codepage=1252}
%TCIDATA{<META NAME="SaveForMode" CONTENT="1">}
%TCIDATA{BibliographyScheme=BibTeX}
%TCIDATA{LastRevised=Tuesday, October 25, 2022 09:35:27}
%TCIDATA{<META NAME="GraphicsSave" CONTENT="32">}
%TCIDATA{Language=American English}

\fontencoding{OT1}\fontfamily{cmdh}\fontseries{bc}\fontshape{it}
\fontsize{14}{17pt}\selectfont
\baselineskip=40pt
\textheight 22.5truecm
\topmargin -0.5125truein
\textwidth 15.74truecm
\oddsidemargin -0.06truein
\evensidemargin -0.06truein
\parindent=0.65truecm
\parskip=12pt

\overfullrule=0pt
\topmargin0.0cm
\headheight0.0cm
\headsep0.0cm
\oddsidemargin0.0cm
\textheight23.0cm
\textwidth16.5cm
\footskip1.0cm
\theoremstyle{plain}
\newtheorem{theorem}{Theorem}
\newtheorem{corollary}{Corollary}
\newtheorem{lemma}{Lemma}
\newtheorem{proposition}{Proposition}

\theoremstyle{definition}

\newtheorem{claim}{Claim}

\newcommand{\RN}[1]{  \textup{\uppercase\expandafter{\romannumeral#1}}}

\providecommand{\keywords}[1]
{
	\small
	\textbf{\textit{Keywords---}} #1
}     

\begin{document}
	
	\title{Information Design in Optimal Auctions\thanks{
			We are very grateful to the editor and two anonymous referees for their
			valuable comments. We also thank Soumen Banerjee, Ben Brooks, Songzi Du,
			Simone Galperti, Gaoji Hu, Jingfeng Lu, Anne-Katrin Roesler, Satoru
			Takahashi, Kai Hao Yang, Wanchang Zhang, Xinhan Zhang, Weijie Zhong, and
			Junjie Zhou for their valuable comments. Financial support from ``the
			Singapore Ministry of Education Academic Research Fund Tier 1'' is
			gratefully acknowledged. All errors are our own.}}
	\author{Yi-Chun Chen\thanks{
			Department of Economics and Risk Management Institute, National University
			of Singapore. Email: ecsycc@nus.edu.sg} \and Xiangqian Yang\thanks{
			School of Economics and Trade, Hunan University, China. Email:
			yangxiangqian@hnu.edu.cn}}
	\date{October 15, 2022}
	\maketitle

	\abstract{
		We study the information design problem in a single-unit auction setting.
		The information designer controls independent private signals according to
		which the buyers infer their binary private values. Assuming that the seller
		adopts the optimal auction due to \cite{Myerson1981} in response, we
		characterize both the buyer-optimal information structure, which maximizes
		the buyers' surplus, and the seller-worst information structure, which
		minimizes the seller's revenue. We translate both information design
		problems into finite-dimensional, constrained optimization problems in which
		one can explicitly solve for the optimal information structure. In contrast
		to the case with one buyer \citep{roesler2017buyer}, we show that with two or more buyers, the symmetric buyer-optimal
		information structure is different from the symmetric seller-worst
		information structure. The good is always sold under the seller-worst
		information structure but not under the buyer-optimal information structure.
		Nevertheless, as the number of buyers goes to infinity, both symmetric
		information structures converge to no disclosure. We also show that in our ex ante symmetric setting, an asymmetric information structure is never seller-worst but can generate a strictly higher surplus for the buyers than the symmetric buyer-optimal information structure.}
	
		\keywords{information design; optimal auction; virtual value distribution;
		buyer-optimal information; seller-worst information.}
	
	\section{Introduction}
	
	Consider a seller who would like to sell one object to a group of buyers.
	The classical optimal auction due to \cite{Myerson1981} assumes that each
	buyer privately knows his own valuation; moreover, each valuation follows a
	distribution which is common knowledge. In this paper, we study an
	information design problem in which each buyer learns his private valuation
	independently via a signal according to which the seller runs the Myersonian
	optimal auction. The seller earns the expected highest nonnegative virtual
	value, whereas the buyers earn the expected total surplus minus the seller's
	revenue. We derive \emph{the buyer-optimal information structure} which
	maximizes the buyers' total surplus, as well as \emph{the seller-worst
		information structure} which minimizes the seller's optimal revenue.
	
	In reality the buyers may not know their own valuations for the good and
	have to assess how well the product suits their need via information sources
	such as advertisements, recommendations from some platform, or product
	descriptions. For instance, personalized advertising communicates privately
	with the buyers according to their individual characteristics such as
	gender, age, economic status, and so on. We consider contexts in which these
	personal characteristics are independently distributed so that the
	information is purely private,\footnote{%
		See, for instance, \cite{chen2022statistical} where the consumers'
		characteristics are modeled as i.i.d. random vectors.} that is, one learns
	nothing about a buyer's information or characteristics from the
	advertisement shown to another buyer. These features motivate our study of
	information design with independent private signals.
	
	Providing more information to the buyers can lead to a higher surplus but
	also a higher payment for them in an optimal auction. Hence, the effect of a
	new information source on the buyers' welfare is not a priori clear. The
	buyer-optimal information structure contributes to our understanding of this
	issue by identifying an information structure which maximizes the buyers'
	aggregate surplus. In this regard, our study builds upon the prior work on
	monopoly pricing by \cite{roesler2017buyer} but expands the scope to an
	auction setup with multiple buyers. The information designer may be a
	regulator who aims to promote consumers' welfare by requiring the seller to
	disclose certain information about the product, or else by restricting the
	seller from doing so.\footnote{\cite{terstiege2020buyer} study a
		buyer-optimal information design problem with monopoly pricing in which the
		information designer may be a regulator of the product information. For
		example, prescription drug ads must list side-effects and contra-indications
		to protect consumers. They focus on a situation where the buyer cannot
		commit to ignore any additional information released by the seller, whereas
		we follow \cite{roesler2017buyer} in setting aside the issue of the seller's
		disclosure.} The information designer may also be a data vendor who can sell
	product-related information for a fee proportional to buyers' (average/
	total) surplus and who therefore looks for an information structure which
	maximizes the buyers' surplus.
	
	The buyers' surplus is the total surplus minus the seller's revenue. Hence,
	studying the seller-worst information structure helps us understand the
	trade-off between minimizing the seller's revenue and maximizing the total
	surplus. The seller-worst information design also offers a \textquotedblleft
	minmax\textquotedblright\ upper bound for the\emph{\ revenue guarantee} of a
	mechanism regardless of the equilibrium and the information structure. Such
	a \textquotedblleft minmax\textquotedblright\ upper bound is crucial to
	establishing the strong duality results in \cite%
	{bergemann2016informationally}, \cite{du2018robust}, and \cite%
	{brooks2019optimal}. Specifically, these strong duality results show, in
	different contexts, that this \textquotedblleft minmax\textquotedblright\
	upper bound is equal to the maximal revenue guarantee achieved by a
	\textquotedblleft maxmin\textquotedblright\ mechanism. In this vein, the
	seller-worst information provides a first step toward establishing such a
	strong duality result---or lack thereof, in an independent private-value
	setting.\footnote{%
		We discuss the tightness of this upper bound in Section \ref%
		{simulationdiscussion}. In particular, \cite{Chen2022informationallyrobust}
		recently prove that when there are only two buyers, a second-price auction
		with a suitably chosen random reserve price guarantees exactly the
		seller-worst revenue that we identify in this paper over all symmetric
		independent information structures and undominated equilibria.}
	
	We assume that the seller has zero reservation value for the good and the
	buyers are ex ante symmetric. In particular, the buyers' ex post valuation
	of the good is equal to either $0$ or $1$ with an identical mean $p$. We
	assume that each signal provides an unbiased estimator about the buyer's
	valuation. Hence, by \cite{blackwell1953equivalent}, an information
	structure is feasible if and only if it consists of a profile of independent
	signal distributions, all with mean $p$.\footnote{%
		Alternatively, our analysis also applies and produces the same result if the
		information designer is allowed to choose any signal distributions with a
		given mean $p$ and support $\left[ 0,1\right] $. In a similar vein, \cite%
		{carrasco2018optimal} study a revenue-maximizing seller with a single buyer
		where the seller has only partial information about the buyer's valuation
		distribution, e.g., its first and second moments.} We begin by solving an
	optimal symmetric signal distribution in both information design problems.\footnote{%
		Focusing on the seller-optimal information, \cite{bergemann2007information}
		considers a Myersonian optimal auction setting where the seller can decide
		to whom to sell at what price and the accuracy by which bidders learn their
		(not necessarily binary) valuation through independent private signals. In
		our setting, the seller-optimal information is full revelation (i.e., the
		prior) against which the seller posts the price 1 and extracts full surplus.
		This is a special case of the full-surplus-extraction result due to \cite%
		{krahmer2020information}.} As is true for deriving symmetric equilibria in
	symmetric auctions, it is also more tractable to derive a symmetric
	information structure in our ex ante symmetric information design problems.\footnote{%
		Even with symmetric information structures, we still allow for irregular
		signal distributions for which the optimal auction need not be a
		second-price auction with a reserve price. Hence, our information design
		problem is not equivalent to the corresponding information design problem
		where the seller is committed to adopting a second-price auction with
		reserve; see Section \ref{rsn2} for more discussions and Appendix \ref%
		{irexample} for an illustrative example.}
	
	We show that as long as there are two or more buyers, the buyer-optimal
	information structure need not be equal to the seller-worst information
	structure. This result sharply contrasts with the results of \cite%
	{roesler2017buyer}, which show that the two information structures are
	equivalent when there is only one buyer. More precisely, when there are two
	or more buyers, we pin down a cutoff $p_{s}$ which is decreasing with the
	number of buyers. If $p$ is no more than $p_{s}$, then the (symmetric)
	seller-worst information structure for each buyer remains the same as in the
	one-buyer case. If $p$ is higher than $p_{s}$, then the seller-worst signal
	distribution remains equal to a truncated Pareto distribution but with
	virtual value $k_{s}>0$ for any signal less than 1, and with virtual value $%
	1 $ when the signal is equal to $1$. Since all virtual values are
	nonnegative at any signal profile, the good is always sold. Indeed, raising
	the low virtual value from $0$ to $k_{s}$ has two countervailing effects.
	First, the seller's revenue increases as the low virtual value increases.
	Second, to satisfy the mean constraint, increasing the low virtual value
	must be compensated for by decreasing the probability of having the high
	virtual value 1.  Through a cost-benefit analysis, we show that the  low virtual value $k_s$ increases  when  either the prior mean or the number of buyers grows.
	
	The (symmetric) buyer-optimal information structure differs from the
	(symmetric) seller-worst information structure in several ways. First, we
	pin down two cutoffs $r_{b}$ and $p_{b}$ which are also decreasing in the
	number of buyers. If $p$ lies between $r_{b}$ and $p_{b}$, then the
	buyer-optimal information structure remains the same as in the one-buyer
	case. When $p$ is less than $r_{b}$, the buyer-optimal signal distribution
	puts a positive mass on signal $0$ and the remaining mass on a truncated
	Pareto distribution with virtual values $0$ and $1$. Since signal $0$
	induces a negative virtual value, with positive probability the seller
	withholds the good. The reason for this is that, to maximize the buyers'
	surplus, the information designer needs to consider not only the seller's
	revenue but also the total surplus. Moreover, the total surplus is convex in
	the buyers' signal and hence favors dispersion. When $p$ is above $p_{b}$,
	the buyer-optimal signal distribution is a truncated Pareto distribution.
	However, also due to the convexity of the total surplus, the distribution
	induces a low virtual value $k_{b}<k_{s}$ for any signal less than 1 and a
	high virtual value $1$ otherwise.
	
	Our results highlight two kinds of allocative inefficiency with multiple
	buyers. First, in our setting the good will be allocated to a buyer with the
	highest interim (virtual) value, who need not have the highest ex post value 
	$1$. Second, as we argue above, the buyer-optimal signal distribution may
	put a positive mass on signal $0$ against which the seller withholds the
	good. Both sorts of inefficiency sharply contrast with the optimal
	information with one buyer under which ex post efficiency is always achieved.
	
	When the number of buyers goes to infinity, the cutoffs $p_{s}$, $r_{b}$,
	and $p_{b}$ all tend to zero; both $k_{b}$ and $k_{s}$ monotonically
	increase to $p$; and the corresponding probabilities assigned to virtual
	value $1$ monotonically decrease. As a result, both the buyer-optimal and
	the seller-worst signal distributions converge to a degenerate distribution
	which puts all mass on the prior mean $p$. In particular, learning \emph{no}
	information is asymptotically buyer-optimal with a large number of buyers.
	This result offers an extreme form of the message due to \cite%
	{roesler2017buyer} that a buyer does not want to learn his valuation
	perfectly in a monopoly setting. This result also sharply contrasts with the
	result of \cite{yang2019buyer} that when the buyers engage in strategic
	information acquisition, they will acquire (in a symmetric equilibrium)
	asymptotically perfect information about their values.
	
	We summarize the results on optimal symmetric information in Table \ref%
	{table1}.
	
	\begin{table}[ptb]
		\caption{Main results and comparisons}
		\label{table1}%
		\begin{tabular}{|c|lc|ll|l|c|}
			\hline
			\begin{tabular}{@{}c}
				Number \\ 
				of buyers%
			\end{tabular}
			& \multicolumn{5}{c|}{Optimal distribution with different prior mean $p$} & 
			\begin{tabular}{@{}c}
				Always \\ 
				sell?%
			\end{tabular}
			\\ \hline
			\multicolumn{1}{|c|}{\multirow{2}{*}{$n=1$}} & Seller-worst & 
			\multicolumn{4}{|l|}{\multirow{2}{*}{ \begin{tabular}{@{}l} Virtual values 0
						and 1 for any prior mean $p$, \\which is consistent with
						\cite{roesler2017buyer} \end{tabular} }} & \multicolumn{1}{|c|}{%
				\multirow{2}{*}{Yes}} \\ \cline{2-2}
			\multicolumn{1}{|l|}{} & Buyer-optimal & \multicolumn{4}{|l|}{} & 
			\multicolumn{1}{|l|}{} \\ \hline
			\multicolumn{1}{|c|}{\multirow{4}{*}{$n\geq2$}} & %
			\multirow{2}{*}{Seller-worst} & \multicolumn{2}{|c|}{$0<p\leq p_{s}$} & 
			\multicolumn{2}{c|}{$p_{s}<p<1$} & Yes \\ \cline{3-6}
			\multicolumn{1}{|l|}{} &  & \multicolumn{2}{|l|}{Virtual values 0 and 1} & 
			\multicolumn{2}{l|}{Virtual values $k_{s}$ and 1} &  \\ \cline{2-7}
			\multicolumn{1}{|l|}{} & \multirow{2}{*}{Buyer-optimal} & 
			\multicolumn{1}{|c|}{$0<p<r_{b}$} & \multicolumn{1}{r}{$r_{b}\leq p\leq
				p_{b} $} &  & \multicolumn{1}{|c|}{$p_{b}<p< 1$} & No \\ \cline{3-6}
			\multicolumn{1}{|l|}{} &  & \multicolumn{1}{|l|}{%
				\begin{tabular}{@{}l}
					A positive mass on $0$ as \\ 
					well as virtual values $0$ and $1$%
				\end{tabular}%
			} & \multicolumn{2}{l|}{%
				\begin{tabular}{@{}l}
					Virtual values \\ 
					0 and 1%
				\end{tabular}%
			} & 
			\begin{tabular}{@{}l}
				Virtual values \\ 
				$k_{b}$ and 1%
			\end{tabular}
			&  \\ \hline
			\multicolumn{1}{|c|}{\multirow{2}{*}{$n\rightarrow\infty$}} & Seller-worst & 
			\multicolumn{4}{|l|}{\multirow{2}{*}{Degenerate distribution (revealing
					nothing) for any mean}} & \multicolumn{1}{|c|}{\multirow{2}{*}{Yes}} \\ 
			\cline{2-2}
			\multicolumn{1}{|l|}{} & Buyer-optimal & \multicolumn{4}{|l|}{} & 
			\multicolumn{1}{|l|}{} \\ \hline
		\end{tabular}%
	\end{table}
	
	We also investigate asymmetric information structure in both information
	design problems. For the seller-worst problem, we show that the optimal
	symmetric information structure remains the unique optimal solution, even if
	the information designer can choose different signal distributions for
	different buyers. Intuitively, averaging a profile of asymmetric virtual
	value distributions grants the seller fewer option values in selecting the
	highest virtual values and thereby earns him less revenue. This means that
	restricting attention to symmetric signal distributions entails no loss in
	minimizing the seller's revenue. However, averaging a profile of asymmetric
	signal distributions may entail loss in the expected total surplus. In
	particular, we explore one case with two buyers and another case with a
	large number of buyers, in both of which an asymmetric information structure
	generates a strictly higher aggregate surplus for the buyers than the
	optimal symmetric information structure does. Our result shows that
	asymmetric information structure can emerge endogenously as a choice of a
	buyer-optimal information designer, even in our ex ante symmetric setting.
	
	Finally, we explain the novelty of our argument. For us to solve our
	information design problems, it is crucial that we transform the control
	variables. More precisely, instead of working with signal/interim value
	distributions, we work with the interim \emph{virtual} value distribution.
	After the change of variable, the information design problem becomes an
	isoperimetric problem in optimal control theory. The Euler-Lagrange equation
	can then be invoked in this problem to argue that the virtual value
	distribution function is a step function with at most two steps. This
	effectively reduces the infinite-dimensional information design problem to a
	tractable finite-dimensional constrained optimization problem. Moreover,
	with the few control variables, such as the two-step virtual value
	distribution functions, we are able to understand their trade-off, as we
	have explained above.
	
	The rest of this paper proceeds as follows. Section \ref{model} describes
	our model and formulates the information design problem. Section \ref{main
		results} presents our main results. Section \ref{Outline of the solution}
	demonstrates how we simplify the control variables of the information design
	problems. Section \ref{Asymmetric information structures} studies the
	information design problem with asymmetric signal distributions. Section \ref%
	{discussion} discusses issues with asymmetric or continuous priors and
	studies the tightness of our seller-worst upper bound for some candidate
	\textquotedblleft maxmin\textquotedblright\ mechanisms. Section \ref%
	{conclusion} concludes. Appendix \ref{proofs} contains all proofs which are
	omitted from the main text.
	
	\section{Model}
	
	\label{model}
	
	There is a seller who has one object to sell to a finite set $N=\left\{
	1,2,...,n\right\} $ of potential buyers. The seller has no value for the
	object. Each buyer's prior valuation, $v_{i}$, is identically and
	independently drawn from a Bernoulli distribution $H$ on $\{0,1\}$. Let $p= 
	\mathbb{E}[v_{i}]=\Pr \left( v_{i}=1\right) $ denote the mean of $H$. To
	rule out trivial cases, we assume that $p\in (0,1)$. Suppose that (i) each
	buyer can observe an independently and identically distributed signal $x_{i}$
	about $v_{i}$ from an information designer, and (ii) the joint distribution
	of $v_{i}$ and $x_{i}$ is common knowledge to the seller as well as among
	the buyers. That is, the information designer commits to a signal structure
	for each agent, and each agent observes that commitment, not just for his
	own signal structure but also for the other agents' signal structures.
	
	\subsection{Information structure}
	
	Following \cite{roesler2017buyer}, we say a signal distribution is \emph{\
		feasible} if each signal of a buyer provides him with an unbiased estimate
	about his valuation. Then, according to the characterization of \cite%
	{blackwell1953equivalent}, the prior valuation distribution $H$ is a
	mean-preserving spread of any feasible distribution of signals. Since $H$ is
	a Bernoulli distribution on $\{0,1\}$, the mean-preserving spread condition
	can be reduced to a mean constraint. Hence, a feasible symmetric information
	structure is a signal distribution $G$ with $G\in \mathcal{G}_{H}$, where 
	\begin{equation*}
		\mathcal{G}_{H}=\left\{ G:[0,1]\rightarrow \lbrack 0,1]\left\vert
		\int_{0}^{1}x\right.\mathrm{d}G(x)=p\text{ and }G\text{ is a CDF}\right\} .
	\end{equation*}
	
	\subsection{Information design problem}
	
	\label{virtual_def}
	
	Given a feasible signal distribution $G$, a revenue-maximizing mechanism is
	an optimal auction due to \cite{Myerson1981}. In an optimal auction, the
	seller's revenue is equal to the expected highest, nonnegative, ironed
	virtual value $\max_{i}\{\hat{\varphi}(x_{i}|G),0\}$. Formally, for any CDF $%
	G$ with $supp(G)\subset \lbrack 0,1]$, let $a=\inf \{x\in \lbrack
	0,1]|G(x)>0\}$, and define 
	\begin{equation*}
		\Psi (x|G)= 
		\begin{cases}
			0, & \mbox{if }x\in \lbrack 0,a); \\ 
			a-x(1-G(x)), & \mbox{if }x\in \lbrack a,1]\text{.}%
		\end{cases}%
	\end{equation*}
	Let $\Phi (x|G)$ be the convexification of $\Psi $ under the $G$-quantile
	space.\footnote{%
		That is, $\Phi (x|G)$ is the largest convex function of $G(x)$ that is
		everywhere weakly below $\Psi (x|G)$. In Appendix \ref{formal_virtual_value}
		, we also give a formal and detailed instruction about the ironed virtual
		value from \cite{monteiro2010optimal} and also \cite{yang2019buyer}.} By
	definition, for any $x\in \lbrack 0,1]$, the (\emph{ironed}) \emph{virtual
		valuation} at $x$, denoted as $\hat{\varphi}(x|G)$, is an infimum of the $G$
	-sub-gradients of $\Phi (x|G)$. If $\Phi (x|G)=\Psi (x|G)$ for any $x$, then
	we say that $G$ is a \emph{\ regular distribution}. We denote by $\hat{
		\varphi}$ an ironed virtual value and use $\varphi $ to denote a virtual
	value induced from a regular distribution. If $G$ is regular, then the
	virtual value has the well-known expression equal to 
	\begin{equation*}
		\varphi (x|G)=x-\frac{1-G\left( x\right) }{G^{\prime }\left( x\right) }\text{
			.}
	\end{equation*}
	We allow the information designer to choose any feasible distribution
	function $G$, whether it is regular or irregular and whether or not it
	admits a density function.
	
	Let $M(x|G)=\{i\in N|\hat{\varphi}(x_{i}|G)\geq \max_{j}\{\hat{\varphi}
	(x_{j}|G),0\}\}$ be the set of buyers who have the largest nonnegative
	virtual value for a given signal realization $x$; and let $M^{\prime
	}(x|G)=\{i\in N|x_{i}\geq {x_{j}},\forall j\in M(x|G)\}$ be the set of
	buyers who have not only the highest nonnegative virtual value but also the
	largest signal among those with the highest virtual value for a given signal
	realization $x$. Define an allocation rule as follows: 
	\begin{equation*}
		q_{i}(x_{i},x_{-i}|G)= 
		\begin{cases}
			\frac{1}{|M^{\prime }(x|G)|}, & \mbox{if }i\in M^{\prime }(x|G); \\ 
			0, & \mbox{if }i\not\in M^{\prime }(x|G).%
		\end{cases}%
	\end{equation*}
	That is, $q_{i}(x_{i},x_{-i}|G)$ is an optimal auction allocation rule which
	breaks a tie in favor of surplus maximization.
	
	We study the following information design problem parameterized by $\alpha
	=0 $ or $1$: 
	\begin{align}
		\max_{G}\int_{[0,1]^{n}}& \sum_{i=1}^{n}\left( \alpha x_{i}-\hat{\varphi}
		(x_{i}|G)\right) q_{i}(x_{i},x_{-i}|G)\prod_{i=1}^{n}\left( \mathrm{d}
		G(x_{i})\right)  \label{info} \\
		\text{s.t. }& \int_{0}^{1}\left( 1-G(x)\right) \mathrm{d}x=p.  \label{mean}
	\end{align}
	The term $\sum_{i=1}^{n}x_{i}q_{i}(x_{i},x_{-i}|G)$ is the total surplus
	generated under the optimal auction allocation rule $q_{i}$. Moreover, the
	term $\sum_{i=1}^{n}\hat{\varphi}(x_{i}|G)q_{i}(x_{i},x_{-i}|G)$ is the
	seller's revenue under the allocation rule $q_{i}$, namely, the expected
	highest nonnegative virtual value. Hence, if $\alpha =0$, the information
	designer aims to minimize the seller's revenue, and this corresponds to the
	seller-worst information design problem. If $\alpha =1$, the information
	designer aims to maximize the buyers' surplus, and this corresponds to the
	buyer-optimal information design problem. Hereafter, we call (\ref{mean})
	the \emph{mean constraint}.
	
	Endow the space of Borel probability measures on $\left[ 0,1\right] $ with
	the weak$^{\ast }$ topology. We say a signal distribution $G$ induces \emph{nonnegative virtual values except at 0} if the virtual values induced by $%
	G $ are nonnegative almost everywhere on $\left( 0,1\right] $. We denote by $%
	\mathcal{G}_{H}^{+}\subset \mathcal{G}_{H}$ the feasible signal distribution
	with nonnegative virtual values except at 0. Also we say a signal
	distribution $G$ is \emph{regular except at 0} if $G$ is regular almost
	everywhere on $\left( 0,1 \right] $. We will argue later in Lemma \ref%
	{almost_nonneg} and in Lemma \ref{regularity}, the optimal signal
	distribution must induce nonnegative virtual values except at 0 and be
	regular except at 0. We now present two lemmas. The first lemma documents
	the existence of the solution to the problem in \eqref{info}. The second
	lemma highlights the additional trade-off which a buyer-optimal information
	designer is facing, on top of minimizing the seller's revenue.
	
	\begin{lemma}
		\label{existence}For the problem in \eqref{info}, an optimal solution exists.
	\end{lemma}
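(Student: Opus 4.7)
My plan is to invoke the Weierstrass extreme value theorem, which requires establishing (i) weak-$\ast$ compactness of the feasible set $\mathcal{G}_H$, and (ii) upper semi-continuity of the objective functional on this set.

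For (i), I would endow the space of Borel probability measures on $[0,1]$ with the weak-$\ast$ topology. Since every feasible distribution is supported on the compact interval $[0,1]$, tightness is automatic and the full space of probability measures on $[0,1]$ is weak-$\ast$ compact by Prokhorov's theorem. The mean constraint \eqref{mean} is equivalent to $\int_0^1 x\,\mathrm{d} G(x)=p$, and the functional $G\mapsto \int x\,\mathrm{d} G$ is weak-$\ast$ continuous since $x$ is bounded and continuous on $[0,1]$. Hence $\mathcal{G}_H$ is a weak-$\ast$ closed subset of a weak-$\ast$ compact set, so it is weak-$\ast$ compact.

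For (ii), I would decompose the objective as $\alpha\cdot TS(G)-R(G)$, where $R(G)$ is the expected Myerson revenue and $TS(G)=\int \sum_i x_i q_i(x|G)\prod \mathrm{d} G(x_j)$ is the expected total surplus induced by Myerson's allocation. I would use the dual representation
\[
R(G) \;=\; \sup_{M\in\mathcal{M}^{IC,IR}} \int_{[0,1]^n}\sum_i t_i^M(x)\,\prod_{j=1}^n \mathrm{d} G(x_j),
\]
where $\mathcal{M}^{IC,IR}$ is the $G$-independent class of bounded incentive-compatible, individually rational direct mechanisms. For each $M$ with bounded continuous payments, $G\mapsto\int \sum t_i^M\,\mathrm{d} G^n$ is weak-$\ast$ continuous (since $G_n\to G$ weak-$\ast$ implies $G_n^n\to G^n$ weak-$\ast$ for product measures). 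Because Myerson's optimal mechanism can be approximated, pointwise in $G$, by continuous mechanisms with arbitrarily small revenue loss, $R(\cdot)$ is a pointwise supremum of a family of weak-$\ast$ continuous functionals and therefore weak-$\ast$ lower semi-continuous. For $\alpha=0$, this immediately yields USC of $-R(\cdot)$ and Weierstrass delivers the result.

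The $\alpha=1$ case is more delicate because $TS(G)$ also depends on $G$ through Myerson's allocation, which can vary discontinuously in $G$. I would handle it via the envelope representation $BS(G) = n\int_0^1 (1-G(s))\,Q^{\ast}(s|G)\,\mathrm{d} s$ and show that jumps of $TS(G)$ and $R(G)$ cancel at those signal values where the allocation rule changes, so that $BS(G)=TS(G)-R(G)$ remains upper semi-continuous; an equivalent route is to pass to the reformulation in the virtual-value space developed later in the paper, on whose compact domain the transformed objective becomes manifestly continuous. The main obstacle is precisely this $\alpha=1$ case, whereas the $\alpha=0$ case follows directly once the sup-of-continuous representation of $R$ is in place.
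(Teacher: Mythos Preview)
Your treatment of compactness and of the $\alpha=0$ case is essentially the paper's argument: the paper simply cites \cite{monteiro2015note} for lower semicontinuity of $R(G)$, whereas you sketch why it holds via the sup-of-continuous-functionals representation. That part is fine.

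For $\alpha=1$, however, there is a genuine gap. Your two suggested routes both fall short. The ``jumps of $TS(G)$ and $R(G)$ cancel'' claim is not established, and in fact the buyer surplus $BS(G)=TS(G)-R(G)$ is \emph{not} known to be weak-$\ast$ upper semicontinuous on all of $\mathcal{G}_H$: the allocation $q_i(\cdot|G)$ varies with $G$ through the sign of the ironed virtual value, and there is no a priori reason the resulting discontinuities in $TS$ and $R$ should offset each other. Your alternative---passing to the virtual-value reformulation of Lemma~\ref{change_variables}---is circular: that change of variables is only valid once one knows the optimum is almost regular with almost nonnegative virtual values (Lemmas~\ref{almost_nonneg} and~\ref{regularity}), and those lemmas are stated as properties of an \emph{optimal} solution, presupposing existence.

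The paper's device for $\alpha=1$ is different and is the idea you are missing. It introduces the relaxed objective
\[
V_2(G)\;=\;\int_{[0,1]^n}\max\{x_1,\dots,x_n\}\,\prod_i \mathrm{d}G(x_i)\;-\;R(G),
\]
replacing the allocation-dependent surplus $\sum_i x_i q_i$ by the always-allocate surplus $\max_i x_i$. Since $\max$ is continuous and $R$ is lower semicontinuous, $V_2$ is upper semicontinuous and attains a maximum $G^\star$ on $\mathcal{G}_H$. A direct modification argument (the same construction later used in Lemma~\ref{almost_nonneg}) shows $G^\star\in\mathcal{G}_H^+$. Finally, $V_1(G)\le V_2(G)$ for all $G$, with equality on $\mathcal{G}_H^+$; hence $G^\star$ also maximizes the original objective $V_1$. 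This relaxation step is what makes the $\alpha=1$ existence argument go through without having to prove upper semicontinuity of $V_1$ directly.
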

	
	\begin{proof}
		We first establish the seller-worst case. By Theorem 2 of \cite%
		{monteiro2015note}, the expected revenue is a lower semicontinuous function 
		in $G$. Hence, the objective function of the problem in \eqref{info} is an 
		upper semicontinuous function in $G$. Since $\mathcal{G}_{H}$ is a closed 
		subset of the set of Borel probability measures on $[0,1]$, $\mathcal{G}_{H}$
		is compact. Thus, by the extreme value theorem, an optimal solution exists.
		
		For the buyer-optimal problem, proving the existence of an optimal solution 
		is more involved; we provide a formal proof in Appendix \ref{proofexist}. 
	\end{proof}
	
	\begin{lemma}
		\label{convex}For the problem in \eqref{info} and signal distribution $\hat{G} \in \mathcal{G}_{H}^{+}$, if $\hat{G}$ is a mean-preserving spread of
		signal distribution $G$, then $\hat{G}$ will generate more total surplus
		than $G$; and if $\hat{G}$ is a strict mean-preserving spread of signal
		distribution $G$, then $\hat{G}$ will strictly generate more total surplus
		than $G$.\footnote{%
			We say that $\hat{G}$ is a mean-preserving spread of signal distribution $G$
			if $\int_{0}^{x} (\hat{G}(t)-G(t))\mathrm{d}t\geq 0$ for all $x $ with
			equality at $x=1$; we also say that $\hat{G}$ is a strict mean-preserving
			spread if $\int_{0}^{x} (\hat{G}(t)-G(t))\mathrm{d}t\geq 0$ for all $x $
			with strict inequality at some $x$ with $G$-positive probability.}
	\end{lemma}
	
	\begin{proof}
		Observe that $\sum_{i=1}^{n}x_{i}q_{i}(x_{i},x_{-i}\vert G)=\max 
		\{x_{1},\cdots ,x_{n}\}$ if the good is allocated. Therefore,  
		\begin{equation*}
			\int_{\lbrack 0,1]^{n}}\sum_{i=1}^{n}x_{i}q_{i}(x_{i},x_{-i}\vert
			G)\prod_{i=1}^{n}\left(\mathrm{d}G(x_{i})\right) \leq \int_{0}^{1}x \mathrm{%
				d }G^{n}\leq \int_{0}^{1}x\mathrm{d}\hat{G}^{n}=\int_{[0,1]^{n}}
			\sum_{i=1}^{n}x_{i}q_{i}(x_{i},x_{-i}\vert G) \prod_{i=1}^{n}\left(\mathrm{d}
			\hat{G}(x_{i})\right) \text{.}
		\end{equation*}
		The first inequality follows because the good may not be allocated under $
		q_{i}$. The second inequality follows because $\hat{G}$ is a mean-preserving
		spread of $G$, and because $\int_{0}^{1}x \mathrm{d}G^{n}=\int_0^1\max 
		\{x_{1},\cdots ,x_{n}\}\mathrm{d}G$ has an integrand that is convex in $x.$ 
		Moreover, the second inequality is strict if $\hat{G}$ is a strict 
		mean-preserving spread of $G$. The equality follows because $\hat{G}\in  
		\mathcal{G}_{H}^{+}$, and the good is always sold under $q_{i}$ except when $
		x_{i}=0$ for every $i$. (In this case, the total surplus remains the same, 
		whether the good is sold or not.) 
	\end{proof}
	
	\section{Main results}
	
	\label{main results}
	
	In this section, we present our results on both information design problems.
	We will detail their proofs in Section \ref{Outline of the solution}.
	
	The unique symmetric seller-worst information structure is a truncated
	Pareto distribution. The distribution is regular and induces only two
	virtual values, $k_{s}$ and $1$, on the support $\left[ x_{s},1\right],$
	where $\left( k_{s},x_{s}\right) $ depends on the prior mean $p$ and the
	number of buyers $n$ (but we will omit the dependence to simplify the
	notations). The following result summarizes the seller-worst information
	structure:
	
	\begin{theorem}
		\label{result_seller}For each $p$ and $n$, there exists a tuple $\left(
		k_{s},x_{s}\right) \in \left[ 0,1\right] ^{2}$ such that the unique
		symmetric seller-worst information structure is the following truncated
		Pareto distribution: 
		\begin{equation}
			G_{s}(x)= 
			\begin{cases}
				1-\frac{x_{s}-k_{s}}{x-k_{s}}, & \text{ if }x\in \left[ x_{s},1\right); \\ 
				1, & \text{ if }x=1.%
			\end{cases}
			\label{s1}
		\end{equation}
		Moreover, there exists a threshold $p_{s}$ such that (i) $k_s=0$ if $p\in
		\left( 0,p_{s}\right]$; and (ii) $k_s>0$ if $p\in \left( p_{s},1\right).$
	\end{theorem}
	
	Theorem \ref{result_seller} states that the symmetric seller-worst signal
	distribution is a truncated Pareto distribution which induces virtual value $%
	x-\frac{1-G_{s}(x)}{g_{s}(x)}=x-\frac{x_{s}/(x-k_{s})}{x_{s}/(x-k_{s})^{2}}%
	=k_{s}$ for $x\in \left[ x_{s},1\right) $, and virtual value $1$ for $x=1$.
	Let $\theta _{s}\equiv 1-\frac{x_{s}-k_{s}}{1-k_{s}}$ be the probability of $%
	G_{s}$ inducing the low virtual value $k_{s}$. Since the seller-worst
	information induces only nonnegative values, the good is always sold.
	Moreover, since the seller-worst information is regular, the seller's
	optimal revenue can be achieved by a second-price auction with no reserve;
	see Proposition 5.2 of \cite{krishna2009auction}. The seller earns the
	expected virtual surplus equal to: 
	\begin{equation}
		k_{s}\times \theta _{s}^{n}+1\times \left( 1-\theta _{s}^{n}\right) \text{.}
		\label{obj-seller}
	\end{equation}
	
	Since $G_{s}$ is the seller-worst information structure, the tuple $\left(
	k_{s},x_{s}\right) $ or, equivalently, $(k_{s},\theta _{s})$ is chosen to
	minimizes the seller's expected virtual surplus, subject to the mean
	constraint 
	\begin{equation}
		(1-k_{s})(1-\theta _{s})(1-\log (1-\theta _{s}))+k_{s}=p\text{.}
		\label{mc_ks}
	\end{equation}%
	The Lagrangian is linear in the low virtual value with the marginal effect
	being determined by the following term:\footnote{%
		To be clear, the Lagrangian is defined to maximize the additive inverse of
		the objective in (\ref{obj-seller}).} 
	\begin{equation}
		J_{s}(\theta _{s})\equiv \underbrace{\theta _{s}\log (1-\theta _{s})}_{\text{
				cost}(<0)}+\underbrace{n(\theta _{s}+(1-\theta _{s})\log (1-\theta _{s}))}_{%
			\text{ benefit}(>0)}\text{.}  \label{foc_ks}
	\end{equation}%
	Intuitively, raising the low virtual value $k_{s}$ has two countervailing
	effects on a revenue-minimizing information designer. First, by increasing
	the lower virtual value, the seller's revenue increases, which translates
	into a cost in proportion to the first term in \eqref{foc_ks}. Second, to
	obey the mean constraint, the probability of the high virtual value 1 must
	be reduced, which results in a benefit in proportion to the second term in %
	\eqref{foc_ks}. Observe that the cost and benefit only depend on the number
	of buyers $n$ and probability $\theta _{s}$.
	
	As the Lagrangian is linear in the low virtual value, an interior solution
	occurs only when the benefit exactly offsets the cost, namely $J_{s}(\theta
	_{s})=0$, and then $k_{s}$ is pinned down by the mean constraint. With no
	more than two buyers, we can verify that the cost dominates the benefits
	regardless of $\theta _{s}$; hence, we must have a corner solution $k_{s}=0$
	and set $p_{s}=1$. Consequently, the seller-worst information structure is
	the same for $n=1$ and $n=2$. If there are more than three buyers, by
	setting $k_{s}=0$ in Equation \eqref{mc_ks}, we obtain the threshold, 
	\begin{equation}
		p_{s}=(1-\theta _{s})(1-\log (1-\theta _{s})),  \label{cutoffps}
	\end{equation}%
	where $\theta _{s}$ is obtained from $J_{s}(\theta _{s})=0$. Again, for any
	prior mean below the threshold, the cost dominates the benefit and thereby
	we have a corner solution. We provide details in Appendix \ref{fsw}.
	
	The cost-benefit analysis also tells us how the seller-worst information
	varies with the prior mean or the number of buyers. The cost-benefit formula 
	$J_{s}(\theta _{s})$ only depends on $n$, and so does the probability $%
	\theta _{s}$ of the low virtual value. Hence, for any fixed number of
	buyers, $k_{s}$ must go up in proportion with $p$ to obey the mean
	constraint \eqref{mc_ks}. If $n$ goes up, the benefit in $J_{s}(\theta _{s})$
	increases; hence, the probability of the low virtual value should also be
	increased to rebalance the cost and the benefit. Consequently, $k_{s}$
	should be raised to obey the mean constraint \eqref{mc_ks}. In summary, the
	low virtual value $k_{s}$ is increasing in both $p$ and $n$.
	
	We now turn to the buyer-optimal information. The unique buyer-optimal
	information structure is also a truncated Pareto distribution; however, it
	may place some mass $\theta _{0}$ at signal $x=0$. The distribution is
	regular except at $x=0$ and induces only two virtual values $k_{b}$ and $1$
	on the support $[x_{b},1]$, where $\left( \theta _{0},k_{b},x_{b}\right) $
	depends on $p$ and $n$. The following result summarizes the buyer-optimal
	information structure:
	
	\begin{theorem}
		\label{result_buyer}For each $p$ and $n$, there exists a tuple $\left(
		\theta _{0},k_{b},x_{b}\right)\in\left[0,1\right]^3 $ such that the unique
		symmetric buyer-optimal information structure is the following truncated
		Pareto distribution (except at $x=0$): 
		\begin{equation}
			G_{b}(x)= 
			\begin{cases}
				\theta _{0}, & \text{ if }x\in \lbrack 0,x_{b}); \\ 
				1-\frac{\left( x_{b}-k_{b}\right) (1-\theta _{0})}{x-k_{b}}, & \text{ if }
				x\in \lbrack x_{b},1); \\ 
				1, & \text{ if }x=1\text{.}%
			\end{cases}
			\label{b1}
		\end{equation}
		Moreover, there exist two thresholds $r_{b}$ and $p_{b}$ such that (i) $%
		k_{b}=0$ and $\theta _{0}>0$ if $p\in \left(0,r_{b}\right)$; (ii) $k_{b}=0$
		and $\theta _{0}=0$ if $p\in\left[r_{b},p_{b}\right]$; and (iii) $k_{b}>0$
		and $\theta _{0}=0$ if $p\in \left( p_{b},1\right)$.
	\end{theorem}
	
	The buyer-optimal signal distribution places mass $\theta _{0}$ at $x=0$ and
	then becomes a truncated Pareto distribution which induces virtual value $%
	k_{b}$ for $x\in \left[ x_{b},1\right) $, and virtual value $1$ for $x=1$.
	Under the buyer-optimal information, the seller's optimal revenue can be
	achieved by a second-price auction with a positive reserve price $x_{b}>0$;
	see Proposition 5.2 of \cite{krishna2009auction}.
	
	To maximize the buyers' surplus, the information designer must consider not
	only the seller's revenue but also the expected total surplus. As a result,
	the buyer-optimal information structure differs from the seller-worst
	information structure in several ways. First, when the prior mean $p<r_{b}$,
	the buyer-optimal signal distribution puts a mass $\theta _{0}$ on signal $0;$  that is, with probability $\theta _{0}^{n}$ the good is not sold. Second,
	when the prior mean $p>p_{b}$, the low virtual value becomes positive; this
	reflects the same intuition as that of the seller-worst information
	structure when $p>p_{s}$.
	
	Given $G_{b}$, we can compute the buyers' total surplus: 
	\begin{equation*}
		n\left( 1-k_{b}\right) \left( 1-\theta _{b}\right) \left( \sum_{i=1}^{n-1}%
		\frac{-\theta _{b}^{i}}{i}-\log \left( 1-\theta _{b}\right) +\left(
		\sum_{i=1}^{n-1}\frac{\theta _{0}^{i}}{i}+\log (1-\theta _{0})\right)
		\right) ,
	\end{equation*}%
	where $\theta _{b}\equiv 1-\frac{\left( 1-\theta _{0}\right) \left(
		x_{b}-k_{b}\right) }{1-k_{b}}$ is the mass $\theta _{0}$ plus the
	probability of $G_{b}$ inducing virtual value $k_{b}$. If $\theta _{0}=0$,
	then the trade-off between $\theta _{b}$ and $k_{b}$ can be analyzed
	similarly as that of $\theta _{s}$ and $k_{s}$ in the seller-worst
	information structure. In particular, an interior solution $(k_{b},\theta
	_{b})$ is jointly determined by the mean constraint 
	\begin{equation}
		(1-k_{b})(1-\theta _{b})\left( 1-\log (1-\theta _{b})\right) +k_{b}=p,
		\label{mc_kb}
	\end{equation}%
	and the cost-benefit equation 
	\begin{equation}
		\underbrace{\theta _{b}\log (1-\theta _{b})}_{\text{cost}(<0)}+\underbrace{%
			\theta _{b}^{n-1}\left( \theta _{b}+(1-\theta _{b})\log (1-\theta
			_{b})\right) +\sum_{i=1}^{n-1}\frac{\theta _{b}^{i+1}}{i}}_{\text{benefit}%
			(>0)}=0.  \label{foc_kb}
	\end{equation}%
	Again, since Lagrangian is linear in $k_{b}$, Equation \eqref{foc_kb}
	balances the cost and benefit. With no more than two buyers, we can again
	verify that the cost dominates the benefits regardless of $\theta _{b}$;
	hence, we must have a corner solution $k_{b}=0$ and set $p_{b}=1$.\footnote{%
		In contrast to the seller-worst case, however, the buyer-optimal information
		structure for $n=2$ may differ from that of $n=1$, since the former may put
		a mass at $x=0$.} If there are three or more buyers, by setting $k_{b}=0$ in
	Equation \eqref{mc_kb}, we obtain the following threshold: 
	\begin{equation}
		p_{b}=(1-\theta _{b})(1-\log (1-\theta _{b})),  \label{cutoffpb}
	\end{equation}%
	where $\theta _{b}$ solves Equation \eqref{foc_kb}. Then, for any prior mean
	below the threshold $p_{b}$, the cost dominates the benefit and thereby we
	have a corner solution for $k_{b}$.
	
	The cost-benefit equations in \eqref{foc_ks} and \eqref{foc_kb} also reveal
	that the buyer-optimal information designer is more reluctant to raise the
	low virtual value than a seller-worst information designer. Indeed,
	controlling the marginal cost, the buyer-optimal information designer
	receives less marginal benefit from raising the low virtual value than a
	seller-worst information designer does; see Claim \ref{bsgeqbb} in Appendix \ref{fbo} for a formal
	comparison. This is because raising the low virtual value, subject to the
	mean constraint, results in a mean-preserving contraction; hence, it
	decreases the expected total surplus by Lemma \ref{convex}. It also follows
	from the benefit comparison that $p_{b}$ is larger than $p_{s}$, and for any
	given $p$, we have $k_{b}\leq k_{s}$.
	
	As in the seller-worst information design, the cost-benefit Equation %
	\eqref{foc_kb} only depends on $n$, and so does the probability $\theta
	_{b}. $ Hence, for any fixed number of buyers, $k_{b}$ must go up in
	proportion with $p$ to obey the mean constraint \eqref{mc_kb}. If $n$ grows
	up, the benefit in \eqref{foc_kb} increases; hence, the probability of the
	low virtual value should also be increased to rebalance the cost and the
	benefit. Consequently, $k_{b}$ should be raised to obey the mean constraint %
	\eqref{mc_kb}. In summary, the low virtual value $k_{b}$ is increasing in
	both $p$ and $n$.
	
	Moreover, we identify another threshold $r_{b}<p_{b}$ below which the
	buyer-optimal information designer also deviates from the unique
	seller-worst information structure. In particular, for $p<r_{b}$, the
	buyer-optimal information designer puts a positive mass $\theta _{0}$ on $%
	x=0 $ to induce a mean-preserving spread from the seller-worst information
	structure. While the mean-preserving spread generates more revenue for the
	seller, it generates even more expected total surplus to benefit the buyers.
	We provide a similar cost-benefit analysis to pin down $\left( \theta
	_{0},\theta _{b}\right) $ as well as $r_{b}$ in Appendix \ref{fbo}.\footnote{%
		We also show in Appendix \ref{fbo} that $\theta _{0}$ and $k_{b}$ cannot
		both be positive and thereby each cost-benefit tradeoff involves only two of
		the three parameters.}
	
	As the number of buyer goes to infinity, both $\theta _{s}$ (which solves $%
	J_{s}(\theta _{s})=0$), and $\theta _{b}$ (which solves Equation %
	\eqref{foc_kb}) converge to 1; hence, both $p_{s}$ in \eqref{cutoffps} and $%
	p_{b}$ in \eqref{cutoffpb} converge to zero. It then follows from Equations %
	\eqref{mc_ks} and \eqref{mc_kb} that $k_{s}$ and $k_{b}$ will also increase
	to $p$. We summarize this in Corollary \ref{equivalence}.
	
	\begin{corollary}
		\label{equivalence}As $n\rightarrow \infty $, the buyer-optimal information
		structure coincides with the seller-worst information structure in the
		limit. Both are given by the degenerate distribution which assigns
		probability one to $p$.
	\end{corollary}
	
	Corollary \ref{equivalence} implies that when $n$ is large, both the
	buyer-optimal and the seller-worst information structures are close to
	\textquotedblleft no disclosure\textquotedblright; that is, the information
	designer chooses the degenerate distribution which concentrates on $x=p$. In
	the limit, the seller charges and extracts the ex ante expectation $p$ of a
	single buyer's value and leaves no surplus to the buyers. This is consistent
	with Part (ii) of Theorem 5 in \cite{ganuza2010signal} which establishes
	that in a second-price auction with no reserve and a sufficiently large
	number of buyers, a less precise signal produces a lower revenue for the
	seller.
	
	Corollary \ref{equivalence} also contrasts with the result of \cite%
	{yang2019buyer}. Specifically, \cite{yang2019buyer} shows that when buyers
	engage in strategic information acquisition, the unique symmetric
	equilibrium information structure converges to full information, as the
	number of buyers goes to infinity. Hence, the buyers also retain zero
	surplus in the limit, as in our buyer-optimal information structure. Since
	our symmetric buyer-optimal information structure provides an upper bound of
	the buyers' surplus under the symmetric equilibrium in \cite{yang2019buyer},
	the comparison reveals that their gap vanishes as the number of buyers goes
	to infinity.\footnote{\cite{shi2012optimal} studies an optimal auction
		problem in which the buyers acquire information individually from restricted
		feasible information structures \emph{after} the auction is announced; in
		contrast, in \cite{yang2019buyer} as well as in our paper, the optimal
		auction is designed according to the information, whether by design in our
		case or from the buyers' strategic acquisition in the case of \cite%
		{yang2019buyer}.}
	
	In \cite{yang2019buyer}, the buyers' surplus goes to zero because of the
	increasing competition in information acquisition with more buyers. In our
	case, however, the limiting zero surplus is driven by the buyer-optimal
	information designer's intentional choice to increase the low virtual value $%
	k_{b}$ in order to reduce the probability of virtual value $1$. As we
	explained after presenting Theorem \ref{result_seller}, when $n$ is large,
	reducing the probability of a high virtual value becomes a dominant
	consideration for the buyer-optimal information designer. Nevertheless, we
	will show later that the buyers do retain a nonvanishing surplus even when $%
	n $ goes to infinity; as long as the information designer can commit herself
	to choosing an asymmetric information structure; see Section \ref{asyboc}.
	
	Figure \ref{simulation_bosw} provides a numerical example with $p=1/2$ and 
	$n=1,2,...,10$. We document a number of features in this example to
	illustrate our main results.
	
	\begin{itemize}
		\item The top subfigure shows:
		
		\begin{itemize}
			\item As $n$ increases, the low virtual value goes up in both the
			seller-worst and buyer-optimal cases.
			
			\item The seller-worst low virtual value $k_{s}$ is larger than the
			buyer-optimal low virtual value $k_{b}$.
			
			\item $k_{s}$ tends to $0.5$ faster than $k_{b}$.
			
			\item When $n=2$, the buyer-optimal signal distribution puts a mass of 0.06
			on $x=0$ against which the seller withholds the good.
		\end{itemize}
		
		\item The bottom-left subfigure shows:
		
		\begin{itemize}
			\item The buyers' surplus under the buyer-optimal information structure is
			strictly higher than that of the seller-worst information structure.
			
			\item The buyers' surplus grows only from $n=1$ to $n=2$ and then decreases
			with $n$.
		\end{itemize}
		
		\item The bottom-right subfigure shows:
		
		\begin{itemize}
			\item The seller's revenue under the buyer-optimal information structure is
			strictly higher than that of the seller-worst information structure.
			
			\item The seller-worst revenue never exceeds $0.5$ since the information
			designer can opt to disclose no information. In contrast, with $n\geq 4$,
			the seller's revenue under the buyer-optimal information exceeds $0.5$. 
		\end{itemize}
	\end{itemize}
	
	\begin{figure}[tbp]
		\centering
		\begin{minipage}[c]{0.4\linewidth}
			
			\begin{tikzpicture}
				\begin{axis}[
					tick label style={font=\scriptsize},
					xlabel={The number of buyers},
					ylabel={The virtual value $k$},
					ytick={0, 0.35,0.5},
					yticklabels={0,0.35, 0.5},
					xtick={0.1,0.2,0.3,0.4,0.5,0.6,0.7,0.8,0.9,1},
					xticklabels={ 1,2,3,4,5,6,7,8,9,10},
					no markers,
					line width=0.3pt,
					cycle list={{red,solid}},
					samples=200,
					smooth,
					domain=0:1.3,
					xmin=0, xmax=1,
					ymin=-0.1, ymax=0.75,
					width=6cm, height=5.5cm,
					legend cell align=left,
					legend pos=  north west,
					legend style={draw=none,fill=none,name=legend},
					]
					\addplot[sharp plot,blue,thick] coordinates {
						(0.1,0)
						(0.2,0)
						(0.3,0)
						(0.4,0)
						(0.5,0.1008)
						(0.6,0.2088)
						(0.7,0.2687)
						(0.8,0.308)
						(0.9,0.3327)
						(1,0.3564)
					};
					\node [coordinate,pin=45:{mass 0.06 on 0}]
					at (axis cs:0.2,0) {};
					\addplot[sharp plot,red,thick] coordinates {
						(0.1,0)
						(0.2,0)
						(0.3,0.2116)
						(0.4,0.4265)
						(0.5,0.4729)
						(0.6,0.4895)
						(0.7,0.496)
						(0.8,0.4982)
						(0.9,0.499)
						(1,0.4996)
					};
					
					\addplot[dashed]coordinates {
						(0,0.5)
						(1,0.5)
					};
					
					\legend{Buyer-optimal, Seller-worst};
				\end{axis}
			\end{tikzpicture}
		\end{minipage}
		\par
		\begin{minipage}[l]{0.75\linewidth}
			
			\begin{tikzpicture}
				\begin{axis}[
					tick label style={font=\scriptsize},
					xlabel={The number of buyers},
					ylabel={The buyers' surplus},
					ytick={0, 0.33, 0.5},
					yticklabels={0, 0.33,0.5},
					xtick={0.1,0.2,0.3,0.4,0.5,0.6,0.7,0.8,0.9,1},
					xticklabels={ 1,2,3,4,5,6,7,8,9,10},
					no markers,
					line width=0.3pt,
					cycle list={{red,solid}},
					samples=200,
					smooth,
					domain=0:1.3,
					xmin=0, xmax=1,
					ymin=-0.1, ymax=0.55,
					width=6cm, height=5.5cm,
					legend cell align=left,
					legend pos=  north west,
					legend style={draw=none,fill=none,name=legend},
					]
					\addplot[sharp plot,blue,thick] coordinates {
						(0.1,0.31342)
						(0.2,0.328121)
						(0.3,0.2992)
						(0.4,0.265)
						(0.5,0.2325)
						(0.6,0.2132)
						(0.7,0.2008)
						(0.8,0.1922)
						(0.9,0.1859)
						(1,0.181)
					};
					
					\addplot[sharp plot,red,thick] coordinates {
						(0.1,0.31332)
						(0.2,0.32297)
						(0.3,0.241285)
						(0.4,0.117227)
						(0.5,0.0586)
						(0.6,0.0326)
						(0.7,0.0158)
						(0.8,0.009)
						(0.9,0.0052)
						(1,0.0032)
					};
					\addplot[dashed]coordinates {
						(0,0)
						(1,0)
					};
					\addplot[dashed]coordinates {
						(0,0.33)
						(1,0.33)
					};
					
					\legend{Buyer-optimal, Seller-worst};
				\end{axis}
			\end{tikzpicture}
			\begin{tikzpicture}
				\begin{axis}[
					tick label style={font=\scriptsize},
					xlabel={The number of buyers},
					ylabel={The seller's revenue},
					ytick={0.18, 0.5,0.65},
					yticklabels={0.18, 0.5,0.65},
					xtick={0.1,0.2,0.3,0.4,0.5,0.6,0.7,0.8,0.9,1},
					xticklabels={ 1,2,3,4,5,6,7,8,9,10},
					no markers,
					line width=0.3pt,
					cycle list={{red,solid}},
					samples=200,
					smooth,
					domain=0:1.3,
					xmin=0, xmax=1,
					ymin=0.1, ymax=0.7,
					width=6cm, height=5.5cm,
					legend cell align=left,
					legend pos=  north west,
					legend style={draw=none,fill=none,name=legend},
					]
					\addplot[sharp plot,blue,thick] coordinates {
						(0.1,0.18668)
						(0.2,0.3504)
						(0.3,0.462)
						(0.4,0.5624)
						(0.5,0.6126)
						(0.6,0.625)
						(0.7,0.6336)
						(0.8,0.6389)
						(0.9,0.6459)
						(1,0.6483)
					};
					
					\addplot[sharp plot,red,thick] coordinates {
						(0.1,0.18668)
						(0.2,0.3385)
						(0.3,0.4553)
						(0.4,0.4881)
						(0.5,0.4962)
						(0.6,0.4986)
						(0.7,0.4995)
						(0.8,0.4998)
						(0.9,0.4999)
						(1,0.49978)
					};
					\addplot[dashed]coordinates {
						(0,0.5)
						(1,0.5)
					};
					
					\legend{Buyer-optimal, Seller-worst};
				\end{axis}
			\end{tikzpicture}
		\end{minipage}
		\caption{A simulation for different $n$ with $p=0.5$}
		\label{simulation_bosw}
	\end{figure}
	
	\section{Outline of the solution}
	
	\label{Outline of the solution}
	
	Here we outline the steps to solve the information design problems we
	encounter. As we mentioned in the introduction, the key idea is to reduce
	the problems into tractable, finite-dimensional constrained optimization
	problems.
	
	\begin{enumerate}
		\item We first present two preliminary lemmas to restrict the class of
		distributions of interest:
		
		\begin{enumerate}
			\item Any buyer-optimal information structure induces nonnegative\ virtual
			values except at 0, i.e., the buyer-optimal signal distribution must belong
			to $\mathcal{G}_{H}^{+}$. Moreover, any seller-worst information structure
			must induce nonnegative virtual values almost everywhere. (Lemma \ref%
			{almost_nonneg}).
			
			\item Any buyer-optimal signal distribution must be \textquotedblleft
			regular except at 0\textquotedblright\ (i.e., regular everywhere except at $%
			x=0$) and any seller-worst signal distribution must be regular (Lemma \ref%
			{regularity} ).
		\end{enumerate}
		
		\item We change the choice variable in our information design problems from
		the distribution of signals to the distribution of virtual values (Lemma \ref%
		{change_variables}).
		
		\item We show that the reformulation after the change of variable reduces
		the information design problems to tractable, finite-dimensional
		optimization problems. This enables us to derive the explicit solution to
		the information design problems, regardless of the number of buyers. In
		particular, when $n=1 $, we obtain the same solution that appears in \cite%
		{roesler2017buyer}.
	\end{enumerate}
	
	\subsection{Preliminary Lemmas}
	
	In this subsection, we present the two preliminary results, Lemmas \ref%
	{almost_nonneg} and \ref{regularity}, to restrict the class of distributions
	of interest to the information designer. Both results rely on the following
	observation established by \cite{roesler2017buyer} and \cite{yang2019buyer}.
	
	Denote the left- and right-hand limit of a function $\xi\left(\cdot\right)$
	at a signal $x$ by $\xi\left(x^{-}\right)=\lim_{\delta\uparrow0}\xi
	\left(x-\delta\right)$ and $\xi\left(x^{+}\right)=\lim_{
		\delta\downarrow0}\xi\left(x+\delta\right)$, respectively.
	
	\begin{lemma}
		\label{pre}For any distribution $G$, any $x_{0}\in \lbrack 0,1]$, and any $%
		k\in \left[\hat{\varphi}(x_{0}^{-}|G), \hat{\varphi}(x_{0}^{+}|G)\right]$, 
		\begin{equation}
			(x-k)(1-G(x))\leq (x_{0}-k)(1-G(x_{0}^{-})),\quad \forall x\in \lbrack 0,1] 
			\text{.}  \label{lemma-pareto}
		\end{equation}
	\end{lemma}
	
	\begin{proof}
		See Lemma 9 in \cite{yang2019buyer}. 
	\end{proof}
	
	We can rearrange the inequality (\ref{lemma-pareto}) to obtain 
	\begin{equation*}
		G(x)\geq 1-\frac{(x_{0}-k)(1-G(x_{0}^{-}))}{x-k}.
	\end{equation*}
	
	In particular, the right-hand side is a Pareto distribution function which
	generates a constant virtual value $k$. Hence, Lemma \ref{pre} says that the
	Pareto distribution first-order stochastically dominates any other
	distribution $G$ with $k\in \lbrack \hat{\varphi}(x_{0}^{-}|G),\hat{\varphi}
	(x_{0}^{+}|G)]$.
	
	\begin{lemma}
		\label{almost_nonneg}Any optimal signal distribution $G$ which solves the
		information design problem in (\ref{info}) must induce nonnegative virtual
		values except at 0(i.e., $G\in \mathcal{G}_{H}^{+}$); moreover, if $G$ is a
		solution to (\ref{info}) for $\alpha =0$, it must induce nonnegative virtual
		values almost everywhere on $\left[ 0,1\right] $.
	\end{lemma}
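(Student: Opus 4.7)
My plan is to prove both statements by contradiction: assume $G^*$ is optimal but has $\hat\varphi(\cdot|G^*)<0$ on a set $A$ with $G^*(A)>0$ (where $A\subseteq(0,1]$ for the general statement and $A\subseteq[0,1]$ for the seller-worst strengthening), and construct a feasible perturbation $\tilde G$ that strictly improves the objective.

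The first step would exploit the fact that signals in $A$ are revenue-inert: by construction of the optimal allocation $q_i$, a buyer with strictly negative ironed virtual value never wins, so the $G^*$-mass on $A$ contributes nothing to the seller's revenue. This suggests rearranging the mass on $A$ while preserving the overall mean. For the seller-worst problem ($\alpha=0$), I would consolidate the entire $A$-mass to a single atom at its conditional mean $\bar x = G^*(A)^{-1}\int_A x\,dG^*(x)$, so that the overall mean is automatically preserved. I would then verify two points: (i) the new atom at $\bar x$ still contributes nothing to revenue, because in the neighborhood of $\bar x$ the Myerson revenue function $\Psi(\cdot|\tilde G)$ has no slack to generate a strictly positive ironed virtual value; and (ii) for signals outside $A$, the tail probabilities $1-\tilde G(x)$ weakly shrink relative to $1-G^*(x)$, so the ironed virtual values weakly decrease. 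Together these imply that the seller's revenue strictly decreases, giving the contradiction.

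For the buyer-optimal problem ($\alpha=1$), the same logic shows that negative virtual values on $(0,1]$ can be eliminated, but here Lemma \ref{convex} plays an additional role: since total surplus is convex in signals, I would route the wasted mass to the left endpoint $x=0$ rather than to an interior atom, redistributing a compensating amount of mass in the positive-virtual-value region outward to restore the mean. The resulting $\tilde G$ is a mean-preserving spread of a variant of $G^*$, so by Lemma \ref{convex} total surplus weakly rises while, by the seller-worst argument, revenue weakly falls; both changes are strict when $A\cap(0,1]$ has positive $G^*$-measure, giving the strict improvement in the buyers' surplus. Crucially, an atom at $x=0$ induces a negative virtual value only at $\{0\}$ and does not affect the allocation on positive signals, which is exactly why the conclusion for $\alpha=1$ is only \emph{almost} nonnegative on $(0,1]$ rather than on $[0,1]$.

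The main obstacle is verifying that rearranging the mass on $A$ really weakly lowers the virtual values on $A^c$: since $\hat\varphi(\cdot|G)$ is defined as a sub-gradient of the convex envelope $\Phi(\cdot|G)$ of $\Psi(\cdot|G)$, it depends globally on $G$, so local modifications can in principle shift virtual values elsewhere. The cleanest way to manage this, which I would adopt, is to work in the reformulation of Lemma \ref{change_variables}, where the control variable becomes the (ironed) virtual value distribution itself. In that representation, negative virtual values correspond to atoms that consume the mean budget without generating revenue, so removing them and reallocating the freed budget to non-negative virtual values is transparently a strict improvement in both the seller-worst and buyer-optimal objectives, which is the substantive content of the lemma.
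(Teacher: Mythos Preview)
Your proposal has two genuine gaps.

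First, for $\alpha=0$ your construction does not deliver a \emph{strict} improvement. Consolidating the $A$-mass to an atom at its conditional mean $\bar x$ is a mean-preserving contraction that leaves $G$ unchanged on $\{x:\hat\varphi(x|G)\ge 0\}$; hence the tail probabilities $1-\tilde G(x)$ on $A^c$ are identical to $1-G^*(x)$, not ``weakly shrunk,'' and the positive virtual values are unaffected. Since the $A$-mass contributed nothing to revenue before and (at best) nothing after, revenue is unchanged and you obtain no contradiction. The paper instead replaces the region below $x_0=\inf\{x:\hat\varphi(x|G)\ge 0\}$ by a truncated Pareto piece $1-\frac{x_0(1-G(x_0^-))}{x}$ (whose virtual value is exactly $0$, via Lemma~\ref{pre}) together with an atom at $0$, and then---crucially for $\alpha=0$---extends the Pareto piece \emph{rightward} past $x_0$, converting a positive-measure set of strictly positive virtual values into zeros. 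That second step is what produces the strict revenue decrease; your construction has no analogue of it.

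Second, your fallback to Lemma~\ref{change_variables} is circular. That reformulation is derived \emph{after} Lemmas~\ref{almost_nonneg} and~\ref{regularity} and relies on the virtual value being (almost) regular, which in turn uses the present lemma. You cannot invoke the virtual-value-distribution representation to establish it.

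For $\alpha=1$ your instinct is closer to the paper's: routing mass to $x=0$ and appealing to Lemma~\ref{convex} is correct. But ``redistributing a compensating amount of mass in the positive-virtual-value region outward'' is the wrong move---it would alter $G$ on $A^c$ and hence perturb the virtual values there in a way you have not controlled. The paper's construction keeps $G$ untouched on $[x_0,1]$ and uses the Pareto piece below $x_0$ (plus the atom at $0$, calibrated by the intermediate-value theorem) to absorb the mean; Lemma~\ref{pre} guarantees this yields a strict mean-preserving spread of $G$ with identical nonnegative virtual values, so revenue is unchanged while total surplus strictly rises.
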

	
	\begin{proof}
		See Appendix \ref{proofalmost}. 
	\end{proof}
	
	\begin{figure}[tbp]
		\centering
		\begin{minipage}[c]{0.45\textwidth}
			\centering
			\begin{tikzpicture}
				\begin{axis}[
					tick label style={font=\scriptsize},
					xlabel={$x$},
					ylabel={Distribution: $G(x)$},
					ytick={0, 0.180328, 1},
					yticklabels={0,$\theta_0$, 1},
					xtick={ 0.305,0.5, 1},
					xticklabels={ $x_1$,$x_2$,1},
					no markers,
					line width=0.3pt,
					cycle list={{red,solid}},
					samples=200,
					smooth,
					domain=0:1.3,
					xmin=0, xmax=1,
					ymin=0, ymax=1,
					width=6cm, height=5.5cm,
					legend cell align=left,
					legend pos=  north west,
					legend style={draw=none,fill=none,name=legend},
					]
					\addplot[blue,thick,domain=0:1]{x};
					\addplot[red,thick,domain=0.305:0.5]{1-(0.25/x)};
					\legend{Original , Modified};
					\addplot[red,thick,domain=0.5:1]{x-0.005};
					\addplot[red,dashed] coordinates {
						(0,0.180328)
						(0.305,0.180328)
					};
					
					\addplot[dashed] coordinates {
						(0.305,0)
						(0.305,0.180328)
					};
					\addplot[dashed] coordinates {
						(0.5,-1)
						(0.5,0)
					};
					\addplot[dashed] coordinates {
						(0.5,0)
						(0.5,0.5)
					};
				\end{axis}
			\end{tikzpicture}
		\end{minipage} 
		\begin{minipage}[c]{0.45\textwidth}
			\centering
			\begin{tikzpicture}
				\begin{axis}[
					tick label style={font=\scriptsize},
					xlabel={$x$},
					ylabel={Virtual value: $\hat{\varphi}(x)$},
					ytick={-0.5,0,   1},
					yticklabels={$\hat{\varphi}(0)$,0, 1},
					xtick={ 0.305,0.5 1},
					xticklabels={ $x_1$,$x_2$,1},
					no markers,
					line width=0.3pt,
					cycle list={{red,solid}},
					samples=200,
					smooth,
					domain=0:1.3,
					xmin=0, xmax=1,
					ymin=-1, ymax=1,
					width=6cm, height=5.5cm,
					legend cell align=left,
					legend pos=  north west,
					legend style={draw=none,fill=none,name=legend},
					]
					\addplot[blue,thick,domain=0:1]{2*x-0.999};
					\addplot[red,thick,domain=0.305:0.5]{0};
					\legend{Maybe negative,  nonnegative except at 0};
					\addplot[red,thick,domain=0.5:1]{2*x-1};
					\addplot[red,dashed] coordinates {
						(0.305,-0.5)
						(0.305,0)
					};
					\addplot[red,thick] coordinates {
						(0,-0.5)
						(0.305,-0.5)
					};
					\addplot[dashed] coordinates {
						(0.305,-1)
						(0.305,-0.5)
					};
					\addplot[dashed] coordinates {
						(0,0)
						(0.5,0)
					};
					\addplot[dashed] coordinates {
						(0.5,-1)
						(0.5,0)
					};
				\end{axis}
			\end{tikzpicture}
		\end{minipage}
		\caption{An improvement by a distribution with nonnegative virtual value
			except at 0}
		\label{nonneg_figure}
	\end{figure}
	
	Figure \ref{nonneg_figure} illustrates that how to improve the information
	designer's objective (for both the buyer-optimal case and the seller-worst
	case) . First, in the left subfigure, the blue curve is an arbitrary
	candidate distribution, which generates negative virtual values over the
	interval $[0,x_{2})$. The red curve coincides with the blue curve for $x\geq
	x_{2}$. For $x<x_{2}$, the red curve is a Pareto distribution which
	generates a virtual value $0$ over the interval $(x_{1},x_{2})$. By Lemma %
	\ref{pre}, we also put a positive mass $\theta _{0}$ on $x=0$ for the red
	curve to generate the same mean as the blue curve. By construction, the red
	curve is a strict mean-preserving spread of the blue curve; therefore, by
	Lemma \ref{convex}, the red curve can generate more expected total surplus.
	Second, by construction, for $x\in (0,x_{2})$, both of the ironed virtual
	values are no more than zero, and for $x\in \lbrack x_{2},1]$ the
	nonnegative ironed virtual value of the red curve coincides with that of the
	blue curve in the right subfigure. Therefore, the red curve generates
	strictly higher surplus for the buyers than the blue curve. In summary, the
	modification strictly benefits a buyer-optimal information designer and
	causes no loss to a seller-worst information designer. Hence, a
	buyer-optimal information structure must induce nonnegative virtual values
	except at 0. 
	\begin{figure}[tbp]
		\centering
		\begin{minipage}[c]{0.45\textwidth}
			\centering
			\begin{tikzpicture}
				\begin{axis}[
					tick label style={font=\scriptsize},
					xlabel={$x$},
					ylabel={Distribution: $G(x)$},
					ytick={0, 0.180328, 1},
					yticklabels={0, $\theta_0$, 1},
					xtick={0.22, 0.305,0.5,0.69, 1},
					xticklabels={$x_0$, $x_1$,$x_2$,$x_3$,1},
					no markers,
					line width=0.3pt,
					cycle list={{red,solid}},
					samples=200,
					smooth,
					domain=0:1.3,
					xmin=0, xmax=1,
					ymin=0, ymax=1,
					width=6cm, height=5.5cm,
					legend cell align=left,
					legend pos=  north west,
					legend style={draw=none,fill=none,name=legend},
					]
					\addplot[red,thick,domain=0.305:0.5]{1-(0.25/x)};
					\addplot[green,thick,domain=0.22:0.69]{1-(0.22/x)};
					\legend{Modified, Further modified};
					\addplot[red,thick,domain=0.5:1]{x};
					\addplot[red,dashed] coordinates {
						(0,0.180328)
						(0.305,0.180328)
					};
					\addplot[dashed] coordinates {
						(0.69,0)
						(0.69,0.6811)
					};
					\addplot[dashed] coordinates {
						(0.305,0)
						(0.305,0.180328)
					};
					
					\addplot[dashed] coordinates {
						(0.5,0)
						(0.5,0.5)
					};
				\end{axis}
			\end{tikzpicture}
		\end{minipage} 
		\begin{minipage}[c]{0.45\textwidth}
			\centering
			\begin{tikzpicture}
				\begin{axis}[
					tick label style={font=\scriptsize},
					xlabel={$x$},
					ylabel={Virtual value: $\hat{\varphi}(x)$},
					ytick={-0.5,0,   1},
					yticklabels={$\hat{\varphi}(0)$,0, 1},
					xtick={0.22, 0.305,0.5,0.69, 1},
					xticklabels={$x_0$, $x_1$,$x_2$,$x_3$,1},
					no markers,
					line width=0.3pt,
					cycle list={{red,solid}},
					samples=200,
					smooth,
					domain=0:1.3,
					xmin=0, xmax=1,
					ymin=-1, ymax=1,
					width=6cm, height=5.5cm,
					legend cell align=left,
					legend pos=  north west,
					legend style={draw=none,fill=none,name=legend},
					]
					\addplot[red,thick,domain=0.305:0.5]{0};
					\addplot[green,thick,domain=0.22:0.69]{0};
					\legend{Nonnegative except at 0,Nonnegative};
					\addplot[red,thick,domain=0.5:1]{2*x-1};
					\addplot[red,dashed] coordinates {
						(0.305,-0.5)
						(0.305,0)
					};
					\addplot[red,thick] coordinates {
						(0,-0.5)
						(0.305,-0.5)
					};
					\addplot[red,thick] coordinates {
						(0.305,0)
						(0.5001,0)
					};
					\addplot[green,thick] coordinates {
						(0.22,0)
						(0.305,0)
					};
					\addplot[green,dashed] coordinates {
						(0.69,0)
						(0.69,0.38)
					};
					\addplot[dashed] coordinates {
						(0.69,-1)
						(0.69,0)
					};
					\addplot[dashed] coordinates {
						(0.5,-1)
						(0.5,0)
					};
					\addplot[dashed] coordinates {
						(0.305,-1)
						(0.305,-0.5)
					};
					\addplot[dashed] coordinates {
						(0.22,-1)
						(0.22,0)
					};
				\end{axis}
			\end{tikzpicture}
		\end{minipage}
		\caption{A further improvement by a nonnegative virtual value distribution}
		\label{diff_figure}
	\end{figure}
	
	Even though the seller-worst information designer does not benefit from the
	modification in Figure \ref{nonneg_figure}, we use Figure \ref{diff_figure}
	to illustrate how we can further modify the red curve in Figure \ref%
	{nonneg_figure} to decrease the seller's revenue. Intuitively, since the
	seller gets no revenue as long as the virtual value is nonpositive, instead
	of inducing a negative virtual value at zero, a seller-worst information
	designer can do better by raising the negative virtual value to zero while
	decreasing the probability assigned to a positive virtual value to maintain
	the mean. More precisely, consider the green curve, which coincides with the
	red curve for $x\geq x_{3}$. For $x<x_{3}$, the green curve becomes a Pareto
	distribution which generates virtual value $0$ over the interval $%
	(x_{0},x_{3})$ such that the green curve generates the same mean as the red
	curve. By construction, and as shown in the right subfigure, the green curve
	generates strictly lower virtual values than the red curve over the interval 
	$(x_{2},x_{3})$; hence, the green curve generates strictly less revenue.
	
	\begin{lemma}
		\label{regularity}Any optimal signal distribution $G$ which solves the
		information design problem in (\ref{info}) must be regular except at 0;
		moreover, if $G$ is a solution to (\ref{info}) for $\alpha =0$, it must be
		regular.
	\end{lemma}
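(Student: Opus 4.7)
I plan to argue by contradiction: suppose an optimal $G$ is not almost regular (and, in the case $\alpha = 0$, not regular). Then $G$ admits a maximal ironing interval $[a,b] \subseteq (0,1]$ on which $\Psi(\cdot \mid G) > \Phi(\cdot \mid G)$ in the interior and the ironed virtual value $\hat\varphi(\cdot \mid G) \equiv k$ is constant, choosing $a > 0$ in the $\alpha = 1$ case so that any atom at $0$ is left untouched (this use of Lemma~\ref{almost_nonneg} is essential, since it tells us where negative virtual values may sit). The core move is to replace $G$ on $[a,b]$ by a truncated Pareto distribution carrying the same ironed virtual value, and then to show that this replacement strictly improves the information designer's objective.

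Explicitly, I would set $\tilde G = G$ outside $[a,b]$ and, on $[a,b]$,
\[
1 - \tilde G(x) \;=\; \frac{(a-k)\bigl(1 - G(a)\bigr)}{x - k}.
\]
Using the ironing identity $\Psi(b \mid G) - \Psi(a \mid G) = k\bigl(G(b) - G(a)\bigr)$, which rearranges to $(a-k)(1-G(a)) = (b-k)(1-G(b))$, one checks $\tilde G(a) = G(a)$ and $\tilde G(b) = G(b)$, so $\tilde G$ is a valid CDF. A direct computation gives $\tilde\varphi(x \mid \tilde G) = x - (1-\tilde G(x))/\tilde G'(x) = k$ on $[a,b]$, so $\tilde G$ is regular there and the pushforward distribution of the ironed virtual value coincides with that under $G$. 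Consequently Myerson's revenue formula yields $R(\tilde G) = R(G)$.

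For the buyer-optimal problem ($\alpha = 1$) I would then compare the expected total surplus. The Pareto $\tilde G$ is concave on $[a,b]$ with the same endpoint values as $G$, while the strict inequality $\Psi(\cdot \mid G) > \Phi(\cdot \mid G)$ on $(a,b)$ forces $G$ to lie strictly below $\tilde G$ on a subset of positive measure inside $(a,b)$; this makes $\tilde G$ a strict mean-preserving spread of $G$ on $[a,b]$. By Lemma~\ref{convex} the expected total surplus strictly increases, so combined with $R(\tilde G) = R(G)$ the buyer surplus strictly increases, contradicting optimality. For the seller-worst problem ($\alpha = 0$) equal revenue is not yet a contradiction; I would further perturb $\tilde G$ by shrinking $k$ to $k - \varepsilon$ (re-solving the Pareto formula) and absorbing the mean change via a local mass shift at the upper endpoint, producing a feasible $\tilde G_\varepsilon$ with strictly smaller seller revenue and contradicting optimality.

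The main technical obstacle I expect is the interaction with the mean constraint $\int_0^1 (1 - G(x))\,dx = p$: the Pareto replacement need not exactly preserve the mean on $[a,b]$, so the construction must be augmented by a small compensating perturbation that keeps both the regularity and the constant virtual value $k$ intact. Proving the strict mean-preserving spread claim rigorously likewise requires careful analysis of the relationship between $\Psi$, $\Phi$, and the shape of $G$ on the ironing interval, using that $\Phi$ is the $G$-measure convex hull of $\Psi$ and not merely its pointwise convexification.
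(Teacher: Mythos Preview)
Your core construction---replacing $G$ on the ironing interval $[a,b]$ by the truncated Pareto $\tilde G$ with constant virtual value $k$ and matching endpoints---is exactly the paper's move, and the ironing identity $(a-k)(1-G(a))=(b-k)(1-G(b))$ is correct. But the subsequent comparison of $G$ and $\tilde G$ is backwards. From $\Psi(x\mid G)>\Phi(x\mid G)$ one gets $(x-k)(1-G(x))<(a-k)(1-G(a))=(x-k)(1-\tilde G(x))$, hence $G(x)>\tilde G(x)$ on the interior of the ironing interval (this is precisely Lemma~\ref{pre}). So the Pareto $\tilde G$ lies \emph{below} $G$, not above; $\tilde G$ first-order stochastically dominates $G$ and therefore has strictly \emph{higher} mean. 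Consequently $\tilde G$ is infeasible, and it is certainly not a mean-preserving spread of $G$: with $\tilde G\le G$ on $[a,b]$ and equality at the endpoints, the integrals cannot agree.

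The missing idea is how the paper restores feasibility and obtains the spread. Since $\tilde G$ has excess mean, the paper moves mass to the signal $x=0$ (setting $\hat G^{\theta_0}(x)=\theta_0$ on $[0,x_{\theta_0})$ and $\hat G^{\theta_0}=\tilde G$ above) until the mean equals $p$. Now $\hat G^{\theta_0}$ is a strict mean-preserving spread of $G$: mass has moved both upward (on the ironing interval, where $\tilde G\le G$) and downward (to $0$), with the mean preserved. Lemma~\ref{convex} then gives strictly higher total surplus, while the nonnegative ironed virtual values are unchanged above $x_{\theta_0}$, so revenue is weakly lower---this handles $\alpha=1$. For $\alpha=0$ the atom at $0$ is not a nuisance but the key: it puts $\hat G^{\theta_0}$ outside the class with everywhere nonnegative virtual values, so the second half of Lemma~\ref{almost_nonneg} applies and yields a further modification with \emph{strictly} lower revenue. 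Your proposed alternative for $\alpha=0$ (shrinking $k$ to $k-\varepsilon$ and ``absorbing the mean change via a local mass shift at the upper endpoint'') does not obviously preserve the CDF structure at $b$ or keep the distribution regular, and in any case the paper's route via the atom at $0$ is both simpler and already available once the sign of the Pareto comparison is corrected.
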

	
	\begin{proof}
		See Appendix \ref{proofregular}. 
	\end{proof}
	
	Figure \ref{regular_figure} illustrates how to improve the buyer-optimal
	information designer's objective with regular distributions except at 0.
	First, in the left subfigure, the blue curve is a candidate distribution,
	which is irregular over the interval $(x_{1},x_{2})$ and generates ironed
	virtual value $k\geq 0$ over the interval $[x_{0},x_{2}]$. The red curve
	coincides with the blue curve for $x\geq x_{2}$. For $x<x_{2}$, the red
	curve becomes a Pareto (and hence regular) distribution which generates a
	same virtual value $k$ over the interval $(x_{1},x_{2})$ and puts remaining
	mass $\theta _{0}$ on $0$. Observe that the red curve is a strict
	mean-preserving spread of the blue curve in the left subfigure.\footnote{%
		We draw the blue curve above the red curve on the ironed interval $%
		[x_{1},x_{2}]$ because the Pareto signal distribution with virtual value $k$
		first-order stochastically dominates the original signal distribution by
		Lemma \ref{pre}.} Hence, by Lemma \ref{convex}, the red curve can generate
	more expected total surplus. Second, the red curve induces weakly lower
	ironed virtual values than the blue curve; hence, the seller earns less
	under the red curve than under the blue curve. Overall, the buyer-optimal
	information designer's objective value is higher under the red curve than
	under the blue curve. Moreover, by Lemma \ref{almost_nonneg}, the red curve
	induces negative virtual value at $x=0$ and hence neither the red curve nor
	the blue curve can be seller-worst.
	
	\begin{figure}[tbp]
		\centering
		\begin{minipage}[c]{0.45\textwidth}
			\centering
			\begin{tikzpicture}
				\begin{axis}[
					tick label style={font=\scriptsize},
					xlabel={$x$},
					ylabel={Distribution: $G(x)$},
					ytick={0, 0.09366, 1},
					yticklabels={0, $\theta_0$, 1},
					xtick={0.1287, 0.25,0.5, 1},
					xticklabels={$x_0$, $x_1$,$x_2$, 1},
					no markers,
					line width=0.3pt,
					cycle list={{red,solid}},
					samples=200,
					smooth,
					domain=0:1.3,
					xmin=0, xmax=1,
					ymin=0, ymax=1,
					width=6cm, height=5.5cm,
					legend cell align=left,
					legend pos=  north west,
					legend style={draw=none,fill=none,name=legend},
					]
					\addplot[blue,thick,domain=0.1287:0.25]{1-(0.1287/(x))};
					\addplot[red,thick,domain=0.142:0.5]{1-(0.1287/(x))};
					\legend{Irregular, Regular except at 0};
					\addplot[blue,thick,domain=0.25:1/3]{2*x};
					\addplot[blue,thick,domain=1/3:1] {(1/2)*x+0.5};
					\addplot[red,thick,domain=0.5:1]{(1/2)*x+0.495};
					\addplot[red,dashed] coordinates {
						(0,0.09366)
						(0.142,0.09366)
					};
					\addplot[dashed] coordinates {
						(0.25,0)
						(0.25,0.5)
					};
					\addplot[dashed] coordinates {
						(0.5,0)
						(0.5,0.75)
					};
					\node [coordinate,pin=right:{$1-\frac{x_0-k}{x-k}$}]
					at (axis cs:0.35,0.632286) {};
				\end{axis}
			\end{tikzpicture}
		\end{minipage} 
		\begin{minipage}[c]{0.45\textwidth}
			\centering
			\begin{tikzpicture}
				\begin{axis}[
					tick label style={font=\scriptsize},
					xlabel={$x$},
					ylabel={Virtual value: $\hat{\varphi}(x)$},
					ytick={-0.4,  0,   1},
					yticklabels={$\hat{\varphi}(0)$,  $k$, 1},
					xtick={0.124,  0.25,0.5, 1},
					xticklabels={$x_0$, $x_1$, $x_2$, 1},
					no markers,
					line width=0.3pt,
					cycle list={{red,solid}},
					samples=200,
					smooth,
					domain=0:1.3,
					xmin=0, xmax=1,
					ymin=-0.5, ymax=1,
					width=6cm, height=5.5cm,
					legend cell align=left,
					legend pos=  north west,
					legend style={draw=none,fill=none,name=legend},
					]
					\addplot[blue,dashed,domain=0.25:1/3]{2*x-0.49};
					\addplot[red,thick,domain=0.142:0.5]{-0.01};
					\legend{Irregular, Regular  except at 0};
					\addplot[blue,thick,domain=0.124:0.25]{0};
					\addplot[blue,thick,domain=0.5:1] {2*x-0.99};
					\addplot[blue,dashed,domain=1/3:0.5] {2*x-0.99};
					\addplot[red,thick,domain=0.5:1]{2*x-1};
					\addplot[blue,thick,domain=0.25:0.5] coordinates {
						(0.25,0)
						(0.5,0)
					};
					\addplot[red,thick] coordinates {
						(0,-0.4)
						(0.142,-0.4)
					};
					\addplot[blue,dashed] coordinates {
						(1/3,2/3-1)
						(1/3,2/3-0.5)
					};
					\addplot[dashed] coordinates {
						(0.25,-0.5)
						(0.25,0)
					};
					\addplot[blue,dashed] coordinates {
						(0.124,-0.5)
						(0.124,0)
					};
					\addplot[dashed] coordinates {
						(0.5,-0.5)
						(0.5,0)
					};
					\addplot[red,dashed] coordinates {
						(0.142,-0.4)
						(0.142,0)
					};
					\node [coordinate,pin=90:{$x-\frac{1-G}{g}=k$}]
					at (axis cs:0.3,0.1) {};
					\node [coordinate,pin=310:{Ironing}]
					at (axis cs:0.4,0) {};
				\end{axis}
			\end{tikzpicture}
		\end{minipage}
		\caption{An improvement by a regular distribution except at 0}
		\label{regular_figure}
	\end{figure}
	
	By Lemma \ref{regularity}, we will hereafter use $\varphi $ instead of $\hat{
		\varphi}$ to denote the virtual value.
	
	\subsection{Change of variable}
	
	We now introduce a key step in solving the information design problems,
	namely, we change our control variable from a signal distribution to a
	virtual value distribution. Let $F(k)$ be the distribution of virtual values
	given a feasible signal distribution $G$. Since $G$ is regular except at 0, $%
	F\left( k\right) =Prob_{G}\left\{ x|\varphi (x)\leq k\right\} $. Then,
	except at $x=0 $, the virtual value of $G$ at signal $x$ is 
	\begin{equation}
		\varphi (x)=x-\frac{1-G(x)}{G^{\prime }(x)}.  \label{varphi}
	\end{equation}
	First, assume that the virtual valuation function $\varphi (\cdot )$ is 
	\emph{strictly increasing}, so that its inverse $x(k)=\varphi ^{-1}(k)$ is
	well defined. Consequently, $F(k)=G(x(k))$ and $F^{\prime }(k)=G^{\prime
	}(x(k))x^{\prime }(k)$. By Equation \eqref{varphi}, we have the following
	ordinary differential equation of $x(k)$: 
	\begin{equation}
		k=x\left( k\right) -\frac{1-F(k)}{F^{\prime }(k)}x^{\prime }\left( k\right) .
		\label{ODE}
	\end{equation}
	Solving the differential equation, we obtain Lemma \ref{changeofvariable}
	below.
	
	\begin{lemma}
		\label{changeofvariable}Suppose that $\varphi $ is strictly increasing in $x$
		. For each $k$ with $\varphi (x)=k$, $x(k)$ is the buyer's expected virtual
		value conditional on his virtual value being greater than or equal to $k$,
		i.e., 
		\begin{equation}
			x\left( k\right) =\mathbb{E}[\varphi |\varphi \geq k]=k+\frac{
				\int_{k}^{1}(1-F(s))\mathrm{d}s}{1-F(k)}\text{.}  \label{cv}
		\end{equation}
	\end{lemma}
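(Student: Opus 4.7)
The plan is to prove the two equalities in (\ref{cv}) separately. The second equality is a standard tail-expectation identity: writing $\mathbb{E}[\varphi\mid\varphi\geq k]=(1-F(k))^{-1}\int_k^1 s\,\md F(s)$ and applying integration by parts, one has $\int_k^1 s\,\md F(s)=k(1-F(k))+\int_k^1(1-F(s))\,\md s$, and dividing through by $1-F(k)$ yields the claimed expression. This step uses only $F(1)=1$ and no properties of $G$.

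For the first equality $x(k)=\mathbb{E}[\varphi\mid\varphi\geq k]$, I would start from the regularity identity $\varphi(t)G'(t)=tG'(t)-(1-G(t))$, obtained by multiplying $\varphi(t)=t-(1-G(t))/G'(t)$ through by $G'(t)$. Integrating from $x$ to $1$ and applying integration by parts to $\int_x^1 tG'(t)\,\md t=x(1-G(x))+\int_x^1(1-G(t))\,\md t$, the $\int(1-G)$ terms cancel, leaving the clean identity $\int_x^1 \varphi(t)\,\md G(t)=x(1-G(x))$. Dividing by $1-G(x)$ gives $\mathbb{E}[\varphi(v)\mid v\geq x]=x$.

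It then remains to translate $\mathbb{E}[\varphi(v)\mid v\geq x]$ into $\mathbb{E}[\varphi\mid\varphi\geq k]$. Because $\varphi$ is strictly increasing, $\{v\geq x\}=\{\varphi(v)\geq \varphi(x)\}=\{\varphi(v)\geq k\}$, and the change of variable $F(k)=G(x(k))$ transports the conditional law of $\varphi(v)$ under $G$ to the conditional law of $\varphi$ under $F$. Hence $\mathbb{E}[\varphi(v)\mid v\geq x]=\mathbb{E}[\varphi\mid\varphi\geq k]$, and combined with the previous step, $x(k)=\mathbb{E}[\varphi\mid\varphi\geq k]$.

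The calculation is routine; I anticipate no substantive obstacle, since the strict monotonicity of $\varphi$ assumed in the lemma ensures that the change of variable $k\leftrightarrow x$ is a bijection on the support, so no complications arise with flats or atoms of $G$. An equivalent route is to view (\ref{varphi}) as a first-order ODE in $x(k)$: rewriting it as $\md[(1-F(k))x(k)]/\md k=-kF'(k)$ and integrating from $k$ to $1$ using $F(1)=1$ yields $(1-F(k))x(k)=\int_k^1 s\,\md F(s)$, which again equals the desired conditional expectation.
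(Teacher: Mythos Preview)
Your proposal is correct. Your \emph{alternative route} at the end---rewriting (\ref{varphi}) as a first-order linear ODE in $x(k)$ and integrating---is exactly the paper's proof: the paper rearranges (\ref{varphi}) into $x'(k)-\frac{F'(k)}{1-F(k)}x(k)=\frac{-kF'(k)}{1-F(k)}$ and solves this ODE to obtain the formula. Your \emph{primary} argument, by contrast, goes a different and more direct way: you integrate the identity $\varphi(t)G'(t)=tG'(t)-(1-G(t))$ from $x$ to $1$ to obtain the clean relation $\int_x^1\varphi(t)\,\md G(t)=x(1-G(x))$, which immediately says that the signal $x$ equals the conditional mean of the virtual value above $x$, and then you use strict monotonicity of $\varphi$ to convert the conditioning event from $\{v\geq x\}$ to $\{\varphi\geq k\}$. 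This avoids solving an ODE and makes the economic content---that the signal is the upper-tail expectation of the virtual value, a restatement of the familiar fact that posted-price revenue $x(1-G(x))$ equals expected virtual value above $x$---transparent. The paper's ODE derivation is shorter to write down but somewhat opaque about why the formula should hold; your integration-by-parts route is slightly longer but self-explanatory.
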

	
	Alternatively, we can also derive the expression of $x(k)$ by applying the
	Envelope Theorem.\footnote{%
		We thank Kai Hao Yang for suggesting this elegant argument to us.} Observe
	that $x(k)$ is the solution to the following monopoly pricing problem with
	marginal cost $k$: 
	\begin{equation*}
		V(k)=\max_{x}(x-k)(1-G(x))\text{.}
	\end{equation*}
	By the Envelope Theorem, we can derive: 
	\begin{equation*}
		\frac{\partial V(k)}{\partial k}=-\left( 1-G\left( x(k)\right) \right) .
	\end{equation*}
	Hence, 
	\begin{align*}
		V(1)-V(k)& =-(x\left( k\right) -k)(1-G(x\left( k\right)
		))=\int_{k}^{1}-\left( 1-G\left( x(s)\right) \right) \mathrm{d}s \\
		\Longrightarrow x(k)& =k+\frac{\int_{k}^{1}\left( 1-G\left( x(s)\right)
			\right) \mathrm{d}s}{1-G(x(k))}=k+\frac{\int_{k}^{1}\left( 1-F\left(
			s\right) \right) \mathrm{d}s}{1-F(k)}\text{.}
	\end{align*}
	
	Moreover, we can still make use of the expression for $x\left( k\right) $ in %
	\eqref{cv}, even when $\varphi (x)$ is only weakly increasing (by
	regularity). To see this, note that any weakly increasing function can be
	uniformly approximated by a strictly increasing function. Let $\{\varphi
	_{m}\}_{m=1}^{\infty }$ be a sequence of strictly increasing functions
	converging uniformly to $\varphi $. For each $m$, let $G_{m}$ and $F_{m}$ be
	sequences of signal distributions and virtual value distributions
	corresponding to $\varphi _{m}$. Specifically, by solving Equation %
	\eqref{varphi}, we obtain $G_{m}(x)=1-\exp \left\{ \int_{0}^{x}(\varphi
	_{m}(t)-t)^{-1}\mathrm{d}t\right\} $ and $F_{m}$ is the virtual value
	distribution induced by $G_{m}$. Since $\left\{ \varphi _{m}\right\} $
	converges uniformly to $\varphi $, we also have $\left\{ G_{m}\right\} $ and 
	$\left\{ F_{m}\right\} $ uniformly converge to $G$ and $F$, respectively. In
	Appendix \ref{proofchange2}, we use this convergence result and the
	expression in \eqref{cv} to establish the following equation: 
	\begin{align}
		& \int_{0}^{1}x\mathrm{d}G^{n}(x)=1-\int_{0}^{1}G^{n}(x)\mathrm{d}x  \notag
		\\
		=& \int_{0}^{1}n(1-F(k))\left( \sum_{i=1}^{n-1}\frac{-F^{i}(k)}{i}-\log
		(1-F(k))+ \sum_{i=1}^{n-1}\frac{F^{i}(0^{-})}{i}+\log (1-F(0^{-})) \right)
		-F^{n}(k)\mathrm{d}k+1,  \label{m2}
	\end{align}
	where $G\left( 0\right) =F\left( 0^{-}\right) $. Indeed, we show in Appendix %
	\ref{proofchange2} that Equation \eqref{m2} follows for each $G_{m}$ and $%
	F_{m}$ and we then apply the bounded convergence theorem. To see the
	intuition, suppose that $n=1$ and there is no mass on signal 0. Then,
	Equation \eqref{m2} is reduced to: 
	\begin{equation}
		\int_{0}^{1}(1-F(k))(1-\log (1-F(k)))\mathrm{d}k=p.  \label{mc}
	\end{equation}%
	We may regard \eqref{mc} as a mean constraint for a general virtual value
	distribution that generalizes the special case \eqref{mc_ks} for a binary
	virtual value distribution on $\left\{ k_{s},1\right\}.$
	
	Equation \eqref{m2} is the total surplus given the distribution $G$, and
	when $n=1$, Equation \eqref{m2} is reduced to the expected mean. Hence, we
	have the following lemma.
	
	\begin{lemma}
		\label{change_variables}After the change of variable, the information
		designer's problem in \eqref{info} can be written as follows: 
		\begin{align}
			\max_{F}& \int_{0}^{1}\alpha n(1-F(k))\left( \sum_{i=1}^{n-1}\frac{-F^{i}(k) 
			}{i}-\log (1-F(k))+\left( \sum_{i=1}^{n-1}\frac{F^{i}(0^{-})}{i}+\log
			(1-F(0^{-}))\right) \right) \,\mathrm{d}k  \label{info_object} \\
			& +(1-\alpha )\int_{0}^{1}F^{n}(k)\mathrm{d}k+(\alpha -1)  \notag \\
			\text{s.t. }& \int_{0}^{1}(1-F(k))(1-\log (1-F(k))+\log (1-F(0^{-})))\mathrm{%
				d}k=p.  \notag
		\end{align}
	\end{lemma}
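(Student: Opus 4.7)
The plan is to combine Lemmas~\ref{almost_nonneg} and~\ref{regularity}, which let us replace $\hat{\varphi}$ by $\varphi$ and assume that the virtual values are nonnegative except possibly on the atom $\{x=0\}$, with the change of variable $k=\varphi(x)$ from Lemma~\ref{changeofvariable}. Under this change of variable, the three pieces entering \eqref{info}---the mean constraint, the expected total surplus $\int_0^1 x\,\mathrm{d}G^n(x)$, and the seller's revenue---can all be expressed in terms of $F$. The revenue piece is immediate: since virtual values lie in $[0,1]$ almost everywhere, the expected maximum nonnegative virtual value equals $\int_0^1(1-F^n(k))\,\mathrm{d}k$. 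The bulk of the work is therefore to establish the two identities \eqref{m1} and \eqref{m2}, after which the reformulation \eqref{info_object} follows by algebraic rearrangement: substitute $\alpha\cdot(\text{RHS of }\eqref{m2})-\int_0^1(1-F^n)\,\mathrm{d}k$ into the objective and collect the terms involving $F^n$.

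I would carry out the computations first under the simplifying hypothesis that $\varphi$ is strictly increasing on $(0,1]$, so that $x(k)$ is well defined by \eqref{cv}. For \eqref{m1}, write $\int_0^1 x\,\mathrm{d}G(x)=\int x(k)\,\mathrm{d}F(k)$ and insert \eqref{cv}. The $k$-term integrates by parts to $\int_0^1(1-F(k))\,\mathrm{d}k$, and the remainder $\int\tfrac{\int_k^1(1-F(s))\,\mathrm{d}s}{1-F(k)}\,\mathrm{d}F(k)$ is handled by the substitution $v=-\log(1-F(k))$ followed by one more integration by parts, which yields $-\int_0^1(1-F(k))\log(1-F(k))\,\mathrm{d}k$ together with a lower boundary term that produces the factor $-\log(1-F(0^-))$ encoding the atom $\theta_0=G(0)=F(0^-)$. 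Identity \eqref{m2} is obtained by the parallel computation applied to $\int_0^1 x\,\mathrm{d}G^n(x)=1-\int_0^1 G^n(x)\,\mathrm{d}x$ using $G(x(k))=F(k)$; the combinatorial sum $\sum_{i=1}^{n-1}F^i(k)/i$ arises from iterating integration by parts on the resulting $\int F^n(k)\,x'(k)\,\mathrm{d}k$ until the logarithm is exposed, equivalently from the geometric identity $\tfrac{F^{n-1}}{1-F}=\tfrac{1}{1-F}-\sum_{i=0}^{n-2}F^i$.

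To remove the strict monotonicity hypothesis I would invoke the approximation argument already sketched just above the lemma: take a sequence $\{\varphi_m\}$ of strictly increasing virtual value functions converging uniformly to an almost regular $\varphi$, let $G_m(x)=1-\exp(\int_0^x(\varphi_m(t)-t)^{-1}\,\mathrm{d}t)$ and $F_m$ denote the induced pair, and apply the preceding computations to each $m$. Uniform convergence $\varphi_m\to\varphi$ forces $G_m\to G$ uniformly and $F_m\to F$ pointwise at continuity points of $F$; the integrands in \eqref{m1}--\eqref{m2} are dominated on $[0,1]$ by Lebesgue-integrable functions (the logarithmic factor $|\log(1-F)|$ is integrable against Lebesgue measure), so the bounded convergence theorem transfers both identities to the limit.

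The main technical hurdle I expect is the careful bookkeeping of the atom at $x=0$: it appears simultaneously as a jump of $F$ at the origin, producing the $\log(1-F(0^-))$ corrections in \eqref{m1}--\eqref{m2}, and as the region where virtual values are negative and therefore contribute nothing to the seller's revenue. Reconciling these two roles, and making sure the integration-by-parts boundary terms and the limiting argument consistently track the atom without accidentally dropping it or double-counting it, is the delicate step of the proof.
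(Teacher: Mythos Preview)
Your proposal is correct and follows essentially the same route as the paper: reduce to almost regular distributions with almost nonnegative virtual values via Lemmas~\ref{almost_nonneg}--\ref{regularity}, establish \eqref{m1} and \eqref{m2} first for strictly increasing $\varphi$ using the expression~\eqref{cv}, and then pass to the limit via the uniform approximation $\varphi_m\to\varphi$ and bounded convergence. The only cosmetic difference is that the paper computes the double integrals in \eqref{m1}--\eqref{m2} by Fubini and the substitution $\theta=F_m(k)$ (which exposes $\int \theta^{n-1}/(1-\theta)\,\mathrm{d}\theta$ directly via the same geometric identity you mention), whereas you phrase it as substitution plus iterated integration by parts; and one small caution: the domination you need for bounded convergence is that $(1-F)\log(1-F)$ is \emph{bounded}, not merely that $|\log(1-F)|$ is integrable.
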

	
	\begin{proof}
		It directly follows from Equation \eqref{m2}. 
	\end{proof}
	
	\subsection{The case with $n=1$: \protect\cite{roesler2017buyer} revisited}
	
	\label{rs17}
	
	We are now ready to solve the information design problem for the case with $%
	n=1$, which is analyzed in \cite{roesler2017buyer}. For $n=1$, we can
	rewrite the information designer's problem as 
	\begin{align*}
		& \max_{F}\quad \alpha \underbrace{\int_{0}^{1}x\mathrm{d}G(x)}_{\text{the
				total surplus}}-\underbrace{\int_{0}^{1}k\mathrm{d}F(k)}_{\text{the seller's
				revenue}} \\
		\text{s.t. }& \int_{0}^{1}(1-F(k))(1-\log (1-F(k))+\log (1-F(0^-)))\mathrm{d}
		k=p.
	\end{align*}
	When $n=1$, the total surplus $\int_{0}^{1}x\mathrm{d}G$ is linear in $G$;
	moreover, $\int_{0}^{1}x\mathrm{d}G=p$ by the mean constraint. Therefore,
	the value of $\alpha $ has no effect on the optimization. This implies that
	the buyer-optimal information structure is equivalent to the seller-worst
	information structure. Since the virtual value is always nonnegative for the
	seller-worst case, we have $G(0)=F(0^-)=0$, i.e., there is no mass on $x=0$.
	
	The information design problem is an isoperimetric problem in optimal
	control theory; see Theorem 4.2.1 of \cite{van2004isoperimetric}. To solve
	the optimal control problem, we can write the Lagrangian as 
	\begin{equation*}
		\mathcal{L}(F,\lambda )=\alpha p+\int_{0}^{1}\left( F-\lambda ((1-F)(1-\log
		(1-F))-p)\right) \mathrm{d}k-1.
	\end{equation*}
	Let $\theta =F(k)$. Then, for each $k$, the Euler-Lagrange condition implies
	that 
	\begin{equation*}
		\partial \mathcal{L}/\partial \theta =1-\lambda \log (1-\theta )=0.
	\end{equation*}
	
	Since the solution of the Euler-Lagrange equation cannot be either $\theta
	=0 $ or $\theta =1$, there exists a constant $\lambda =1/\log (1-\theta )<0$
	. Since $\log (1-\theta )$ is monotone decreasing, there is at most a
	solution $\theta \in (0,1)$, such that the Euler-Lagrange equation holds.
	Therefore, $F(k)$ has only three values $0$, $\theta$, and $1$; hence, $F$
	is a two-point distribution. Moreover, since the constant $\lambda$ is
	fixed, and since $\log (1-\theta )|_{\theta \uparrow 1}\rightarrow -\infty $
	, we have 
	\begin{equation*}
		\left. \frac{\partial \mathcal{L}}{\partial \theta }\right\vert _{\theta
			\uparrow 1}=\left. 1-\lambda \log (1-\theta )\right\vert _{\theta \uparrow
			1}\rightarrow -\infty .
	\end{equation*}
	Therefore, it will never be optimal for $F(k)$ to jump to $1$ before $k=1$;
	therefore, the larger value with respect to $F$ must be $1$. That is, the
	support of $F$ is $\{k,1\}$.
	
	Now, the information designer only needs to choose $\{k,F(k)=\theta \}$ to
	maximize 
	\begin{align*}
		& \max_{k\geq 0,\theta }\alpha p-\left( \theta \times k+(1-\theta )\times
		1\right) \\
		\text{s.t. }& k+(1-k)(1-\theta )(1-\log (1-\theta ))=p.
	\end{align*}
	The Lagrangian is 
	\begin{equation*}
		\mathcal{L}(k,\theta ,\lambda ,\mu )=\alpha p-k\theta -(1-\theta )+\lambda
		\left( p-(k+(1-k)(1-\theta )(1-\log (1-\theta )))\right).
	\end{equation*}
	and the Euler-Lagrange equation for $\theta $ is 
	\begin{equation*}
		\dfrac{\partial \mathcal{L}}{\partial \theta }=(1-k)\left( 1-\lambda (\log
		(1-\theta ))\right) =0.
	\end{equation*}
	Therefore, 
	\begin{align*}
		\lambda =1/\log (1-\theta ).
	\end{align*}
	Hence, the Euler-Lagrange equation for $k$ is 
	\begin{align*}
		\dfrac{\partial \mathcal{L}}{\partial k} =-\theta -\lambda (\theta
		+(1-\theta )\log (1-\theta )) =\frac{\theta +\log (1-\theta )}{-\log
			(1-\theta )}<0.
	\end{align*}
	Therefore, the optimal $k=0$, and $F$ is a two-point distribution which puts
	mass only on the virtual values 0 and 1.
	
	In summary, we have reproduced the optimal signal distribution derived in 
	\cite{roesler2017buyer}, namely, 
	\begin{equation*}
		G(x)= 
		\begin{cases}
			1-\frac{1-\theta }{x}, & \text{ if }x\in \lbrack 1-\theta ,1); \\ 
			1, & \text{ if }x=1.%
		\end{cases}%
	\end{equation*}
	Under the optimal signal distribution, for $x\in \lbrack 1-\theta ,1)$, the
	virtual value is zero with probability $\theta $. For $x=1$, the virtual
	value is $1$ with probability $1-\theta $.
	
	\subsection{The case with $n \geq 2$}
	
	\label{rsn2}
	
	As in the case with $n=1$, we can reduce the infinite-dimensional
	information design problem to a finite-dimensional problem by the following
	lemma.
	
	\begin{lemma}
		\label{two-point} The support of any optimal virtual value distribution $F$
		has two points, say, $\{k,1\}$.
	\end{lemma}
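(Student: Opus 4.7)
The plan is to repeat the Euler-Lagrange argument from the one-buyer case in Section~\ref{rs17}, now applied to the reformulated problem in Lemma~\ref{change_variables}. The critical observation is that after the change of variables, both the objective integrand and the mean-constraint integrand are functions of $\theta = F(k)$ alone---they depend neither on $k$ explicitly nor on $F'(k)$---with the boundary value $F(0^{-})$ entering only as an additive constant within each integrand. Hence the problem is a standard isoperimetric problem to which Theorem~4.2.1 of \cite{van2004isoperimetric} applies.

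Writing the Lagrangian as $\mathcal{L}(F,\lambda) = \int_{0}^{1}\bigl[\ell_{\alpha}(\theta) - \lambda\, m(\theta)\bigr]\,\mathrm{d}k + \text{constants}$, where $m(\theta) = (1-\theta)\bigl(1-\log(1-\theta)-\log(1-F(0^{-}))\bigr)$ and $\ell_{\alpha}(\theta)$ is the objective integrand from Lemma~\ref{change_variables}, the Euler-Lagrange equation reduces to the pointwise first-order condition
\[
\partial_{\theta}\ell_{\alpha}(\theta) \;=\; \lambda\,\partial_{\theta}m(\theta) \;=\; \lambda\,\log\!\bigl((1-\theta)(1-F(0^{-}))\bigr).
\]
Because $\lambda$ and $F(0^{-})$ are $k$-independent constants, the set of values $\theta\in(0,1)$ that can arise as $F(k)$ at an interior point of the support of $dF$ is exactly the set of interior maximizers of the pointwise Lagrangian $\Phi_{\alpha}(\theta) := \ell_{\alpha}(\theta) - \lambda\, m(\theta)$. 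I would then argue that this set is a singleton $\{\theta^{*}\}$: for $\alpha = 0$, $\partial_{\theta}\ell_{0}(\theta) = n\theta^{n-1}$ is strictly monotone while $\partial_{\theta}m(\theta)$ is strictly decreasing, so the first-order condition admits at most one interior solution; for $\alpha = 1$, the analogous monotonicity of $\partial_{\theta}\Phi_{1}$ can be verified by direct differentiation, exploiting the identity $\sum_{i=1}^{n-1}(-\theta^{i}/i) - \log(1-\theta) = \sum_{i\geq n}\theta^{i}/i$ to rewrite $\ell_{1}$ in a form that is tractable enough to check that $\partial_{\theta}\Phi_{1}$ has a single sign change in $(0,1)$.

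Given a unique interior maximizer $\theta^{*}$, the non-decreasing function $F$ must equal $\theta^{*}$ on every subinterval of $(0,1)$ on which it strictly exceeds $F(0^{-})$ and is strictly below $1$. Together with $F(1)=1$, this forces $F$ to jump from $F(0^{-})$ to $\theta^{*}$ at a single point $k\in(0,1)$ and from $\theta^{*}$ to $1$ at $k=1$. Hence the support of the induced measure on $(0,1]$ consists of at most the two points $\{k,1\}$, as claimed. The most delicate step will be verifying uniqueness of the interior maximizer in the buyer-optimal case $\alpha = 1$; should strict monotonicity of $\partial_{\theta}\Phi_{1}$ fail at some parameter range, my fallback is to inspect $\partial^{2}_{\theta}\Phi_{1}$ and establish strict concavity of $\Phi_{1}$ on the relevant interval of $\theta$ via explicit computation, which is enough to rule out multiple interior critical points.
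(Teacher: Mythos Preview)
Your overall strategy---reduce to an isoperimetric problem, write the pointwise Lagrangian $\Phi_\alpha(\theta)=\ell_\alpha(\theta)-\lambda m(\theta)$, and argue that the interior maximizer is unique---is exactly the route the paper takes. The gap is in your uniqueness argument for the interior critical point.

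For $\alpha=0$ you write that $\partial_\theta\ell_0(\theta)=n\theta^{n-1}$ is strictly increasing and $\partial_\theta m(\theta)=\log(1-\theta)$ is strictly decreasing, and conclude the first-order condition has at most one solution. But the FOC reads $n\theta^{n-1}=\lambda\,\partial_\theta m(\theta)$, and since the left side is nonnegative while $\partial_\theta m<0$, any interior solution forces $\lambda<0$. With $\lambda<0$ the right side $\lambda\log(1-\theta)$ is also strictly \emph{increasing}, so you are intersecting two increasing curves; monotonicity alone does not give uniqueness. Concretely, for $n\ge 3$ and suitable $\lambda<0$ the equation $n\theta^{n-1}-\lambda\log(1-\theta)=0$ has two positive roots (the function starts at $0$, dips, rises, then falls to $-\infty$). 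Your concavity fallback also fails here: $\Phi_0''(\theta)=n(n-1)\theta^{n-2}+\lambda/(1-\theta)$ changes sign for $n\ge 3$ when $-\lambda$ is small.

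What the paper actually proves (Lemma~\ref{stablesoln}) is weaker but sufficient: among the roots of $I_\alpha(\theta):=\partial_\theta\Phi_\alpha(\theta)=0$, at most one satisfies the second-order condition $I_\alpha'(\theta)\le 0$, i.e.\ at most one is a local \emph{maximizer}. This is obtained by writing $I_\alpha'(\theta)=(J_\alpha(\theta)+\lambda)/(1-\theta)$ for an explicit polynomial-type $J_\alpha$ that is single-peaked on $(0,1)$, and then doing a short case analysis on $\lambda$ to count how many times $I_\alpha$ can cross zero from above. The same case analysis handles $\alpha=1$; your ``direct differentiation'' sketch would need to reproduce essentially this argument, and the single-sign-change claim for $\partial_\theta\Phi_1$ you assert does not hold without it.
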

	
	\begin{proof}
		The information design problem is also an isoperimetric problem in optimal 
		control theory. Define the following Lagrangian,  
		\begin{align*}
			\mathcal{L}(F,\lambda )=& \int_{0}^{1}\alpha n(1-F(k))\left(
			\sum_{i=1}^{n-1} \frac{-F^{i}(k)}{i}-\log (1-F(k))+\left( \sum_{i=1}^{n-1} 
			\frac{F^{i}(0^-)}{i} +\log (1-F(0^-))\right) \right)\mathrm{d}k \\
			& + \int_{0}^{1}(1-\alpha )F^{n}(k)-\lambda (1-F(k))(1-\log (1-F(k))+\log
			(1-F(0^-)))\mathrm{d}k+p\lambda +(\alpha -1)\text{.}
		\end{align*}
		Let $\theta _{0}=F(0^-)=G\left( 0\right)$, and let $\theta =F\left( k\right)
		$. By Theorem 4.2.1 of \cite{van2004isoperimetric}, the Euler-Lagrange 
		equation for $\mathcal{L}$ and for each state $k$ should be satisfied as 
		follows:  
		\begin{align*}
			\frac{\partial \mathcal{L}}{\partial \theta}=&-n\alpha
			\left(\sum_{i=1}^{n-1}-\theta ^{i}/i-\log(1-\theta)+\left(
			\sum_{i=1}^{n-1}\theta _{0}^{i}/i+\log (1-\theta
			_{0})\right)\right)+n\alpha(1-\theta)\left(\sum_{i=1}^{n-1}-\theta ^{i-1}+ 
			\frac{1}{1-\theta}\right) \\
			&+n(1-\alpha)\theta ^{n-1}-\lambda\log(1-\theta)+\lambda\log(1-\theta_0) \\
			=& n\alpha \sum_{i=1}^{n-1}\theta ^{i}/i+n\alpha \log (1-\theta )-\lambda
			\log (1-\theta ) +\left( -n\alpha \sum_{i=1}^{n-1}\theta _{0}^{i}/i-(n\alpha
			-\lambda )\log (1-\theta _{0})\right) \\
			&+n\alpha(1-\theta)\left(\frac{\theta^{n-1}-1}{1-\theta}+\frac{1}{1-\theta}
			\right)+n(1-\alpha)\theta^{n-1} \\
			=& n\alpha \sum_{i=1}^{n-1}\theta ^{i}/i+n\theta ^{n-1}+n\alpha \log
			(1-\theta )-\lambda \log (1-\theta ) +\left( -n\alpha \sum_{i=1}^{n-1}\theta
			_{0}^{i}/i-(n\alpha -\lambda )\log (1-\theta _{0})\right) =0.
		\end{align*}
		
		Denote $\partial\mathcal{L}/\partial\theta$ by $I_{\alpha }(\theta )$. Then 
		by taking the derivative of $I_{\alpha }(\theta )$ with respect to $\theta,$ we have:  
		\begin{align*}
			I_{\alpha }^{\prime }(\theta
			)&=n\alpha\sum_{i=1}^{n-1}\theta^{i-1}+n(n-1)\theta^{n-2}+\frac{
				-n\alpha+\lambda}{1-\theta}=\frac{n\alpha(1-\theta^{n-1})}{1-\theta}
			+n(n-1)\theta^{n-2}+\frac{-n\alpha+\lambda}{1-\theta} \\
			&=\frac{-n\alpha\theta^{n-1}+n(n-1)\theta^{n-2}(1-\theta)+\lambda}{1-\theta}
			= \frac{\lambda+n\theta^{n-2}\left(n-1-(n-1+\alpha)\theta\right)}{1-\theta}.
		\end{align*}
		
		We prove the following lemma in Appendix \ref{proofstablesoln}:
		
		\begin{lemma}
			\label{stablesoln}There is at most one $\theta $ with $I_{\alpha }(\theta 
			)=0 $ which also satisfies the second-order condition.\footnote{%
				In Figure \ref{local_figure}, we also draw the curve of $I_{\alpha }(\theta)$
				and $I_{\alpha }^{\prime }(\theta )$ to illustrate this lemma. In Figure \ref%
				{local_figure}, we choose the parameters to be $n=3$, $\alpha =1$, $\lambda
				=-0.5$, and $\theta _{0}=0$.} 
		\end{lemma}
		
		Therefore, $F(k)$ has only three values $0$, $\theta$, and $1$; hence, $F$ 
		is a two-point distribution. We then argue that for both $\alpha=0$ and $%
		\alpha=1$, signal 1 is on the support of $F$.
		
		\begin{enumerate}
			\item When $\alpha =0$, it is only when $\lambda <0$ that feasible solutions
			of $\theta $ exist; see cases 2 and 3 in Appendix \ref{proofstablesoln}. 
			Therefore,  
			\begin{align*}
				\left. I_{0}(\theta )\right\vert _{\theta \uparrow 1}& =\left. n\theta
				^{n-1}\right\vert _{\theta \uparrow 1}-\lambda (\log (1-\theta )|_{\theta
					\uparrow 1}+\lambda \log (1-\theta _{0})) \\
				& =\left. n+\lambda \log (1-\theta _{0})-\lambda \log (1-\theta )\right\vert
				_{\theta \uparrow 1}\rightarrow -\infty .
			\end{align*}
			Therefore, having $F(k)$ jump to $1$ before $k=1$ will never be optimal.
			
			\item When $\alpha =1$, it is only when $\lambda <n$, feasible solutions of $%
			\theta $ exist; see cases 2, 3 and 4 in Appendix \ref{proofstablesoln}. 
			Therefore,  
			\begin{align*}
				\left. I_{1}(\theta )\right\vert _{\theta \uparrow 1}=& \left. \left(
				n\sum_{i=1}^{n-1}\theta ^{i}/i+n\theta ^{n-1}+n\log (1-\theta )-\lambda \log
				(1-\theta )\right) \right\vert _{\theta \uparrow 1}\  \\
				& +\left( -n\sum_{i=1}^{n-1}\theta _{0}^{i}/i-(n-\lambda )\log (1-\theta
				_{0})\right) \\
				=& n+n\sum_{i=1}^{n-1}1/i+\left( -n\sum_{i=1}^{n-1}\theta
				_{0}^{i}/i-(n-\lambda )\log (1-\theta _{0})\right) \\
				& +\left. (n-\lambda )\log (1-\theta )\right\vert _{\theta \uparrow
					1}\rightarrow -\infty .
			\end{align*}
			Therefore, having $F(k)$ jump to $1$ before $k=1$ will never be optimal. 
		\end{enumerate}
		
		In summary, the support of $F$ has two points $\{k,1\}$. 
	\end{proof}
	
	\begin{figure}[tbp]
		\centering
		\begin{minipage}[c]{0.45\textwidth}
			\centering
			\begin{tikzpicture}
				\begin{axis}[
					tick label style={font=\scriptsize},
					xlabel={$\theta$},
					ylabel={ $I'(\theta)$},
					ytick={-0.7,0,  0.7},
					yticklabels={-0.7,0, 0.7},
					xtick={0, 0.099,0.566},
					xticklabels={0, $\theta_1$, $\theta_2$},
					no markers,
					line width=0.3pt,
					cycle list={{red,solid}},
					samples=200,
					smooth,
					domain=-2:1.3,
					xmin=0, xmax=0.8,
					ymin=-0.7, ymax=0.7,
					width=6cm, height=6cm,
					legend cell align=left,
					legend pos=  north west,
					legend style={draw=none,fill=none,name=legend},
					]
					\addplot[blue,thick,domain=0:1]{-0.5+3*x*(2-3*x)};
					\addplot[thick] coordinates {
						(0,0)
						(1,0)
					};
					\addplot[dashed] coordinates {
						(0.099,0)
						(0.099,-0.7)
					};
					\addplot[dashed] coordinates {
						(0.566,0)
						(0.566,-0.7)
					};
					\node at (axis cs:0.33,0.2) {$I_1(\theta)\nearrow$ };
					\node at (axis cs:0.7,-0.3) {$I_1(\theta)\searrow$};
					\node at (axis cs:0.1,-0.3) {$I_1(\theta)\searrow$};
				\end{axis}
			\end{tikzpicture}
		\end{minipage} 
		\begin{minipage}[c]{0.45\textwidth}
			\centering
			\begin{tikzpicture}
				\begin{axis}[
					tick label style={font=\scriptsize},
					xlabel={$\theta$},
					ylabel={ $I_1(\theta)$},
					ytick={-0.5,0,  0.5},
					yticklabels={-0.5,0, 0.5},
					xtick={0, 0.099,0.566},
					xticklabels={0, $\theta_1$, $\theta_2$},
					no markers,
					line width=0.3pt,
					cycle list={{red,solid}},
					samples=200,
					smooth,
					domain=-2:1.3,
					xmin=0, xmax=0.8,
					ymin=-0.5, ymax=0.5,
					width=6cm, height=6cm,
					legend cell align=left,
					legend pos=  north west,
					legend style={draw=none,fill=none,name=legend},
					]
					\addplot[red,thick,domain=0:1]{1.5*x*(2+3*x)+3.5*ln(1-x)};
					\addplot[thick] coordinates {
						(0,0)
						(1,0)
					};
					\addplot[dashed] coordinates {
						(0.099,-0.03)
						(0.099,-0.7)
					};
					\addplot[dashed] coordinates {
						(0.566,0.22)
						(0.566,-0.7)
					};
					\node [coordinate,pin=90:{local min}]
					at (axis cs:0.2,0) {};
					\node [coordinate,pin=245:{local max}]
					at (axis cs:0.735,0) {};
				\end{axis}
			\end{tikzpicture}
		\end{minipage}
		\caption{ The curve of $I_1^{\prime }(\protect\theta)$ and $I_1(\protect%
			\theta).$ }
		\label{local_figure}
	\end{figure}
	
	Therefore, the information designer will choose the optimal $k,\theta $, and 
	$\theta _{0}$ (with $\theta =F(k)$ and $\theta _{0}=G(0)=F(0^{-})$) such
	that 
	\begin{align}
		\max_{\theta _{0},k,\theta }& \left( \alpha n(1-k)(1-\theta )\left(
		\sum_{i=1}^{n-1}\frac{-\theta ^{i}}{i}-\log (1-\theta )+\left(
		\sum_{i=1}^{n-1}\frac{\theta _{0}^{i}}{i}+\log (1-\theta _{0})\right)
		\right) \right)  \notag \\
		& +(\alpha -1)+(1-\alpha )(1-k)\theta ^{n}+(1-\alpha )k\theta _{0}^{n}
		\label{finiteinfo} \\
		\text{s.t. }& (1-k)\left( (1-\theta )(1-\log (1-\theta )+\log (1-\theta
		_{0})\right) +k(1-\theta _{0})=p,  \notag \\
		& k\geq 0,\quad \theta _{0}\geq 0,\quad \text{and}\quad \theta _{0}\leq
		\theta \leq 1.  \notag
	\end{align}
	The solution to this finite-dimensional optimization problem is standard and
	we present the details in Appendices \ref{fsw} and \ref{fbo}. In fact, the
	solution to this finite-dimensional problem is unique. Since the information
	design problem has at least one solution by Lemma \ref{existence}, this
	unique solution is globally optimal; this concludes the proofs of Theorems %
	\ref{result_seller} and \ref{result_buyer}.
	
	We now wish to briefly comment on how our approach differs from that of \cite%
	{roesler2017buyer}. When there is only one buyer, a posted price mechanism
	is optimal. Hence, a signal distribution matters only in determining the
	optimal posted price, i.e., $\varphi ^{-1}\left( 0\right) $. This is how 
	\cite{roesler2017buyer} are able to argue that it entails no loss of
	generality to focus on a class of Pareto distributions, and the two-point
	Pareto distribution with virtual values $0$ and $1$ is the buyer-optimal
	information structure. When there are multiple buyers, we may take the
	second-price auction with an optimal reserve price to be an extension of the
	posted price mechanism. Unlike \cite{roesler2017buyer}, however, a
	second-price auction with a reserve price need not be an optimal auction
	against an irregular signal distribution.
	
	To explain, while an irregular signal distribution can be ironed into a
	regular signal distribution, the optimal expected revenue under the
	irregular distribution is the same as the revenue of a second-price auction
	with a reserve price under the regular distribution obtained from ironing
	rather than the irregular distribution.\footnote{%
		For the sake of completeness, we provide an example in Appendix \ref%
		{irexample} to illustrate this point; see also the working paper version of 
		\cite{monteiro2010optimal} for a similar example. In particular, our example
		admits a density and nonnegative (ironed) virtual values for every signal.}
	We show in Section \ref{continuous prior} that when the seller is committed
	to using a second-price auction with reserve price $0$, then for $n=2$, the
	seller-worst information structure is the binary prior (i.e., full
	revelation). Hence, contrary to Theorem \ref{result_seller}, the information
	structure which minimizes the seller's revenue in a second-price auction is
	an irregular distribution.
	
	\section{Asymmetric information structures}
	
	\label{Asymmetric information structures}
	
	So far we have assumed that the information designer chooses the same signal
	distribution across all buyers. A natural question to ask is whether the
	information designer can do better by choosing different signal
	distributions for different buyers. The short answer is \textquotedblleft
	No\textquotedblright\ for the seller-worst information design problem and
	\textquotedblleft Yes\textquotedblright\ for the buyer-optimal information
	design problem. We elaborate further below.
	
	To allow for asymmetric signal distributions, we first reformulate the
	information design problems. Let $M(x\vert\mathbf{G})=\{i\in N|\hat{\varphi}
	(x_{i}|G_{i})\geq \max_{j}\{\hat{\varphi}(x_{j}|G_{j}),0\}\}$ be the set of
	buyers who have the largest nonnegative virtual value for a given signal
	realization $x$, where $\mathbf{G}$ stands for $\{G_{i}\}_{i=1}^{n}$; and
	let $M^{\prime }(x\vert\mathbf{G})=\{i\in N|x_{i}\geq \max_{j}{x_{j}}
	,\forall j\in M(x\vert\mathbf{G})\}$ be the set of buyers who not only have
	the largest virtual value but also the largest signal among those with the
	highest virtual value for a given signal realization $x$. Then, the optimal
	auction allocation rule for buyer $i$ when all buyers report their signals
	is given by the following: 
	\begin{equation*}
		q_{i}(x_{i},x_{-i}\vert\mathbf{G})= 
		\begin{cases}
			\frac{1}{|M^{\prime }(x\vert\mathbf{G})|}, & \mbox{if }i\in M^{\prime
			}(x\vert\mathbf{G}); \\ 
			0, & \mbox{if }i\not\in M^{\prime }(x\vert\mathbf{G}).%
		\end{cases}%
	\end{equation*}
	
	We now study the following information design problem: 
	\begin{align}
		\max_{\{G_{i}\}_{i=1}^{n}}\int_{[0,1]^{n}}& \sum_{i=1}^{n}\left( \alpha
		x_{i}-\hat{\varphi}(x_{i}|G_{i})\right) q_{i}(x_{i},x_{-i}\vert\mathbf{G}
		)\prod_{i=1}^{n}\left( \mathrm{d}G_{i}(x_{i})\right)  \label{info2} \\
		\text{s.t. }& \int_{0}^{1}1-G_{i}(x_{i})\mathrm{d}x_{i}=p,\quad\forall
		i=1,\cdots ,n.  \notag
	\end{align}
	While we allow for asymmetric signal distributions in the information design
	problem (\ref{info2}), we still assume that the agents' binary priors have
	the same mean $p$. We relegate the discussion about asymmetric priors to
	Section \ref{apm}.
	
	\subsection{The seller-worst case}
	
	In this section, we show that the optimal symmetric seller-worst information
	structure in Theorem \ref{result_seller} remains the unique optimal
	solution, even if the information designer can choose an asymmetric
	information structure. We first document the existence of the solution to
	problem \eqref{info2} when $\alpha =0$; see Lemma \ref{existence2}. The
	proof is similar to the proof of Lemma \ref{existence} and is therefore
	omitted.
	
	\begin{lemma}
		\label{existence2}For the problem in \eqref{info2} with $\alpha =0$, an
		optimal solution exists.
	\end{lemma}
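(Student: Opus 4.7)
The plan is to mimic the argument for Lemma \ref{existence} in the symmetric case, but now on the product space of feasible asymmetric signal distributions, and to check carefully that lower semicontinuity of the seller's expected revenue survives the extension.

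First I would set up the feasible set. For each buyer $i$, the marginal constraint $\int_0^1 (1-G_i(x_i))\mathrm{d}x_i = p$ cuts out a closed subset of the Borel probability measures on $[0,1]$ (continuity of the mean functional in the weak$^\ast$ topology), and this subset is contained in the weak$^\ast$-compact set of all such probability measures. Hence each $\mathcal{G}_H$ is weak$^\ast$-compact, and the asymmetric feasible set $\prod_{i=1}^n \mathcal{G}_H$ is compact under the product weak$^\ast$ topology by Tychonoff. Moreover, since the information designer's constraint in \eqref{info2} is a marginal constraint for each buyer separately, the feasible set is exactly this product.

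Next I would verify that the objective with $\alpha=0$, namely minus the expected highest nonnegative ironed virtual value, is upper semicontinuous on $\prod_i \mathcal{G}_H$. Equivalently, the seller's expected revenue $\int \sum_i \hat\varphi(x_i|G_i)\,q_i(x)\,\prod_j \mathrm{d}G_j(x_j)$ must be lower semicontinuous jointly in $(G_1,\ldots,G_n)$. This is the asymmetric counterpart of Theorem 2 of \cite{monteiro2015note} which was invoked in Lemma \ref{existence}. I would apply Monteiro--Svaiter buyer-by-buyer: for a fixed profile of the other $G_{-i}$, their lower semicontinuity argument gives lower semicontinuity in $G_i$ of buyer $i$'s contribution to revenue; then a standard diagonal/joint convergence argument, combined with the fact that the product measure $\prod_j G_j^{(m)}$ weak$^\ast$-converges to $\prod_j G_j$ when each $G_j^{(m)}\to G_j$ weak$^\ast$, promotes separate lower semicontinuity to joint lower semicontinuity. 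Alternatively, one can just reread the Monteiro--Svaiter proof and observe that it already handles possibly distinct marginal distributions, since the key step only uses that the optimal auction allocation rule is a selection from the argmax of nonnegative virtual values.

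Finally, given upper semicontinuity of the $\alpha=0$ objective on the compact product feasible set, the extreme value theorem (Weierstrass) yields the existence of a maximizer, which is an optimal asymmetric seller-worst information structure.

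I expect the main obstacle to be the joint lower semicontinuity step, because tie-breaking at signal profiles where several asymmetric ironed virtual values coincide can move revenue discontinuously along approximating sequences. The remedy is the usual one: the set of tie profiles has product measure zero in the limit (the marginals have at most countably many atoms, and weak$^\ast$ convergence together with the Portmanteau theorem controls the remaining atomic contributions), so tie-breaking does not destroy the one-sided semicontinuity needed for the minimization problem.
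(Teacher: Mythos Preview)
Your proposal is correct and follows essentially the same route the paper indicates: the paper simply states that the proof ``is similar to the proof of Lemma \ref{existence} and omitted,'' and the symmetric proof for $\alpha=0$ there is exactly compactness of $\mathcal{G}_H$ plus lower semicontinuity of revenue from \cite{monteiro2015note} plus the extreme value theorem. Your extension to the product space $\prod_i \mathcal{G}_H$ via Tychonoff and the asymmetric version of Monteiro--Svaiter is the natural analogue, and your extra care about joint lower semicontinuity and tie-breaking only makes explicit what the paper leaves implicit.
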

	
	The following Lemma \ref{nonneg}, corresponds to Lemmas \ref{almost_nonneg}
	and \ref{regularity}. The proof is similar so we provide only a sketch of it
	in Appendix \ref{proofnonneg}.
	
	\begin{lemma}
		\label{nonneg}Any profile of optimal signal distributions $\left\{
		G_{i}\right\} _{i=1}^{n}$ which solves the information design problem in %
		\eqref{info2} must be regular and induce nonnegative virtual values almost
		everywhere on $\left[ 0,1\right] $.
	\end{lemma}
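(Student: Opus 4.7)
The plan is to adapt the proofs of Lemmas~\ref{almost_nonneg} and~\ref{regularity} to the asymmetric setting by perturbing one buyer's signal distribution at a time. The key observation is that, holding $(G_j)_{j\ne i}$ fixed, the seller's expected revenue $\E[\max_j\{\hat\varphi(x_j|G_j),0\}]$ depends on $G_i$ only through the marginal CDF $\hat F_i^+$ of buyer $i$'s nonnegative ironed virtual value. By independence of signals,
\begin{equation*}
R(G_i)=\int_0^1\bigl[1-H_W(t)\hat F_i^+(t)\bigr]\,\md t,
\end{equation*}
where $H_W$ is the CDF of $\max_{j\ne i}\{\hat\varphi(x_j|G_j),0\}$. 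Hence any perturbation of $G_i$ that shifts $\hat F_i^+$ upward pointwise weakly decreases $R(G_i)$, strictly so whenever $H_W$ is positive on the region of the shift.

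For the nonnegativity claim, I would suppose an optimal $G_i$ induces negative ironed virtual values on a set of positive measure. Mirroring Figure~\ref{nonneg_figure}, first pool this mass into a single atom placed so that the mean is preserved; this leaves $\hat F_i^+$ and hence $R(G_i)$ unchanged. Next, following Figure~\ref{diff_figure}, spread this atom downward into a Pareto segment with virtual value zero over a strictly wider interval. The mean constraint forces the probability at the highest virtual value to decrease, so the new $\hat F_i^+$ is strictly pointwise larger on a subinterval of $(0,1)$. Because each $G_j$, $j\ne i$, is feasible with mean $p\in(0,1)$ and support in $[0,1]$, $G_j$ is not a point mass at $1$, so $\hat F_j^+(t)>0$ for $t$ in a left-neighborhood of $1$; this makes $H_W(t)>0$ on a set intersecting where the shift takes place, yielding a strict decrease of $R(G_i)$---a contradiction.

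For regularity, given Step~1, assume some optimal $G_i$ is irregular on a maximal ironed interval $[a,b]$ with ironed virtual value $k\ge 0$. Replacing $G_i|_{[a,b]}$ with the truncated Pareto distribution of constant virtual value $k$ on $[a,b]$ matching $G_i$ at the endpoints preserves both the overall mean and the ironed virtual value function, leaving $R(G_i)$ unchanged. Iterating over all ironed intervals produces a regular reformulation of the optimum. Combining this with the spreading perturbation of Step~1 then shows that any $G_i$ that is not already regular can be strictly improved upon, so regularity is necessary at every optimum.

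The main obstacle is the strict inequality in Step~1 in the asymmetric environment: the pointwise increase in $\hat F_i^+$ must overlap a region where $H_W$ is strictly positive. In the symmetric case this overlap is immediate from the symmetry of the buyers, but in the asymmetric case it must be argued from the individual mean constraints $\int_0^1 (1-G_j(x))\,\md x=p$ for $j\ne i$, which rule out $G_j$ being concentrated at $1$ and hence guarantee $\hat F_j^+$ is strictly positive in a left-neighborhood of the top of the virtual value support. The remaining computations then parallel those behind Figures~\ref{nonneg_figure} and~\ref{diff_figure} in the symmetric analysis.
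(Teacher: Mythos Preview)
Your approach mirrors the paper's: perturb each buyer's distribution using the same two-step constructions behind Figures~\ref{nonneg_figure}--\ref{regular_figure}. The paper's own proof is only a sketch (``by similar arguments''), so your attempt to spell out the asymmetric case via the formula $R(G_i)=\int_0^1[1-H_W(t)\hat F_i^+(t)]\,\md t$ is more detailed than what the paper provides. Two points, however, need correction.

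First, in the regularity step you claim that replacing $G_i|_{[a,b]}$ by the Pareto segment of constant virtual value $k$ matching $G_i$ at the endpoints ``preserves the overall mean.'' This is false: by Lemma~\ref{pre} that Pareto segment first-order stochastically dominates $G_i$ on $[a,b]$, so the mean strictly \emph{increases} whenever $G_i$ is genuinely irregular there. The paper's construction (as in the proof of Lemma~\ref{regularity}) compensates by placing an atom at $x=0$ to restore the mean, which reintroduces a negative virtual value, and then invokes the nonnegativity step to obtain a strict improvement. Without that compensation your ``regular reformulation of the optimum'' is infeasible, and the subsequent sentence about combining with Step~1 has no content.

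Second, your strictness argument for nonnegativity does not close. The pointwise increase $\hat F_i^{+,\text{new}}-\hat F_i^{+,\text{old}}$ is supported on virtual values in $[0,\hat\varphi(x_1\mid G_i))$, where $x_1$ is pinned down by the mean constraint; this interval need not approach $1$. You argue only that $H_W(t)>0$ in a left-neighborhood of $1$, which does not establish overlap: if every $G_j$ with $j\ne i$ is degenerate at $p$, then $H_W(t)=0$ for all $t<p$, and a shift confined to $[0,p)$ yields zero change in $R$. What is needed is that each $V_j=\max\{\hat\varphi_j,0\}$ has support reaching below $\hat\varphi(x_1\mid G_i)$, and the mean constraint alone does not force this for arbitrary $(G_j)_{j\ne i}$. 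The paper sidesteps the issue by perturbing \emph{all} buyers simultaneously---so $\max_j\{\hat\varphi_j,0\}$ decreases pointwise in the full signal profile, strictly on a positive-probability event---rather than holding $(G_j)_{j\ne i}$ fixed; your one-buyer-at-a-time route would require the additional argument you flag as ``the main obstacle,'' and the resolution you propose does not deliver it.
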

	
	Then, as in Lemma \ref{change_variables}, it follows from Lemma \ref{nonneg}
	that the asymmetric information design problem can be reformulated as: 
	\begin{align}
		\max_{\{F_{i}\}_{i=1}^{n}}& \int_{0}^{1}\prod_{i=1}^{n}F_{i}(k)\mathrm{d} k-1
		\label{adp} \\
		\text{s.t. }& \int_{0}^{1}(1-F_{i}(k))(1-\log (1-F_{i}(k)))\mathrm{d}
		k=p,\quad \forall i=1,\cdots ,n.  \notag
	\end{align}
	
	We now state and prove the following theorem.
	
	\begin{theorem}
		\label{asy_seller}The unique seller-worst information structure in Theorem %
		\ref{result_seller} remains the unique seller-worst information structure
		which solves the problem in \eqref{info2} with $\alpha =0$.
	\end{theorem}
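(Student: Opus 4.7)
The plan is a symmetrization argument carried out in the space of virtual value distributions. By Lemmas \ref{existence2} and \ref{nonneg}, any optimum of \eqref{info2} with $\alpha=0$ consists of regular $G_i$ with nonnegative virtual values, so I can apply the change of variable from Lemma \ref{change_variables} componentwise and recast the problem as \eqref{adp}: maximize $\int_0^1 \prod_{i=1}^n F_i(k)\,\mathrm{d}k$ over tuples of CDFs on $[0,1]$ satisfying $\int_0^1 h(F_i(k))\,\mathrm{d}k = p$ for every $i$, where $h(t):=(1-t)(1-\log(1-t))$. My goal is to show that any non-symmetric feasible profile is strictly dominated by a symmetric feasible profile, so that uniqueness in Theorem \ref{result_seller} forces the optimum to equal $G^s$.

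Given an asymmetric candidate $(F_1,\ldots,F_n)$, I form $\bar F(k):=\tfrac{1}{n}\sum_{i=1}^n F_i(k)$ and invoke two pointwise inequalities. First, AM-GM yields $\bar F(k)^n \geq \prod_i F_i(k)$, strictly at every $k$ where the $F_i(k)$ are not all equal. Second, direct calculation gives $h'(t)=\log(1-t)<0$ and $h''(t)=-1/(1-t)<0$ on $[0,1)$, so $h$ is strictly decreasing and strictly concave; Jensen's inequality then delivers $\int_0^1 h(\bar F)\,\mathrm{d}k \geq \tfrac{1}{n}\sum_i \int_0^1 h(F_i)\,\mathrm{d}k = p$, strictly whenever the $F_i$ differ on a positive-measure set. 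Intuitively, averaging reduces the seller's option value in selecting the highest virtual value (improving the objective) while only weakly tightening the mean constraint, exactly matching the intuition flagged in the introduction.

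To restore feasibility I perturb $\bar F$ upward. Define $F^\ast(k):=\min\{1,\bar F(k)+\varepsilon\}$ and observe that $\varepsilon \mapsto \int_0^1 h(F^\ast)\,\mathrm{d}k$ is continuous and nonincreasing in $\varepsilon\geq 0$, equal to $\int h(\bar F)\geq p$ at $\varepsilon=0$ and decreasing toward $0$, so the intermediate value theorem supplies some $\varepsilon^\ast\geq 0$ with $\int h(F^\ast)\,\mathrm{d}k = p$. Because $F^\ast\geq \bar F$ pointwise, $\int(F^\ast)^n\,\mathrm{d}k \geq \int \bar F^n\,\mathrm{d}k \geq \int \prod_i F_i\,\mathrm{d}k$, and if the original profile is genuinely asymmetric then both inequalities are strict. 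The symmetric profile $(F^\ast,\ldots,F^\ast)$ is then feasible and strictly improves the objective, contradicting optimality. Hence every optimum must be symmetric, and Theorem \ref{result_seller} identifies it uniquely with $G^s$.

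The main delicate step will be the strictness verification: I must argue that asymmetry on a positive-measure set propagates to strict inequality in \emph{both} AM-GM (pointwise, then integrated) and Jensen, which follows from strict concavity of $h$ together with right-continuity of CDFs ensuring that any disagreement among the $F_i$ occupies positive Lebesgue measure. A minor technicality is that the perturbed $F^\ast$ must correspond to a legitimate signal distribution, but this is automatic: the inverse change of variable via \eqref{cv} produces a regular signal distribution with mean $p$ from any CDF on $[0,1]$ satisfying the reformulated mean constraint, so nothing outside \eqref{adp} needs to be checked.
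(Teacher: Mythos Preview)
Your proof is correct and follows essentially the same symmetrization argument as the paper: average the virtual value distributions, invoke AM--GM for the objective and strict concavity of $h$ for the mean constraint, then perturb the average upward to restore feasibility. The only cosmetic difference is your choice of perturbation---you shift $\bar F$ by a constant $\varepsilon$ and cap at $1$, whereas the paper flattens $\bar F$ to a constant on an initial segment $[0,k_1)$---but both yield $F^\ast\geq \bar F$ pointwise and preserve the required monotonicity, so the conclusion goes through identically.
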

	
	\begin{proof}
		For any profile of virtual value distributions $\{F_{i}\}_{i=1}^{n}$, let $%
		F(k)\equiv \frac{1}{n}\sum_{i=1}^{n}F_{i}(k)$, and denote by $F$ a symmetric
		signal distribution profile where each buyer receives his signal according 
		to $F$.
		
		First, the symmetric signal distribution $F$ yields weakly less revenue than
		$\{F_{i}\}_{i=1}^{n}$ does. For any $k$, by the inequality of arithmetic and
		geometric means, we have  
		\begin{equation}
			F^{n}(k)=\left( \frac{1}{n}\sum_{i=1}^{n}F_{i}(k)\right) ^{n}\geq
			\prod_{i=1}^{n}F_{i}(k).  \label{ag}
		\end{equation}
		Moreover, the equality in (\ref{ag}) holds if and only if $F_{i}(k)=F$ for 
		all $i$. Integrating both sides yields $\int_{0}^{1}F^{n}(k)\mathrm{d} k\geq
		\int_{0}^{1}\prod_{i=1}^{n}F_{i}(k)\mathrm{d}k$. Hence, $F$ yields weakly 
		less revenue than $\{F_{i}\}_{i=1}^{n}$ does and strictly less revenue when $%
		F_{i}\neq F$ for some $i$.
		
		Second, $F$ generates a weakly higher mean than $p$. Indeed, the integral 
		term in the mean constraint is strictly concave with respect to $F(k)$. That
		is, $I^{\prime \prime }(\theta )=-1/(1-\theta )<0$ where $I(\theta 
		)=(1-\theta )(1-\log (1-\theta ))$. Next, define  
		\begin{equation*}
			\hat{F}\left( k\right) = 
			\begin{cases}
				F(k_{1}^{-}), & \mbox{if }k\in \lbrack 0,k_{1}); \\ 
				F(k), & \mbox{if }k\in \lbrack k_{1},1],%
			\end{cases}%
		\end{equation*}
		for some $k_{1}$ so that $\hat{F}$ satisfies the constraint in (\ref{adp}). 
		Since $\hat{F}(k)\geq F(k)$ for any $k$, it follows that $\int_{0}^{1}\hat{F}%
		^{n}(k)\mathrm{d}k\geq \int_{0}^{1}F^{n}(k)\mathrm{d}k$ and $\hat{F}$ yields
		less revenue than $F$. Therefore, the improvement from $\{F_{i}\}_{i=1}^{n}$
		to $\hat{F}$ is strict when $F_{i}\neq F$ for some $i$. It follows that the 
		symmetric seller-worst information structure in Theorem \ref{result_seller} 
		remains the unique solution to the problem in (\ref{adp}) with $\alpha =0$. 
	\end{proof}
	
	\subsection{The buyer-optimal case}
	
	\label{asyboc}
	
	For the buyer-optimal information design problem, we demonstrate that an
	asymmetric signal distribution can strictly improve upon the optimal
	symmetric signal distribution in Theorem \ref{result_buyer}. We demonstrate
	such an improvement for the case of $n=2$ with $p<r_{b}$, and the case of $%
	n\rightarrow \infty $.
	
	\begin{proposition}
		\label{asy_buyer}For $n=2$ with $p<r_{b}$ or for $n\rightarrow \infty $,
		there exist asymmetric information structures which can strictly improve
		upon the optimal symmetric signal distribution in Theorem \ref{result_buyer}
		and Corollary \ref{equivalence}, respectively.
	\end{proposition}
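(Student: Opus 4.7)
The plan is to exhibit explicit asymmetric information structures in each regime and directly show that they yield strictly higher expected buyer surplus than the corresponding symmetric buyer-optimal distribution from Theorem \ref{result_buyer} and Corollary \ref{equivalence}.

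For $n\to\infty$, by Corollary \ref{equivalence} the symmetric buyer-optimal converges to no disclosure and the buyers earn zero surplus in the limit, so it suffices to exhibit a feasible asymmetric profile with strictly positive buyer surplus uniformly in $n$. I would give buyer $1$ a strictly regular continuous distribution with mean $p$---concretely, $G_1$ uniform on $[\max\{0,2p-1\},\min\{1,2p\}]$---and give each of buyers $2,\ldots,n$ the degenerate distribution $\delta_p$. Each marginal has mean $p$, so the profile is feasible. Because each of buyers $2,\ldots,n$ has virtual value identically $p$, the optimal auction reduces to a single-buyer Myerson problem for buyer $1$ against a constant competing virtual value $p$: buyer $1$ wins iff $\varphi_1(X_1)\ge p$ and pays the threshold signal $x^{\ast}$ satisfying $\varphi_1(x^{\ast})=p$. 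Since $\varphi_1$ is continuous and strictly increasing, $x^{\ast}<E[X_1\mid X_1\ge x^{\ast}]$, so buyer $1$'s information rent is strictly positive; a direct computation with the uniform gives expected buyer $1$ surplus exactly $\min\{p,1-p\}/16$, independent of $n$. Buyers $2,\ldots,n$ earn zero surplus in every realization, so total buyer surplus stays bounded away from zero as $n\to\infty$, strictly beating the symmetric limit.

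For $n=2$ with $p<r^b$, by Theorem \ref{result_buyer} the symmetric buyer-optimal $G^b$ places mass $\theta_0>0$ on signal $0$, producing a joint no-trade event $\{X_1=X_2=0\}$ of probability $\theta_0^{2}>0$. I would hold buyer $1$ fixed at $G^b$ and perturb buyer $2$'s distribution $G_2^{\epsilon}$ by lowering the mass at $0$ from $\theta_0$ to $\theta_0-\epsilon$, adjusting the Pareto support lower bound $\tilde x^b(\epsilon)$ so that the marginal mean remains $p$ and the virtual-value structure stays $\{0,1\}$. The Myerson revenue equals $P(\max_i\varphi_i=1)=m_1+m_2-m_1m_2$, where $m_i$ is the mass on signal $1$ under $G_i$. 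Holding $m_1$ fixed and using $m_2(\epsilon)=p/(1-\log\tilde x^b(\epsilon))$ together with $\tilde x^b(1-\log\tilde x^b)=p/(1-\theta_0+\epsilon)$ shows that $m_2$ is strictly decreasing in $\epsilon$, so the revenue $R(\epsilon)$ strictly decreases. It remains to control the first-order change in total surplus $V(\epsilon)=1-\int_0^{1}G^b(t)G_2^{\epsilon}(t)\,dt$; the first-order optimality of $(G^b,G^b)$ among \emph{symmetric} profiles makes the Lagrangian gradient vanish in symmetric directions, and the asymmetric perturbation I construct lies outside that tangent space, so I would verify that $V'(0)\ge R'(0)$, yielding $\frac{d}{d\epsilon}(V-R)\big|_{\epsilon=0}>0$ and strict improvement for small $\epsilon>0$.

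The principal obstacle is the $n=2$ first-order comparison. The change $V'(0)$ aggregates three simultaneous reallocations of buyer $2$'s mass---mass leaves the zero atom (eliminating no-trade events and raising $V$), the Pareto support extends downward (moving mass to lower signals and lowering $V$), and mass leaves the atom at $1$ (lowering $V$)---with opposing signs on the contribution to $V$. Showing that the net gain strictly exceeds the revenue reduction $|R'(0)|$ requires careful bookkeeping and exploits the fact that $(G^b,G^b)$ maximizes $V-R$ only along symmetric perturbations, so that the asymmetric gradient is strictly positive. In the $n\to\infty$ case this obstacle is absent because the symmetric benchmark is already degenerate and the asymmetric construction only has to beat zero.
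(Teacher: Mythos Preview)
Your $n\to\infty$ construction is correct and parallels the paper's: give one buyer a nondegenerate feasible distribution and the rest $\delta_p$, then observe that this buyer retains a strictly positive information rent independent of $n$, which beats the symmetric limit of zero. The paper uses a truncated Pareto with atoms at $0$ and $1$ (virtual values $\{p,1\}$) instead of your uniform, but the mechanism is identical and your computation of $\min\{p,1-p\}/16$ is clean.

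Your $n=2$ argument, however, has a genuine gap. The first-order perturbation you propose cannot succeed, and the reason is exactly the symmetry you try to exploit. The objective $(V-R)(G_1,G_2)$ is invariant under swapping buyers, so at the symmetric point $(G^b,G^b)$ the two partial derivatives along the constraint manifold coincide: $\partial_{G_1}(V-R)=\partial_{G_2}(V-R)$. The symmetric first-order condition says their \emph{sum} vanishes, hence each one vanishes separately. Your perturbation---holding $G_1=G^b$ and moving $G_2$ along its mean constraint---therefore has $(V-R)'(0)=0$, not $>0$. In particular your correct observation that $R'(0)<0$ forces $V'(0)=R'(0)<0$ as well; the two first-order effects cancel exactly. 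The sentence ``the asymmetric perturbation I construct lies outside that tangent space, so the asymmetric gradient is strictly positive'' is the mistaken step: at a symmetric critical point of a symmetric function, \emph{every} direction has zero first-order effect, asymmetric or not.

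To establish strict improvement you need a second-order (saddle-point) argument or a direct comparison. The paper does the latter: it restricts both buyers to the three-parameter family (mass at $0$, Pareto block with virtual value $0$, mass at $1$), writes out the Lagrangian in $(\theta_{01},\theta_{02},\theta_1,\theta_2)$, observes that $\partial\mathcal{L}/\partial\theta_{01}$ is monotone in $\theta_{01}$ so the optimum is at a corner $\theta_{01}\in\{0,\theta_{02}\}$, and then verifies numerically at $p=0.4$ that the asymmetric corner $\theta_{01}=0$, $\theta_{02}=0.3$ yields buyer surplus $0.3107$ versus $0.3082$ for the symmetric optimum. The paper's argument is thus partly computational rather than a pure perturbation argument, precisely because the first-order route is blocked.
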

	
	\begin{proof}
		See Appendix \ref{proofasybuyer}. 
	\end{proof}
	
	To see the main idea, for the case where $n=2$ with $p<r_{b}$, we focus our
	search for improvement on the signal distributions which put a positive mass
	only on signal $0$, on signals with virtual value $0$, and on signals with
	virtual value $1 $ for each buyer. Since each buyer's virtual value
	distribution needs to satisfy the mean constraint, it is uniquely determined
	by its probability assigned to signal $0$. We then optimize within the
	specific class of information structures with these two probabilities
	assigned to signal $0$ (one for each buyer). It turns out that within this
	class of distributions, the two buyers' aggregate surplus is maximized when
	one buyer's signal distribution assigns positive probability on signal $0$,
	whereas the other buyer's signal distribution assigns zero probability.
	Moreover, the former buyer benefits, while the latter one loses relative to
	the symmetric buyer-optimal information; see Appendix \ref{proofasybuyer}
	for more details.
	
	For the case where $n\rightarrow \infty $, we also consider a specific class
	of information structures in which (i) buyer $i$'s signal distribution puts
	positive mass only on signal $0$, on signals with virtual value $p$, and on
	signals with virtual value $1$; and (ii) all the other buyers' signal
	distributions are a degenerate distribution which puts the entire mass on
	signal $p$. We will construct an asymmetric information structure in which
	no buyer is worse off and some buyer is strictly better off, relative to the
	symmetric buyer-optimal information structure.
	
	Proposition \ref{asy_buyer} shows that the buyer-optimal signal
	distributions, whether there exists one or many, are asymmetric in general.
	Also in general, the total surplus as a max function of buyers' value favors
	dispersion. Indeed, while averaging the signal distributions ($F=\frac{1}{n}
	F_{i}$) can reduce the revenue, it might reduce the total surplus. The issue
	is reminiscent of the result in \cite{bergemann2007information} which shows
	that a seller-optimal information structure is asymmetric across all buyers.
	In \cite{bergemann2007information}, an information structure is chosen to
	maximize the expected nonnegative virtual surplus, which also favors
	asymmetry/dispersion. Therefore, relative to the optimal symmetric
	information structure, an asymmetric information structure increases the
	total surplus as well as the seller's revenue. Proposition \ref{asy_buyer}
	demonstrates that we can choose an asymmetric information structure which
	increases the expected total surplus more than the expected (nonnegative)
	virtual surplus.\footnote{%
		However, the existence or the exact shape of an asymmetric buyer-optimal
		information structure remains unknown to us at this time.}
	
	\section{Discussion}
	
	\label{discussion}
	
	So far we have studied our information design problem with ex ante symmetric
	binary priors and a Myersonian optimal auction. In this section, we will
	first discuss the issues with asymmetric priors and continuous priors. Next,
	we will discuss the tightness of our seller-worst revenue upper bound for
	the revenue guarantee of informationally robust auctions.
	
	\subsection{Asymmetric prior mean}
	
	\label{apm}
	
	Suppose that each buyer $i$ has his own prior mean $p_{i}$. We discuss only
	the seller-worst problem for which we know a solution exists. The
	seller-worst information problem is to maximize the same objective function
	in \eqref{adp} with each buyer $i$'s individual mean constraint now being: 
	\begin{equation*}
		\int_{0}^{1}(1-F_{i}(k))(1-\log (1-F_{i}(k))+\log (1-F_{i}(0^{-})))\mathrm{d}
		k=p_{i}.
	\end{equation*}
	In this case, our arguments for regularity and nonnegative virtual values
	are still valid except at zero. For $n=2$, we obtain the seller-worst
	information structure in the following proposition.
	
	\begin{proposition}
		\label{ap}Suppose that $n=2$ and each buyer $i$ has a prior mean $p_{i}$.
		Then, the seller-worst information structure is a profile of signal
		distributions $\{G_1,G_2\}$ such that, for each buyer $i$, $G_{i}$ is a
		truncated Pareto distribution as follows: 
		\begin{equation*}
			G_{i}(x)= 
			\begin{cases}
				1-\frac{x_{i}}{x}, & \text{ if }x\in \lbrack x_{i},1); \\ 
				1, & \text{ if }x=1,%
			\end{cases}%
		\end{equation*}
		where $x_{i}$ is determined by buyer $i$'s mean constraint.
	\end{proposition}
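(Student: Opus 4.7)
My plan is to mirror the symmetric-case derivation in Section \ref{Outline of the solution}. By Lemma \ref{nonneg}, any optimal $G_i$ is regular with nonnegative virtual values almost everywhere on $[0,1]$, so in particular $G_i(0) = 0$. Applying the change of variables from signals to virtual-value distributions buyer by buyer, the reformulated problem \eqref{adp} for $n = 2$ becomes
\[ \max_{F_1, F_2} \int_0^1 F_1(k) F_2(k) \, \md k - 1 \quad \text{s.t.} \quad \int_0^1 (1-F_i(k))(1-\log(1-F_i(k))) \, \md k = p_i, \ i = 1, 2, \]
where each $F_i$ is a CDF on $[0,1]$ satisfying $F_i(0^-) = 0$ and $F_i(1) = 1$. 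This is an isoperimetric optimal-control problem in two coupled state variables.

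Next I would attach Lagrange multipliers $\lambda_1, \lambda_2$ to the two mean constraints and invoke Theorem 4.2.1 of \cite{van2004isoperimetric}. Using the identity $\tfrac{d}{dF}[(1-F)(1-\log(1-F))] = \log(1-F)$, the pointwise Euler-Lagrange conditions at each $k$ in the interior of the common support read
\[ F_2(k) - \lambda_1 \log(1-F_1(k)) = 0, \qquad F_1(k) - \lambda_2 \log(1-F_2(k)) = 0. \]
Substituting the second equation into the first yields a single scalar equation $F_1 = \lambda_2 \log(1 - \lambda_1 \log(1-F_1))$. A monotonicity argument analogous to Lemma \ref{stablesoln} shows that this equation has at most one solution in $(0,1)$ for any fixed pair $(\lambda_1, \lambda_2)$. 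Consequently each $F_i$ is constant on the interior of its support; combined with $F_i(1) = 1$ this forces $F_i$ to be a two-step CDF of the form $F_i(k) = \theta_i$ on $[k_i^\star, 1)$ and $F_i(1) = 1$.

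Plugging the two-step form back in, the objective becomes $\theta_1 \theta_2 (1 - \max\{k_1^\star, k_2^\star\})$, while each mean constraint reduces to $k_i^\star + (1-k_i^\star)(1-\theta_i)(1-\log(1-\theta_i)) = p_i$. A finite-dimensional analysis, hinging on the strict monotonicity of $f(\theta) \equiv \theta/[1 - (1-\theta)(1-\log(1-\theta))]$ on $(0,1)$, then shows that for $n = 2$ the optimum has $k_1^\star = k_2^\star = 0$. The mean constraint collapses to $(1-\theta_i)(1-\log(1-\theta_i)) = p_i$, and setting $x_i = 1 - \theta_i$ gives $x_i(1 - \log x_i) = p_i$, which has a unique root in $(0,1)$. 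Undoing the change of variables recovers the truncated Pareto $G_i(x) = 1 - x_i/x$ on $[x_i, 1)$ with an atom of size $x_i$ at $x = 1$, as claimed; existence of an optimum (Lemma \ref{existence2}) combined with uniqueness of this finite-dimensional solution yields global optimality.

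The main obstacle is the step-function reduction in the coupled Euler-Lagrange system: unlike the symmetric case, where Lemma \ref{stablesoln} provides a clean single-variable second-order analysis, here one must rule out additional roots of the composition $F_1 \mapsto \lambda_2 \log(1 - \lambda_1 \log(1-F_1))$ over the feasible range of multipliers, and also argue that $F_1(k), F_2(k)$ cannot jump between several admissible constant values while remaining monotone CDFs compatible with both mean constraints. A secondary subtlety is verifying that the optimal jump point $k_i^\star$ equals $0$ rather than some interior value, which requires the monotonicity of $f$ described above and is analogous to the boundary KKT step in Section \ref{rs17}.
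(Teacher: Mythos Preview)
Your proposal is correct and follows essentially the same route as the paper: the same change-of-variable reformulation \eqref{adp}, the same coupled Euler--Lagrange system $F_j = \lambda_i \log(1-F_i)$, the same substitution yielding the fixed-point equation $\theta_1 = \lambda_2\log(1-\lambda_1\log(1-\theta_1))$, and the same reduction to a finite-dimensional problem in $(k,\theta_1,\theta_2)$. The only notable differences are in execution: for uniqueness of the Euler--Lagrange solution the paper computes directly that the composite map is increasing and convex (hence meets the diagonal at most once beyond the origin), and for the conclusion $k=0$ the paper does not use your monotonicity of $f(\theta)=\theta/[1-(1-\theta)(1-\log(1-\theta))]$ but instead computes $\partial\mathcal{L}/\partial k$ explicitly and bounds it by the elementary inequality $1+\theta/\log(1-\theta)\geq \theta/2$, which yields $\partial\mathcal{L}/\partial k\leq 0$ immediately.
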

	
	\begin{proof}
		See Appendix \ref{proofap}. 
	\end{proof}
	
	Thus, both buyers have the virtual values $\{0,1\}$ and the degree of
	asymmetry between $p_{1}$ and $p_{2}$ is reflected only by $x_{1}$ and $%
	x_{2} $. For $n\geq 3$, the seller-worst problem remains an isoperimetric
	problem and we can similarly reduce it to a finite-dimensional constrained
	optimization problem as we do in Theorem \ref{result_seller}. The
	finite-dimensional problem, however, becomes intractable and its closed-form
	solution(s) remain unknown to us.
	
	\subsection{Continuous prior distributions}
	
	\label{continuous prior}
	
	We now consider continuous prior distributions. Our analysis in Sections \ref%
	{main results}--\ref{Asymmetric information structures} remains applicable,
	if the information designer is allowed to choose any profile of signal
	distributions with a given mean $p$. However, if the information designer
	has full information about the prior, then a signal distribution is feasible
	if and only if it is a mean-preserving contraction of the prior; see \cite%
	{blackwell1953equivalent}. That is, the set of feasible signal distributions
	becomes 
	\begin{equation*}
		\mathcal{G}_{H}=\left\{ G:[0,1]\rightarrow \lbrack 0,1]\bigg \vert %
		\int_{0}^{1}x\,\mathrm{d}G(x)=p,\int_{0}^{x}G(t)\mathrm{d}t\leq
		\int_{0}^{x}H(t)\mathrm{d}t,\forall x\in \lbrack 0,1]\right\} .
	\end{equation*}
	
	The main issue here is to handle the mean-preserving spread constraint on
	the signal distributions. Changing the variable is no longer useful so we
	need a different tool akin to the method developed in \cite%
	{dworczak2019simple}, although our objective function is still different
	from that in \cite{dworczak2019simple}. Consider, for instance, the case of
	a second-price auction (with no reserve). The seller-worst problem can be
	expressed as 
	\begin{align}
		\max_{G}& \int_{0}^{1}\left( nG^{n-1}\left( x\right) -(n-1)G^{n}\left(
		x\right) -1\right) \mathrm{d}x  \label{2nd} \\
		\text{s.t. }& H\text{ is a mean-preserving spread of }G.  \notag
	\end{align}
	
	We obtain the following result for the case with two buyers:
	
	\begin{proposition}
		\label{fullyreveal} \label{cps}For $n=2$, full revelation (i.e., $G=H$)
		solves the problem in \eqref{2nd}.
	\end{proposition}
	
	\begin{proof}
		The objective is  
		\begin{equation*}
			\int_{0}^{1}2G-G^{2}-1\mathrm{d}x=-\int_{0}^{1}x\mathrm{d}\left(
			2G-G^{2}\right) =\int_{0}^{1}x\mathrm{d}G^{2}(x)-2p.
		\end{equation*}
		Moreover, $\int_{0}^{1}x\mathrm{d}G^{2}(x)$ is maximized if $G=H$, since $%
		G^{2}$ is the CDF of the convex function, $\max \left\{ x_{1},x_{2}\right\}$%
		, when $x_{1}$ and $x_{2}$ are independently distributed according to $G$. 
		Hence, full revelation minimizes the sellers' revenue. 
	\end{proof}
	
	Proposition \ref{cps} has an implication for both the seller-worst problem
	and the buyer-optimal problem with two buyers. Specifically, we can still
	argue that the seller-worst signal distribution must be regular and
	symmetric, and must admit only nonnegative virtual values. Since a
	second-price auction with a reserve price is optimal for such signal
	distributions, identifying a seller-worst information structure amounts to
	solving the problem in (\ref{2nd}). If the prior distribution is regular and
	admits nonnegative virtual values, it is also a feasible choice in this
	problem. Hence, the proof of Proposition \ref{cps} demonstrated that the
	full revelation is the seller-worst information structure. Moreover, since
	full revelation maximizes the total surplus, it is also the buyer-optimal
	information structure. The following corollary summarizes our findings.\footnote{
	Based on 
	our result that the seller-worst information is regular and induces 
	nonnegative virtual value, Corollary \ref{coro-continuous} can alternatively obtained from	 Part (i) in Theorem
		5 of \cite{ganuza2010signal}, which shows that in a second-price auction
		with two bidders, the seller's revenue is nonincreasing in the precision of
		the signals.}
	
	\begin{corollary}
		\label{coro-continuous}For $n=2$, if the prior distribution is regular and
		admits nonnegative virtual values, then full revelation is both the unique
		symmetric buyer-optimal and the unique symmetric seller-worst information
		structure.
	\end{corollary}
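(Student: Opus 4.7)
The plan is to make the informal argument sketched just before the statement rigorous by combining Proposition \ref{cps} with adapted versions of the structural reductions already established in the binary-prior analysis. First, I would re-derive Lemmas \ref{almost_nonneg} and \ref{regularity}, together with the symmetrization step used in Theorem \ref{asy_seller}, in the continuous-prior setting. All of the local modifications employed in those proofs---flattening the signal distribution over an interval, replacing an irregular segment by its ironed Pareto counterpart, and averaging $F_i$'s into $\tfrac{1}{n}\sum_i F_i$---only weakly decrease dispersion relative to the original $G$. Hence if $G$ is a mean-preserving contraction of $H$ to begin with, so are the modified distributions, and the adapted lemmas show that any symmetric seller-worst or buyer-optimal information structure must be regular and admit nonnegative virtual values almost everywhere.

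Second, once attention is restricted to the regular, nonneg-virtual class, Myerson's optimal auction coincides with a second-price auction with zero reserve. The seller-worst objective then reduces to the problem in (\ref{2nd}), and by hypothesis $H\in\mathcal{G}_H$ since $H$ is regular, has nonnegative virtual values, and is trivially a mean-preserving spread of itself. Proposition \ref{cps} therefore delivers $G=H$ as a solution.

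Uniqueness follows from the computation in the proof of Proposition \ref{cps}, namely $\int_0^1 (2G-G^2-1)\,\mathrm{d}x = \int_0^1 x\,\mathrm{d}G^2 - 2p$, together with the strict convexity of $\max\{x_1,x_2\}$ off the diagonal: for any $G\ne H$ in $\mathcal{G}_H$, $H$ is a strict mean-preserving spread of $G$, hence $\int_0^1 x\,\mathrm{d}H^2 > \int_0^1 x\,\mathrm{d}G^2$, ruling out any alternative symmetric seller-worst distribution. For the buyer-optimal side, the total surplus equals $\int_0^1 x\,\mathrm{d}G^2$ whenever the good is always sold (which holds for every regular $G$ with nonnegative virtual values), so the same convexity calculation shows $G=H$ is the unique total-surplus maximizer; since $G=H$ is simultaneously seller-worst, it uniquely maximizes buyers' surplus $=$ total surplus $-$ revenue.

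The main obstacle will be the first step: verifying that each local modification used in Lemmas \ref{almost_nonneg} and \ref{regularity} preserves the stronger feasibility constraint $\int_0^x G(t)\,\mathrm{d}t \le \int_0^x H(t)\,\mathrm{d}t$ in addition to the mean. Because the modifications only contract dispersion (they replace a feasible $G$ by one closer in the convex order to a degenerate distribution at $p$, and $H$ majorizes every such $G$), the inequality should be preserved, but the bookkeeping differs from the binary-prior case and needs to be written out carefully. Everything else---the reduction to (\ref{2nd}), the appeal to Proposition \ref{cps}, and the uniqueness argument---is essentially immediate from the strict convexity of $G\mapsto\int x\,\mathrm{d}G^2$ in the mean-preserving-spread order.
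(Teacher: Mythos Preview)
Your overall strategy mirrors the paper's informal argument: first reduce to regular distributions with nonnegative virtual values (so that Myerson's auction is a second-price auction with zero reserve), then invoke Proposition~\ref{cps}, and finally use convexity of $\max\{x_1,x_2\}$ for uniqueness and for the buyer-optimal direction. You also correctly flag the first step as the main obstacle. However, your stated reason why that step should go through is backwards, and the obstacle is real.

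The modifications in Lemmas~\ref{almost_nonneg} and~\ref{regularity} do \emph{not} contract dispersion. In both proofs the paper constructs $\tilde G^{\theta_0}$ (respectively $\hat G^{\theta_0}$) and shows explicitly that it is a \emph{strict mean-preserving spread} of the original $G$; this is exactly what drives the strict increase in total surplus via Lemma~\ref{convex}. Concretely, the modification places an atom of size $\theta_0>0$ at $x=0$. If $H$ is a continuous prior with $H(0)=0$, then for small $\varepsilon$ one has $\int_0^\varepsilon \tilde G^{\theta_0}(t)\,\mathrm{d}t \approx \theta_0\varepsilon > \int_0^\varepsilon H(t)\,\mathrm{d}t$, so the modified distribution violates the feasibility constraint $\int_0^x G \le \int_0^x H$. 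Thus the binary-prior modifications do not transplant to the continuous-prior setting as written, and the reduction to the regular/nonneg class cannot be obtained by the route you propose.

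There is a much shorter argument that bypasses this reduction entirely. For any feasible $G$, Myerson's revenue under $G$ weakly exceeds the revenue of a second-price auction with zero reserve under $G$ (the latter is one feasible mechanism). Proposition~\ref{cps} says the zero-reserve SPA revenue is minimized over $\mathcal G_H$ at $G=H$, and when $H$ is regular with nonnegative virtual values the two revenues coincide at $H$. Chaining these gives $\mathrm{MyersonRev}(G)\ge \mathrm{SPARev}(G)\ge \mathrm{SPARev}(H)=\mathrm{MyersonRev}(H)$, which establishes the seller-worst half directly. The buyer-optimal half then follows exactly as in your last paragraph: total surplus under any $G$ is at most $\int x\,\mathrm d G^2 \le \int x\,\mathrm d H^2$, with equality at $H$ since the good is always sold there, and combining with the revenue inequality gives the result. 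Uniqueness comes from strictness of the middle inequality (for $G\neq H$, integrate $\int(G^2-H^2)=-\int D(g+h)$ by parts where $D(x)=\int_0^x(G-H)\le 0$).
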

	
	Corollary \ref{coro-continuous} restores the equivalence between the
	buyer-optimal information structure and the seller-worst information
	structure. However, it requires the prior to be regular and induce only
	nonnegative virtual values. In particular, it rules out the binary prior
	which we analyze in Theorems \ref{result_seller} and \ref{result_buyer}.\footnote{%
		If $n\geq 3$ or the prior is irregular, we can still solve the problem in %
		\eqref{2nd}; we report the solution in \cite{chen2020quantile}. However, the
		optimal signal distributions which we obtain in these situations are no
		longer regular. As we demonstrate in Example \ref{irexample}, the optimal
		auction need not be a second-price auction with a reserve price.} In
	Appendix \ref{inequivir}, we construct a regular continuous prior under which the buyer-optimal and seller-worst
	information structures are not equivalent. The example clarifies that the
	inequivalence of the seller-worst information and the buyer-optimal
	information arises not because the prior is discrete but rather it induces
	negative virtual values.
	
	The uniqueness of the buyer-optimal information structure in Corollary \ref%
	{coro-continuous} also contrasts with the multiplicity of buyer-optimal
	information structures available with a single buyer.\footnote{%
		For example, when $n=1$ and the prior is uniformly distribution on $[1/2,1]$%
		, the buyer-optimal information structure identified in \cite%
		{roesler2017buyer} is a truncated Pareto distribution $G(x)=1-\frac{1}{2x}$
		for $x\in \lbrack 1/2,0.824)$ and $G(x)=1$ for $x\in \lbrack 0.824,1]$.
		Against both the truncated Pareto distribution as well as the prior, the
		seller chooses the posted price $1/2$ and obtains revenue $1/4$.} Since the
	information designer selects only signal distributions which are regular and
	admit nonnegative virtual values, the seller adopts a second-price auction.
	When $n=2$, the objective function in \eqref{2nd} is strictly convex in $G$;
	hence, full revelation is the unique buyer-optimal information structure.
	When $n=1$, on the other hand, the objective becomes linear in $G$. Hence,
	any signal distribution that induces (i) the same posted price and (ii) the
	same probability of exceeding the posted price of a buyer-optimal signal
	distribution is also buyer-optimal.
	
	\subsection{Tightness of the seller-worst upper bound}
	
	\label{simulationdiscussion}
	
	As we mentioned in the introduction, our seller-worst revenue provides an
	upper bound for the revenue guarantee of any mechanism over all symmetric
	independent information structures and undominated equilibria. To illustrate
	the effectiveness of this upper bound, we consider a specific example with
	prior mean $p=0.5$ and two buyers. It follows from Theorem \ref%
	{result_seller} that the seller-worst revenue is $2x_s-x_s^{2}=0.3385$,
	where $x_s\approx 0.1867$ solves $x_s-x_s\log (x_s)=p$.
	
	We now explore the revenue guaranteed by a second-price auction with a
	random reserve price distributed under the truthful equilibrium. This is
	consistent with the approach as taken by \cite{che2019distributionally}, 
	\cite{Park2021public}, and \cite{Zhang2021bilateraltrade} in their study of
	the optimal revenue guarantee by a dominant-strategy mechanism with a mean
	constraint.\footnote{\cite{Park2021public} studies the optimal revenue
		guarantee in a public good provision setting, \cite{Zhang2021bilateraltrade}
		studies the problem in a bilateral trade setting, and \cite%
		{che2019distributionally} studies the problem in an auction setting where
		mechanisms satisfy a condition called \textquotedblleft
		competitiveness\textquotedblright. These papers all adopt a duality approach
		and allow for correlated signal distributions. In our example, the
		guaranteed revenue in \cite{che2019distributionally} is $0.317$ which is
		about $93.6$ percent of our seller-worst revenue. The optimal revenue
		guarantee in general dominant-strategy auctions remains unknown to us.}
	Formally, denote by $R$ the distribution function of a random reserve price.
	Let $\sigma ^{T}$ be the truth-telling equilibrium, i.e., $\sigma
	_{i}^{T}\left( x_{i}\right) =x_{i}$ for each $x_{i}$; in addition, let $\Pi
	\left( R,G,\sigma ^{T}\right) $ denote the seller's revenue under $\sigma
	^{T}$. The revenue guaranteed by $R $ (under the truthful equilibrium) is
	defined as $\min_{G}\Pi \left( R,G,\sigma ^{T}\right) $.
	
	First, we examine the performance of a deterministic reserve price. In
	particular, \cite{suzdaltsev2020distributionally} proves that $r=0$ solves 
	\begin{equation*}
		\max_{r}\min_{G}\Pi \left( r,G,\sigma ^{T}\right) \text{.}
	\end{equation*}
	It follows from Proposition \ref{cps} that the prior/full revelation solves $%
	\min_{G}\Pi \left( 0,G,\sigma ^{T}\right) $. As a result, $\min_{G}\Pi
	\left( 0,G,\sigma ^{T}\right) =p^{2}=0.25$. Hence, any deterministic reserve
	guarantees the revenue $0.25$, which is about $73.85$ percent of the
	seller-worst revenue. In fact, by the same argument, for any $p$, any
	deterministic reserve guarantees at most $p^{2}/(2x_s-x_s^{2})$ of the
	seller-worst revenue.
	
	Second, we examine the performance of a specific random reserve price.
	Consider 
	\begin{equation*}
		R_{b}(r)= 
		\begin{cases}
			0, & \text{ if }r\in \lbrack 0,e^{-1/b}); \\ 
			1+b\times \log (r), & \text{ if }r\in \lbrack e^{-1/b},1].%
		\end{cases}%
	\end{equation*}
	In particular, $R_{b}$ becomes the random posted price due to \cite%
	{carrasco2018optimal} for $b=\frac{1}{-\log x_s}\approx 0.5958$.\footnote{%
		\cite{carrasco2018optimal} proves that a random posted price attains the
		seller-worst revenue when there is a single buyer.} Our simulation result
	shows that when we choose $b=0.5958\,$, we have $\min_{G}\Pi \left(
	R_{b},G,\sigma ^{T}\right) =0.2939$, which is about $86.8$ percent of the
	seller-worst revenue. In contrast, when $b=0.339$, we have $\min_{G}\Pi
	\left( R_{b},G,\sigma ^{T}\right) =0.3382$, which is about $99.9$ percent of
	the seller-worst revenue.
	
	Third, \cite{Chen2022informationallyrobust} has identified, for $n=2$ and
	any prior mean $p$, a random reserve price distribution $R^{\ast }(r)$ which
	guarantees the seller-worst revenue $2x_s-x_s^{2}$.\footnote{\cite%
		{bachrach2022distributional} incorporates material from these three other
		recent independent working papers: \cite{bachrach2022distributional}, \cite%
		{chen2022note}, and \cite{zhang2022information}.} In particular, 
	\begin{equation*}
		R^{\ast }(r)= 
		\begin{cases}
			\frac{r(1-x_s)}{r-x_s}\left(1+\frac{1}{-\log(x_s)}\times\log(r)\right), & 
			\text{ if }r\in ( 0,x_{s}) \cup \left(x_s,1\right]; \\ 
			\frac{1-x_{s}}{-\log (x_{s})}, & \text{ if }r=x_{s}; \\ 
			0, & \text{ if }r=0.%
		\end{cases}%
	\end{equation*}
	
	\section{Conclusion}
	
	\label{conclusion} In this paper, we characterize the symmetric
	buyer-optimal information structure as well as the symmetric seller-worst
	information structure with symmetric binary priors and a Myersonian optimal
	auction. We show that with a binary i.i.d. prior on $0$ and $1$, the two
	information structures are not equivalent, and yet both converge to ``no
	disclosure'' when the number of buyers goes to infinity. We also demonstrate
	that an asymmetric information structure is never seller-worst but can
	generate a strictly higher surplus for the buyers on an aggregate level.
	
	The independent private-value setting enables us to express both the buyers'
	surplus as well as the seller's revenue in terms of the buyers' (ironed)
	virtual values. This approach thereby neatly subsumes the IC and IR
	constraints and leads to an information design problem amenable to optimal
	control. For general correlated signal distributions, however, we know of no
	way to express the seller's optimal revenue or the buyers' surplus with the
	intractable (binding) IC and IR constraints.\footnote{\cite%
		{mathevet2020information} study the information design problem allowing for
		correlated signals under a fixed game, whereas the game in our information
		design problem is chosen optimally by the seller in response to the
		information structure.} One way to bypass this difficulty is to appeal to
	the strong duality approach used in \cite{du2018robust} and \cite%
	{brooks2019optimal}. That approach requires identifying a
	seller-worst/minmax information structure (among all correlated signal
	distributions) together with a maxmin mechanism that achieves the
	seller-worst revenue upper bound. We leave this important yet challenging
	question for future research.
	
	Do the optimal information structures that we have provided resemble any
	real-world information structures? We do not have an answer. As information
	structures are inherently harder to observe than, say, contracts or selling
	mechanisms, we must also maintain the awareness that the predictions we have
	derived, like some of those in contract theory, might be entirely
	counterfactual. As we have demonstrated, however, the optimal information
	structures do provide useful theoretical benchmarks which shed light on
	other problems such as strategic information acquisitions or optimal revenue
	guarantees.
	
	\bibliographystyle{econometrica}
	\bibliography{IDOA_R2}
	
	\bigskip \newpage
	
	\appendix
	
	\section*{Appendix}
	
	\addcontentsline{toc}{section}{Appendices} \renewcommand{\thesubsection}{%
		\Alph{subsection}}
	
	\subsection{Omitted proofs}
	
	\label{proofs}
	
	\subsubsection{Formal definition of ironed virtual valuations}
	
	\label{formal_virtual_value}
	
	For any CDF $G$ with supp$(G)\subset \lbrack 0,1]$, let $a=\inf \{x\in
	\lbrack 0,1]|G(x)>0\}$, and define 
	\begin{equation*}
		\Psi (x|G)= 
		\begin{cases}
			0, & \mbox{if }x\in \lbrack 0,a); \\ 
			a-x(1-G(x)), & \mbox{if }x\in \lbrack a,1]\text{.}%
		\end{cases}%
	\end{equation*}
	Let $\Theta =\{(\alpha ,\beta )\in \mathbb{R}^{2}|\alpha +\beta G(x)\leq
	\Psi (x|G),\forall x\in \lbrack 0,1]\}$ and let 
	\begin{equation*}
		\Phi (x|G)=\sup \{\alpha +\beta G(x)|(\alpha ,\beta )\in \Theta \},
	\end{equation*}
	where $\Phi (x|G)$ is called the convexification of $\Psi $ under under the $%
	G$-quantile space.
	
	We say that $w(x)$ is a sub-gradient of $\Phi (x|G)$ at $x\in \lbrack 0,1]$
	if 
	\begin{equation*}
		\Phi (z|G)-\Phi (x|G)\geq w(x)(G(z)-G(x)),\quad \forall z\in \lbrack 0,1].
	\end{equation*}
	For each $x\in \lbrack 0,1]$, let $\partial \Phi (x|G)$ denote the set of
	sub-gradients of $\Phi (\cdot |G)$ at $x$. Finally, let 
	\begin{equation*}
		\hat{\varphi}(x|G)=\inf \partial \Phi (x|G).
	\end{equation*}
	Then, $\hat{\varphi}(x|G)$ is defined as the \emph{ironed virtual valuation}
	induced by $G$.
	
	\subsubsection{Proof of Lemma \protect\ref{existence}}
	
	\label{proofexist}
	
	\begin{proof}
		When $\alpha =1$, we first consider the following information design 
		problem:  
		\begin{equation}
			\max_{G\in \mathcal{G}_{H}}\left( \int_{[0,1]^{n}}\max \{x_{1},\cdots
			,x_{n}\}\prod_{i=1}^{n}\left( \mathrm{d}G(x_{i})\right)
			-\int_{[0,1]^{n}}\sum_{i=1}^{n}\left( \hat{\varphi}(x_{i}|G)\right)
			q_{i}(x_{i},x_{-i}\vert G)\prod_{i=1}^{n}\left( \mathrm{d}G(x_{i})\right)
			\right) \text{.}  \label{relaxinfo}
		\end{equation}
		
		Denote by $V_{1}(G)$ and $V_{2}(G)$ the objective of problem \eqref{info} 
		and problem \eqref{relaxinfo} under the signal distribution $G$ respectively.
		
		First, since $\max \{x_{1},\cdots ,x_{n}\}$ is continuous in $x$, the first 
		term in \eqref{relaxinfo} is continuous in $G$. Moreover, by Theorem 2 of  
		\cite{monteiro2015note}, the expected revenue is a lower semicontinuous 
		function in $G$. Hence, $V_{2}(G)$ is an upper semicontinuous function in $G$%
		.  Also since $\mathcal{G}_{H}$ is a closed subset of the set of Borel 
		probability measures on $[0,1]$, $\mathcal{G}_{H}$ is compact. Thus, by the 
		extreme value theorem, an optimal solution of the problem in  %
		\eqref{relaxinfo} exists. Let $G^{\ast }$ be the optimal solution to the 
		problem \eqref{relaxinfo}.
		
		Second, for any signal distribution $G$ which induces negative virtual 
		values with positive probability, we take the same modified distribution $ 
		\tilde{G}^{\theta _{0}}\in \mathcal{G}_{H}^{+}$ as follows,  
		\begin{equation*}
			\tilde{G}^{\theta _{0}}(x)= 
			\begin{cases}
				\theta _{0}, & \text{ if }x\in \lbrack 0,x_{\theta _{0}}); \\ 
				1-\frac{x_{0}(1-G(x_{0}^{-}))}{x}, & \text{ if }x\in \lbrack x_{\theta
					_{0}},x_{0}); \\ 
				G(x), & \text{ if }x\in \lbrack x_{0},1],%
			\end{cases}%
		\end{equation*}
		where $\theta _{0}$ denotes the mass on $x=0$. Since the virtual value on $
		x_{0}$ is $0\in \lbrack \hat{\varphi}(x_{0}^{-}|G),\hat{\varphi} 
		(x_{0}^{+}|G)]$, by Lemma \ref{pre}, when $x\in \lbrack x_{\theta
			_{0}},x_{0}]$, we have $G(x)\geq \tilde{G}^{\theta _{0}}\left( x\right) $. 
		If $\theta _{0}=0$, then $\tilde{G}^{0}$ first-order stochastically 
		dominates $G(x)$, and thereby $\int_{0}^{1}x\mathrm{d}\tilde{G}^{0}\geq 
		\int_{0}^{1}x\mathrm{d}G(x)$. If $\theta _{0}=G(x_{0}^{-})$, then $x_{\theta
			_{0}}=x_{0}$. Hence, $\tilde{G}^{G(x_{0}^{-})}$ is first-order 
		stochastically dominated by $G(x)$ and $\int_{0}^{1}x\mathrm{d}\hat{G} 
		^{G(x_{0}^{-})}\leq \int_{0}^{1}x\mathrm{d}G(x)$. Since $\int_{0}^{1}x  
		\mathrm{d}\hat{G}^{\theta _{0}}(x)$ is continuous and strictly decreasing in
		$\theta _{0}$, it follows from the intermediate-value theorem that there 
		exists a unique $\theta _{0}\in \lbrack 0,G(x_{0}^{-})]$ such that $
		\int_{0}^{1}x\mathrm{d}\tilde{G}^{\theta _{0}}=p$. Moreover, by Lemma \ref%
		{pre}, since $G$ assigns positive probabilities on negative virtual values 
		for signals in $\left( 0,1\right] $, we have $\theta _{0}\in 
		(0,G(x_{0}^{-})) $. Hence, $\tilde{G}^{\theta _{0}}$ is a feasible signal 
		distribution with nonnegative virtual values except at 0; moreover, $\tilde{G%
		} ^{\theta _{0}}$ is a strict mean-preserving spread of $G$. First, since 
		the first term of the objective is convex in signal profile, the expectation
		of the first term under $\tilde{G}^{\theta _{0}}$ is strictly greater than 
		that under $G$. Meanwhile, since $\hat{\varphi}\left(x\vert \tilde{G} 
		^{\theta _{0}}\right)=0$ and the seller only allocates the good to a buyer 
		with a non-negative virtual value, the expected virtual value is the same 
		under $G$ and $\tilde{G}^{\theta _{0}}$. Therefore, $V_2\left(\tilde{G} 
		^{\theta _{0}}\right)>V_2(G)$. Hence, $G^{\ast }\in \mathcal{G}_{H}^{+}$.
		
		Third, we claim that $G^{\ast }$ also solves the problem \eqref{info}. To 
		see this, observe that for any $G$ and any signal realization $(x_{1},\cdots
		,x_{n})$, $\sum_{i=1}^{n}x_{i}q_{i}(x_{i},x_{-i}\vert G)\leq \max 
		\{x_{1},\cdots ,x_{n}\}$. If $G\in \mathcal{G}_{H}^{+}$, then for any signal
		realization $(x_{1},\cdots ,x_{n})$, $\sum_{i=1}^{n}x_{i}q_{i}(x_{i},x_{-i} 
		\vert G)=\max \{x_{1},\cdots ,x_{n}\}$. Hence,  
		\begin{equation}
			V_{1}(G)\leq V_{2}(G)\text{ for any }G\in \mathcal{G}_{H},\text{ and }
			V_{1}(G)=V_{2}(G)\text{ for any }G\in \mathcal{G}_{H}^{+}.  \label{step3}
		\end{equation}
		
		Finally,  
		\begin{equation*}
			\max_{G\in \mathcal{G}_{H}}V_{1}(G)\leq \max_{G\in \mathcal{G}
				_{H}}V_{2}(G)=V_{2}(G^{\ast })=V_{1}(G^{\ast }),
		\end{equation*}
		where the first inequality and the third equality follow from \eqref{step3} 
		and the second equality follows from the definition of $G^{\ast }$. Hence, $
		G^{\ast }$ solves the problem in \eqref{info}. Hence an optimal solution 
		exists in the problem \eqref{info}. 
	\end{proof}
	
	\subsubsection{Proof of Lemma \protect\ref{almost_nonneg}}
	
	\label{proofalmost}
	
	\begin{proof}
		\emph{The case with }$\alpha =1$: Let $G$ be a signal distribution which 
		assigns positive probability on negative virtual values for signals in $
		\left( 0,1\right] $. Let $x_{0}=\inf \{x|\hat{\varphi}(x|G)\geq 0\}$. Define
		$\tilde{G}^{\theta _{0}}(x)$ such that  
		\begin{equation*}
			\tilde{G}^{\theta _{0}}(x)= 
			\begin{cases}
				\theta _{0}, & \text{ if }x\in \lbrack 0,x_{\theta _{0}}); \\ 
				1-\frac{x_{0}(1-G(x_{0}^{-}))}{x}, & \text{ if }x\in \lbrack x_{\theta
					_{0}},x_{0}); \\ 
				G(x), & \text{ if }x\in \lbrack x_{0},1],%
			\end{cases}%
		\end{equation*}
		where $\theta _{0}$ denotes the mass on $x=0$. Since the virtual value on $
		x_{0}$ is $0\in \lbrack \hat{\varphi}(x_{0}^{-}|G),\hat{\varphi} 
		(x_{0}^{+}|G)]$, by Lemma \ref{pre}, when $x\in \lbrack x_{\theta
			_{0}},x_{0}]$, we have $G(x)\geq \tilde{G}^{\theta _{0}}\left( x\right) $. 
		If $\theta _{0}=0$, then $\tilde{G}^{0}$ first-order stochastically 
		dominates $G(x)$, and thereby $\int_{0}^{1}x\mathrm{d}\tilde{G}^{0}\geq 
		\int_{0}^{1}x\mathrm{d}G(x)$. If $\theta _{0}=G(x_{0}^{-})$, then $x_{\theta
			_{0}}=x_{0}$. Hence, $\tilde{G}^{G(x_{0}^{-})}$ is first-order 
		stochastically dominated by $G(x)$ and $\int_{0}^{1}x\mathrm{d}\hat{G} 
		^{G(x_{0}^{-})}\leq \int_{0}^{1}x\mathrm{d}G(x)$. Since $\int_{0}^{1}x  
		\mathrm{d}\hat{G}^{\theta _{0}}(x)$ is continuous and strictly decreasing in
		$\theta _{0}$, it follows from the intermediate-value theorem that there 
		exists a unique $\theta _{0}\in \lbrack 0,G(x_{0}^{-})]$ such that $
		\int_{0}^{1}x\mathrm{d}\tilde{G}^{\theta _{0}}=p$. Moreover, by Lemma \ref%
		{pre}, since $G$ assigns positive probabilities on negative virtual values 
		for signals in $\left( 0,1\right] $, we have $\theta _{0}\in 
		(0,G(x_{0}^{-})) $.
		
		Hence, $\tilde{G}^{\theta _{0}}$ is a feasible signal distribution with 
		nonnegative virtual values except at 0; moreover, $\tilde{G}^{\theta _{0}}$ 
		is a strict mean-preserving spread of $G$. Therefore, by Lemma \ref{convex},
		the total surplus under $\tilde{G}^{\theta _{0}}$ is strictly greater than 
		the total surplus under $G$. In addition, since $\hat{\varphi}\left(x\vert  
		\tilde{G}^{\theta _{0}}\right)=0$, the seller only allocates the good to a 
		buyer with a non-negative virtual value, the expected virtual value is the 
		same under $G$ and $\tilde{G}^{\theta _{0}}$. Hence, the objective value 
		will be strictly higher as the expected total surplus becomes strictly 
		higher and the seller's revenue remains the same. Hence, any optimal signal 
		distribution $G$ must induce nonnegative virtual values with probability one
		on $\left( 0,1\right] $.
		
		\emph{The case with }$\alpha =0$: Suppose that $\tilde{G}^{\theta _{0}}$ 
		puts some positive mass on $x=0$. For $\alpha =0$, we can further modify the
		distribution $\tilde{G}^{\theta _{0}}$ to reduce the seller's revenue as 
		follows. Define another signal distribution,  
		\begin{equation*}
			\hat{G}^{x_{1}}(x)= 
			\begin{cases}
				1-\frac{1-\tilde{G}^{\theta _{0}}(x_{1})}{x}, & \text{ if }x\in \lbrack 1- 
				\hat{G}^{\theta _{0}}(x_{1}),x_{1}]; \\ 
				\tilde{G}^{\theta _{0}}\left( x\right), & \text{ if }x\in (x_{1},1].%
			\end{cases}%
		\end{equation*}
		For $x_{1}=x_{0}$, we have $\hat{G}^{x_{1}}(x)=\tilde{G}^{0}\left( x\right) $
		which first-order stochastically dominates $\tilde{G}^{\theta _{0}}$, then $
		\int_{0}^{1}x\mathrm{d}\hat{G}^{x_{0}}\geq \int_{0}^{1}x\mathrm{d}\tilde{G} 
		^{\theta _{0}}$, and for $x_{1}=1$, it is a degenerate distribution with all
		mass on $x=0$. Also since $\int_{0}^{1}x\mathrm{d}\hat{G}^{x_{1}}$ is 
		continuous and strictly decreasing in $x_{1}$, the intermediate-value 
		theorem implies that there exists an unique $x_{1}\in (x_{0},1)$ such that $
		\int_{0}^{1}x\mathrm{d}\hat{G}^{x_{1}}=p$. Thus, $\hat{G}^{x_{1}}$ is a 
		feasible information structure.
		
		Since $\hat{\varphi}(x_{1}|\hat{G}^{x_{1}})=0\leq \hat{\varphi}(x_{1}|\tilde{%
			G}^{\theta _{0}})$, for each realization of signals, we have $\max \{0,  
		\hat{ \varphi}(x|\hat{G}^{x_{1}})\}\leq \max \{0,\hat{\varphi}(x|\tilde{G} 
		^{\theta _{0}})\}$. Moreover, the inequality is strict with positive 
		probability. Hence, the seller's revenue is strictly lower under $\hat{G} 
		^{x_{1}}$ than under $\tilde{G}^{\theta _{0}}$. 
	\end{proof}
	
	\subsubsection{Proof of Lemma \protect\ref{regularity}}
	
	\label{proofregular}
	
	\begin{proof}
		For any irregular distribution $G$ with nonnegative virtual values except at
		0, there exists some $x^{\prime }>0$ such that $\Psi (x^{\prime }|G)<\Phi 
		(x^{\prime }|G)$. Since $G(x)$ is right continuous with respect to $
		x^{\prime }$, so is $\Psi (x^{\prime }|G)$. Therefore, there exists an 
		interval $[x^{\prime},x ^{\prime \prime }]$, such that for any $x\in \lbrack
		x^{\prime},x ^{\prime \prime }]$, $\Psi (x|G)<\Phi (x|G)$. Let $
		[x_{1},x_{2}] \supseteq[x^{\prime},x ^{\prime \prime }]$ be an ironed 
		interval such that $\hat{\varphi}(x|G)=k$ is constant for $x\in \lbrack 
		x_{1},x_{2}]$. And for $x\in (x_{1},x_{2})$, $\Psi (x|G)\leq \Phi (x|G)$. 
		For $x=x_{1}$ and $x=x_{2}$ , $\Psi (x|G)=\Phi (x|G)$. Moreover, since $\Psi
		(x_{1}|G)=\Phi (x_{1}|G)$, $G(x)$ is continuous at $x_{1}$. Then let $\hat{G}
		$ be  
		\begin{equation*}
			\hat{G}= 
			\begin{cases}
				G(x), & \text{ if }x\not\in \lbrack x_{1},x_{2}]; \\ 
				1-\frac{(1-G(x_{1}))(x_{1}-k)}{x-k}, & \text{ if }x\in \lbrack x_{1},x_{2}].%
			\end{cases}%
		\end{equation*}
		
		The modified distribution $\hat{G}$ has two key features: firstly, it 
		generates the same virtual value as $G$ for any realized signal $x$; 
		secondly, by lemma \ref{pre}, since we have $\hat{G}(x)\leq G(x)$ and $\hat{%
			G		}(x)< G(x)$ on $x\in(x_1,x_2)$ (otherwise, $G(x)$ is regular), $\hat{G}
		$ will strictly first-order stochastically dominate $G(x)$, which implies 
		that $\hat{G}$ will generate a strictly higher mean than $p$.
		
		Hence, we can modify $\hat{G}$ again to satisfy the mean constraint. Let $ 
		\hat{G}^{ \theta_0}$ be  
		\begin{align*}
			\hat{G}^{\theta_0}= 
			\begin{cases}
				\theta_0, & \text{ if } x\in[0,x_{ \theta_0}); \\ 
				\hat{G}(x), & \text{ if } x\in[x_{ \theta_0},1],%
			\end{cases}%
		\end{align*}
		where $\theta_0=\hat{G}(x_{ \theta_0})$ denotes the mass on signal $x=0$.
		
		It is clear that  
		\begin{equation*}
			\int_{0}^{1}x\mathrm{d}\hat{G}^{0}(x)=\int_{0}^{1}x\mathrm{d}\hat{G}
			(x)>p>0=\int_{0}^{1}x\mathrm{d}\hat{G}^{1}(x).
		\end{equation*}
		Since $\int_{0}^{1}x\mathrm{d}\hat{G}^{\theta _{0}}(x)$ is continuous and 
		strictly decreasing in $\theta _{0}$, the intermediate-value theorem implies
		that there exists a unique $\theta _{0}\in (0,1)$ such that $\int_{0}^{1}x  
		\mathrm{d}\hat{G}^{\theta _{0}}=p$. Therefore, $\hat{G}^{\theta _{0}}$ is a 
		feasible signal distribution.
		
		By construction,  
		\begin{align*}
			\hat{\varphi}(x\vert \hat{G}^{ \theta_0})= 
			\begin{cases}
				-\frac{1- \theta_0}{ \theta_0}<0, & \text{ if }x\in[0,x_{ \theta_0}); \\ 
				\hat{\varphi}(x\vert \hat{G})=\hat{\varphi}(x\vert G), & \text{ if }x\in[x_{
					\theta_0},1].%
			\end{cases}%
		\end{align*}
		Hence, $\max\{\hat{\varphi}(x\vert \hat{G}^{ \theta_0}),0\}\leq \max\{\hat{
			\varphi}(x\vert G),0\}$.
		
		Hence, given any signal distribution $G$, we can modify a buyer's signal 
		distribution to $\hat{G}^{\theta _{0}}$. As the arguments in Lemma \ref%
		{almost_nonneg}, this modification has two effects: first, the seller's 
		revenue (the second term of the objective function) as a max function of 
		nonnegative virtual values becomes weakly less; second, since the 
		constructed distribution $\hat{G}^{\theta _{0}}$ is a strict mean-preserving
		spread of the original distribution $G$, by Lemma \ref{almost_nonneg} and 
		Lemma \ref{convex}, the total surplus is strictly higher. Therefore, with 
		this modification, the buyers' total surplus is strictly higher. Hence,the 
		buyer-optimal distribution must be regular except at 0.
		
		Moreover, this modification still puts a positive mass on signal $0$. By 
		Lemma \ref{almost_nonneg}, we can further modify this distribution into 
		another distribution with nonnegative virtual values and generate strictly 
		less revenue for the seller. Thus, the seller-worst distribution must also 
		be regular. 
	\end{proof}
	
	\subsubsection{Change of variable}
	
	\label{proofchange2}
	
	Denote $G_m(0)$ by $F_m(0^-)$, we have 
	\begin{align*}
		&\int_{0}^{1}x\mathrm{d}G_m^{n}(x)= \int_{0}^{1}\left( k+\frac{
			\int_{k}^{1}(1-F_m(s))\mathrm{d}s}{1-F_m(k)}\right) \mathrm{d}F_m^{n}(k) \\
		=&\int_{0}^{1}k\mathrm{d}F_m^n(k)+n\int_{0}^{1}\left( \frac{
			\int_{k}^{1}(1-F_m(s))\mathrm{d}s}{1-F_m(k)}\right) F_m^{n-1}(k)\mathrm{d}
		F_m(k) \\
		=&\int_{0}^{1}k\mathrm{d}F_m^n(k)+n\int_{0}^{1}\int_{k}^{1}(1-F_m(s))\frac{
			F_m^{n-1}(k)}{1-F_m(k)}\mathrm{d}s\mathrm{d}F_m(k) \\
		=&\int_{0}^{1}k\mathrm{d}F_m^n(k)+n\int_{0}^{1}(1-F_m(s))\int_{0}^{s}\frac{
			F_m^{n-1}(k)}{1-F_m(k)}\mathrm{d}F_m(k)\mathrm{d}s \\
		=&\int_{0}^{1}k\mathrm{d}F_m^n(k)+n\int_{0}^{1}(1-F_m(s))
		\int_{F_m(0^-)}^{F_m(s)}\frac{\theta ^{n-1}}{1-\theta }\mathrm{d}\theta 
		\mathrm{d}s \\
		=&\int_{0}^{1}n(1-F_m(k))\left( \sum_{i=1}^{n-1}\frac{-F_m^{i}(k)}{i}-\log
		(1-F_m(k))+\left( \sum_{i=1}^{n-1}\frac{F_m^{i}(0^-)}{i}+\log
		(1-F_m(0^-))\right) \right)-F_m^n(k)\mathrm{d}k+1.
	\end{align*}
	
	Then by bounded convergence theorem, since $G_{m}$ and $F_{m}$ uniformly
	converge to $G$ and $F$ respectively, Equation \eqref{m2} holds.
	
	\subsubsection{Proof of Lemma \protect\ref{stablesoln}}
	
	\label{proofstablesoln}
	
	\begin{proof}
		First, define  
		\begin{equation*}
			\zeta_{\alpha }(\theta )=n\theta
			^{n-2}\left(n-1-(n-1+\alpha)\theta\right)=(1-\theta )I_{\alpha}^{\prime
			}(\theta )-\lambda.
		\end{equation*}
		We discuss the cases with $\alpha=0$ and $\alpha=1$, respectively.
		
		\begin{enumerate}
			\item For $\alpha=0$, we have $\zeta_0(\theta)=n(n-1)(1-\theta)\theta ^{n-2}$%
			, then  
			\begin{align*}
				\frac{\partial \zeta_0(\theta)}{\partial\theta}=n(n-1)\theta^{n-3}
				\left(n-2-(n-1)\theta\right).
			\end{align*}
			Therefore, $\zeta_0(\theta)$ is increasing in $\theta$ when $\theta 
			\in(0,(n-2)/(n-1))$, and then decreasing in $\theta $ when $
			\theta\in((n-2)/(n-1),1)$. Moreover, we have $\zeta_0(0)=\zeta_{0}(1)=0$. We
			then have the following three cases:
			
			\begin{enumerate}
				\item \emph{Case 1.} $\lambda \geq 0$: In this case, $I_0^{\prime }(\theta )$
				is always positive for both $\alpha =0$. Hence, $I_0(\theta )$ is 
				increasing. Thus, $I_0(\theta )$ will cross the $\theta $-axis from below at
				most once. The objective function takes a local minimal and hence we ignore 
				this case.
				
				\item \emph{Case 2.} $\lambda \in \lbrack \lambda _{0}^{\ast },0)$ where $
				\lambda _{0}^{\ast}=-\zeta_{0}((n-2)/(n-1))<0$: $I_0^{\prime }(\theta )$ is 
				first negative, then positive, and eventually becomes negative. In addition,
				$I_0(\theta )=0$ when $\theta =\theta _{0}$. Hence $I_0(\theta )$ will first
				decrease from 0, then increase, and finally decrease. In this case, $
				I_0(\theta )$ will cross the $\theta $-axis at most twice. Only for the 
				second time, $I_0(\theta )$ will cross the $\theta $-axis from above.
				
				\item \emph{Case 3.} $\lambda <\lambda _{0}^{\ast }$: Then $I_0(\theta )$ is
				always decreasing. Thus, $I_0(\theta )$ crosses the $\theta $-axis from 
				above at most once. 
			\end{enumerate}
			
			\item For $\alpha=1 $, we have $\zeta_{1 }(\theta )=n\theta 
			^{n-2}\left(n-1-n\theta\right)$, then  
			\begin{align*}
				\frac{\partial \zeta_1(\theta)}{\partial\theta}=n\theta^{n-3}
				\left((n-2)(n-1-n
				\theta)-n\theta\right)=n(n-1)\theta^{n-3}\left((n-2)-n\theta\right).
			\end{align*}
			Therefore, $\zeta_1(\theta)$ is increasing in $\theta$ when $\theta 
			\in(0,(n-2)/n) $, and then decreasing in $\theta $ when $\theta 
			\in((n-2)/n,1) $. Moreover, we have $\zeta_1(0)=0$ and $\zeta_{1}(1)=-n$. We
			then have the following four cases:
			
			\begin{enumerate}
				\item \emph{Case 1}. $\lambda \geq n$: In this case, $I_1^{\prime }(\theta )$
				is always positive. Hence, $I_1(\theta )$ is increasing. Thus, $I_1(\theta )$
				will cross the $\theta $-axis from below at most once. The objective 
				function takes a local minimal and hence we ignore the case.
				
				\item \emph{Case 2.} $\lambda \in \lbrack 0,n)$: $I_1^{\prime }(\theta )$ is
				first positive and then negative, therefore, $I_1(\theta )$ will first 
				increase and then decrease. In this case, $I_1(\theta )$ will cross the $
				\theta $-axis at most twice. However, only for the second time, $I_1(\theta 
				) $ will cross the $\theta $-axis from above. Hence there is only one $
				\theta $ such that $I_1(\theta )=0$ and also satisfying the second-order 
				condition.
				
				\item \emph{Case 3.} $\lambda \in \lbrack \lambda _{1}^{\ast },0)$ where $
				\lambda _{1}^{\ast }=-\zeta_{1}((n-2)/n)<0$: $I_1^{\prime }(\theta )$ is 
				first negative, then positive, and eventually becomes negative. In addition,
				$I_1(\theta )=0$ when $\theta =\theta _{0}$. Hence $I_1(\theta )$ will first
				decrease from 0, then increase, and finally decrease. In this case, $
				I_1(\theta )$ will cross the $\theta $-axis at most twice. Only for the 
				second time, $I_1(\theta )$ will cross the $\theta $-axis from above.
				
				\item \emph{Case 4.} $\lambda <\lambda _{1}^{\ast }$: Then $I_1(\theta )$ 
				will be always decreasing. Thus $I_1(\theta )$ will also cross the $\theta $
				-axis from above at most once. 
			\end{enumerate}
		\end{enumerate}
	\end{proof}
	
	\subsubsection{Finite-dimensional seller-worst optimization}
	
	\label{fsw}
	
	To obtain the solution, we solve Problem \eqref{finiteinfo} with $\alpha =0$
	: 
	\begin{align*}
		& \max_{k\geq 0,\theta }-\left( \theta ^{n}\times k+(1-\theta ^{n})\times
		1\right) \\
		\text{s.t. }& k+(1-k)(1-\theta )(1-\log (1-\theta ))=p\text{. }
	\end{align*}
	Given the Lagrangian multiplier $\lambda _{s}$, the Lagrangian is 
	\begin{equation*}
		\mathcal{L}_{s}(\theta ,k,\lambda _{s})=(1-k)\theta ^{n}-\lambda
		_{s}(k+(1-k)(1-\theta )(1-\log (1-\theta ))+\lambda _{s}p-1.
	\end{equation*}
	The Euler-Lagrange condition with respect to $\theta $ implies: 
	\begin{equation*}
		\dfrac{\partial \mathcal{L}_{s}}{\partial \theta }=(1-k)\left( n\theta
		^{n-1}-\lambda _{s}\log (1-\theta )\right) =0.
	\end{equation*}
	Hence, 
	\begin{equation}
		\lambda _{s}=\dfrac{n\theta ^{n-1}}{\log (1-\theta )}.  \label{lambda}
	\end{equation}
	Recall 
	\begin{equation*}
		J_{s}(\theta )=\left( \theta \log (1-\theta )+n(\theta +(1-\theta )\log
		(1-\theta ))\right) \text{.}
	\end{equation*}
	Then, taking the derivative of $\mathcal{L}_{s}$ with respect to $k$ and
	using (\ref{lambda}), we obtain 
	\begin{align*}
		\dfrac{\partial \mathcal{L}_{s}}{\partial k}=& -\theta ^{n}-\lambda
		_{s}(1-(1-\theta )(1-\log (1-\theta ))) \\
		=& \dfrac{\theta ^{n-1}}{-\log (1-\theta )}\left( \theta \log (1-\theta
		)+n(\theta +(1-\theta )\log (1-\theta ))\right) =\dfrac{\theta ^{n-1}}{-\log
			(1-\theta )}J_{s}(\theta ).
	\end{align*}
	Hence, the $\text{sign of }\dfrac{\partial \mathcal{L}_{s}}{\partial k}\text{
		is determined by the sign of }J_{s}(\theta )$. Moreover, we have 
	\begin{align*}
		J_{s}^{\prime }(\theta )& =\log (1-\theta )-\frac{\theta }{1-\theta }
		+n-n-n\log (1-\theta )=-(n-1)\log (1-\theta )-\frac{\theta }{1-\theta }, \\
		J_{s}^{\prime \prime }(\theta )& =\frac{(n-2)-(n-1)\theta }{(1-\theta )^{2}}.
	\end{align*}
	
	\begin{itemize}
		\item When $n=2$, for $\theta>0$, we have $J_{s}^{\prime \prime }(\theta)<0$
		; therefore, $J_{s}^{\prime }(\theta )<J_{s}^{\prime }(0)=0$. Hence $%
		J_s(\theta )$ is always less than $J_{s}(0)=0$. As a result, the optimal $%
		k_{s}=0$ and we set $p_{s}=1$.
		
		\item When $n\geq 3$, $J_{s}^{\prime \prime }(\theta )=0$ has a unique
		solution $\theta _{1}=(n-2)/(n-1)$. Moreover, $J_{s}^{\prime }(\theta )$ is
		increasing in $\theta $ if $\theta \in (0,\theta _{1})$ and decreasing in $%
		\theta $ if $\theta \in (\theta _{1},1)$. Since $1/(1-\theta )$ decreases
		faster than $\log (1-\theta )$, $\lim_{\theta\uparrow1}J_{s}^{\prime
		}(\theta )\rightarrow -\infty $. Furthermore, since $J_{s}^{\prime }(0)=0$,
		there exists $\theta _{2}\in (\theta _{1},1)$ such that for $\theta \in
		(0,\theta _{2})$, $J_{s}^{\prime }(\theta )$ is greater than 0 and for $%
		\theta \in (\theta _{2},1)$, $J_{s}^{\prime }(\theta )$ is less than 0.
		Therefore, $J_{s}(\theta )$ is increasing for $\theta \in (0,\theta _{2})$
		and decreasing for $\theta \in (\theta _{2},1)$. In addition, $J_{s}(0)=0$
		and $\lim_{\theta\uparrow1}J_{s}(\theta )\rightarrow -\infty $. Therefore,
		there exists a unique $\theta _{s}\in (\theta _{2},1)$ such that $%
		J_{s}(\theta _{s})=0$; moreover, $J_{s}(\theta )>0$ if $\theta <\theta _{s}$
		and $J_{s}(\theta )<0$ if $\theta >\theta _{s}$. Recall that the threshold 
		\begin{equation*}
			p_{s}=(1-\theta _{s})(1-\log (1-\theta _{s})),
		\end{equation*}
		where $\theta _{s}$ satisfies $J_{s}(\theta _{s})=0$. If $k_{s}>0$ with the
		mass $\theta _{s}$, the mean constraint requires $p>p_{s}$. Moreover, the
		corner solution occurs when $0<p\leq p_{s}$. Notice that for an interior
		solution, $\theta _{s}$ only depends on $n$ and so does $p_{s}$.
	\end{itemize}
	
	In summary, as stated in Theorem \ref{result_seller}, the seller-worst
	information is the truncated Pareto distribution $G_{s}$ in \eqref{s1}
	parametrized by $\left( k_{s},x_{s}\right) $.
	
	\begin{itemize}
		\item When $n=2$, for any $p\in (0,1)$, we have a corner solution $k_{s}=0$.
		As a result, $x_{s}$ is pinned down by the mean constraint $x_{s}(1-\log
		(x_{s}))=p$.
		
		\item When $n\geq 3$ and $p\in \left( 0,p_{s}\right] $, we have a corner
		solution $k_{s}=0$. As a result, $x_{s}$ is pinned down by the mean
		constraint $x_{s}(1-\log (x_{s}))=p$.
		
		\item When $n\geq 3$ and $p\in \left( p_{s},1\right) $, $k_{s}$ is an
		interior solution and $\theta _{s}>0$ solves $J_{s}(\theta _{s})=0$; and $%
		k_{s}$ is pinned down by the mean constraint 
		\begin{equation*}
			(1-k_{s})(1-\theta _{s})(1-\log (1-\theta _{s}))+k_{s}=p.
		\end{equation*}
		Finally, plugging $\theta _{s}$ and $k_{s}$ into $\theta _{s}\equiv 1-\left(
		x_{s}-k_{s}\right) /\left( 1-k_{s}\right) $, we obtain $x_{s}$.
	\end{itemize}
	
	\subsubsection{Finite-dimensional buyer-optimal optimization}
	
	\label{fbo}
	
	To obtain the solution, we solve Problem \eqref{finiteinfo} with $\alpha =1 $%
	:
	
	\begin{align}
		\max_{\{\theta _{0}\geq0,k\geq0,\theta \}}& n(1-k)(1-\theta )\left(
		\sum_{i=1}^{n-1}\frac{-\theta ^{i}}{i}-\log (1-\theta )+\left(
		\sum_{i=1}^{n-1}\frac{\theta _{0}^{i}}{i}+\log (1-\theta _{0})\right) \right)
		\label{finiteinfoalpha1} \\
		\text{s.t. }& (1-k)\left( (1-\theta )(1-\log (1-\theta )+\log (1-\theta
		_{0})\right) +k(1-\theta _{0})=p.  \notag
	\end{align}
	Given the Lagrangian multiplier $\lambda_b$, the Lagrangian is 
	\begin{align*}
		\mathcal{L}_b(\theta _{0},k,\theta,\lambda_b )=& n(1-k)(1-\theta )\left(
		\sum_{i=1}^{n-1}\frac{-\theta^{i}}{i}-\log (1-\theta )+\left(
		\sum_{i=1}^{n-1}\frac{\theta _{0}^{i}}{i}+\log (1-\theta _{0})\right) \right)
		\\
		& -\lambda_b \left( (1-k)(1-\theta)(1-\log (1-\theta)+\log (1-\theta
		_{0}))+k(1-\theta _{0})\right) +\lambda_b p.
	\end{align*}
	The Euler-Lagrange condition with respect to $\theta$ implies: 
	\begin{align*}
		\dfrac{\partial \mathcal{L}_b}{\partial \theta}=& (1-k)\left( -n\left(
		\sum_{i=1}^{n-1}\frac{-\theta^{i}}{i}-\log (1-\theta )+\left(
		\sum_{i=1}^{n-1}\frac{\theta _{0}^{i}}{i}+\log (1-\theta _{0})\right)
		\right) +n(1-\theta )\left( \sum_{i=1}^{n-1}-\theta^{i-1}+\frac{1}{ 1-\theta 
		}\right) \right) \\
		& -\lambda_b (1-k)\left( -(1-\log (1-\theta)+\log (1-\theta _{0}))+1\right)
		\\
		=& (1-k)\left( -n\left( \sum_{i=1}^{n-1}\frac{-\theta^{i}}{i}-\log
		(1-\theta)+\left( \sum_{i=1}^{n-1}\frac{\theta _{0}^{i}}{i}+\log (1-\theta
		_{0})\right) \right) +n(1-\theta)\left( -\frac{1-\theta ^{n-1} }{1-\theta}+ 
		\frac{1}{1-\theta}\right) \right) \\
		& -\lambda_b (1-k)\left( -(1-\log (1-\theta )+\log (1-\theta _{0}))+1\right)
		\\
		=& (1-k)\left( n\left( \theta ^{n-1}+\left( \sum_{i=1}^{n-1}\frac{
			\theta^{i} }{i}+\log (1-\theta)\right) -\left( \sum_{i=1}^{n-1}\frac{ \theta
			_{0}^{i}}{i }+\log (1-\theta _{0})\right) \right) -\lambda (\log
		(1-\theta)-\log (1-\theta _{0}))\right) =0.
	\end{align*}
	Thus, we have 
	\begin{equation}
		\lambda_b =\frac{n\left( \theta^{n-1}+\left( \sum_{i=1}^{n-1}\frac{
				\theta^{i}}{i}+\log (1-\theta)\right) -\left( \sum_{i=1}^{n-1}\frac{ \theta
				_{0}^{i}}{i}+\log (1-\theta _{0})\right) \right) }{\log (1-\theta )-\log
			(1-\theta _{0})}.  \label{lambdat}
	\end{equation}
	
	Also, taking the derivative of $\mathcal{L}_b$ with respect to $\theta_0$,
	we have 
	\begin{align}
		\frac{\partial \mathcal{L}_b}{\partial \theta _{0}}=&(1-k)(1-\theta
		)n\left(\sum_{i}^{n-1}\theta _{0}^{i-1}-\frac{1}{1-\theta_0}
		\right)-\lambda_b \left(-\frac{(1-k)(1-\theta)}{1-\theta_0}-k\right)  \notag
		\\
		=&\frac{-(1-k)(1-\theta )n\theta_0^{n-1}}{1-\theta_0}-\lambda_b\frac{
			-(1-k)(1-\theta)-k(1-\theta_0)}{(1-\theta_0)}  \notag \\
		=&-\frac{(1-k)(1-\theta
			)n\theta_0^{n-1}-\lambda_b(1-\theta+k(\theta-\theta_0)) }{(1-\theta_0)}.
		\label{dtheta0}
	\end{align}
	Taking the derivative of $\mathcal{L}_b$ with respect to $k$, we have 
	\begin{align*}
		\frac{\partial \mathcal{L}_b}{\partial k}=& (1-\theta )\left( n\left(
		\sum_{i=1}^{n-1}\frac{\theta^{i}}{i}+\log (1-\theta)\right) -n\left(
		\sum_{i=1}^{n-1}\frac{\theta _{0}^{i}}{i}+\log (1-\theta _{0})\right) \right)
		\\
		& +\lambda_b(1-\theta)(1 - \log (1-\theta)+ \log (1-\theta _{0}))
		-\lambda_b(1-\theta _{0}).
	\end{align*}
	
	Since $k$ and $\theta _{0}$ may have corner solutions, we will study the
	corner solutions and interior solutions by cases. And we will show that $k$
	and $\theta _{0}$ cannot be both interior.
	
	\begin{itemize}
		\item If $k>0$, by Equation \eqref{lambdat}, we have 
		\begin{align*}
			&\lambda_b(1-\theta)(\log (1-\theta)+ \log (1-\theta _{0})) \\
			=&n(1-\theta)\theta^{n-1}+n(1-\theta)\left( \left( \sum_{i=1}^{n-1} \frac{
				\theta^{i}}{i}+\log (1-\theta)\right) -\left( \sum_{i=1}^{n-1} \frac{\theta
				_{0}^{i}}{i}+\log (1-\theta _{0})\right) \right).
		\end{align*}
		Therefore, 
		\begin{align*}
			\frac{\partial \mathcal{L}_b}{\partial k}=&\lambda_b(1-\theta)(\log
			(1-\theta)+ \log (1-\theta _{0}))-n(1-\theta)\theta
			^{n-1}+\lambda_b(1-\theta)(1 - \log (1-\theta)+ \log (1-\theta _{0}))
			-\lambda_b (1-\theta _{0}) \\
			=&\lambda_b(1-\theta)-n(1-\theta)\theta^{n-1}-\lambda_b (1-\theta _{0})=0.
		\end{align*}
		Hence, we have 
		\begin{align*}
			\lambda_b=\frac{-n(1-\theta)\theta^{n-1}}{\theta-\theta_0}.
		\end{align*}
		Since $\theta_b \geq \theta _{0}$, we have 
		\begin{align*}
			\frac{\partial \mathcal{L}_b}{\partial \theta _{0}}=& \dfrac{ (1-k)(1-\theta
				)n\theta _{0}^{n-1}+\lambda_b ((1-\theta)+k(\theta -\theta _{0}))}{
				-(1-\theta _{0})} \\
			=& \frac{n(1-\theta)}{-(1-\theta _{0})}\left( (1-k)\theta _{0}^{n-1}+ \frac{
				\theta^{n-1}(1-\theta)}{\theta-\theta _{0}}+k\theta^{n-1}\right) \\
			=& \frac{n(1-\theta)}{-(1-\theta _{0})}\left( \theta _{0}^{n-1}+\frac{
				\theta^{n-1}(1-\theta)}{\theta-\theta _{0}}+k\left( \theta ^{n-1}-\theta
			_{0}^{n-1}\right) \right)<0.
		\end{align*}
		Therefore, as long as $k>0$, we have $\frac{\partial \mathcal{L}_b}{
			\partial \theta _{0}} <0$, and thereby $\theta _{0}$ will go down to $0$ and
		be a \emph{corner} solution.
		
		We then pin down the support of $p$ such that the optimal $k>0$. Given $%
		\theta _{0}=0$, let 
		\begin{align}
			J_{b}(\theta )\equiv & \log (1-\theta )+\theta ^{n-2}\left( \theta
			+(1-\theta )\log (1-\theta )\right) +\sum_{i=1}^{n-1}\frac{\theta ^{i}}{i} 
			\notag \\
			=& \theta ^{n-1}+\sum_{i=1}^{n-1}\frac{\theta ^{i}}{i}+\log (1-\theta
			)(1+(1-\theta )\theta ^{n-2})  \label{foc_kbapp}
		\end{align}%
		We then have 
		\begin{equation*}
			\frac{\partial \mathcal{L}_{b}}{\partial k}=-\frac{n\theta \left( \theta
				^{n-1}+\left( \sum_{i=1}^{n-1}\frac{\theta ^{i}}{i}+\log (1-\theta )\right)
				\right) }{\log (1-\theta )}-n(1-\theta )\theta ^{n-1}=\dfrac{n\theta }{-\log
				(1-\theta )}J_{b}(\theta ).
		\end{equation*}
		
		Hence, the $\text{sign of }\dfrac{\partial \mathcal{L}_{b}}{\partial k}\text{
			is determined by the sign of }J_{b}(\theta)$.\footnote{%
			We present instead $\theta J_{b}(\theta )$ in the cost-benefit equation %
			\eqref{foc_kb} for the ease of comparison with $J_{s}\left( \theta \right) $.%
		} Moreover, we have 
		\begin{align*}
			& J^{\prime }_b(\theta)=\frac{\theta^{n-3}\left(\theta+(1-\theta)\log(1-
				\theta)\right)}{1-\theta}\left((n-2)-(n-1)\theta\right), \\
			\Longrightarrow & \text{Sign }J^{\prime }_b(\theta)= \text{Sign }
			\left((n-2)-(n-1)\theta\right).
		\end{align*}
		
		As in the seller-worst case,
		
		\begin{itemize}
			\item When $n=2$, for any $\theta >0$, $J^{\prime }_b(\theta)<0$ and $%
			J_b(\theta)<J_b(0)=0$. As a result, the optimal $k_b=0$ when $n=2$. In this
			case, we set $p_b=1$.
			
			\item When $n\geq3$, for $\theta\in(0,(n-2)/(n-1))$, $J^{\prime
			}_b(\theta)>0 $ and for $\theta\in((n-2)/(n-1),1]$, $J^{\prime
			}_b(\theta)<0. $ In addition that $J_b(0)=0$ and $\lim\limits_{\theta
				\uparrow1}J_b(\theta)\rightarrow-\infty$, there exists a unique $\theta_{b}
			\in (0,1)$ such that $J_b(\theta _{b} )=0$; moreover, $J_b(\theta)>0$ if $%
			\theta<\theta_{b} $ and $J_b(\theta)<0$ if $\theta>\theta_{b} $.
			
			Recall that the threshold 
			\begin{equation*}
				p_{b}=(1-\theta _{b})(1-\log (1-\theta _{b})),
			\end{equation*}
			where $\theta _{b}$ satisfies $J_{b}(\theta _{b})=0$. If $k_b>0$ with the
			mass $\theta_b$, the mean constraint requires $p> p_b$. Moreover, the corner
			solution $k_b=0$ occurs when $0<p\leq p_s$.
		\end{itemize}
		
		\item If $k_b=0$, we have $p\in\left( 0,p_b\right]$. We then only need to
		solve $\theta_0$ and $\theta$. Moreover, if the optimal $\theta_0$ is
		chosen, $\theta$ will be automatically pinned down by the mean constraint.
		We also have two cases for the optimal $\theta_0$: $\theta_0>0$ and $%
		\theta_0=0$.
		
		Suppose that $\theta_0>0$, then the Euler-Lagrange condition for $\theta_0 $
		in Equation \eqref{dtheta0} implies 
		\begin{align}
			\frac{\partial \mathcal{L}_b}{\partial \theta _{0}} =&-\frac{
				(1-k_b)(1-\theta )n\theta_0^{n-1}-\lambda_b(1-\theta+k(\theta-\theta_0))}{
				(1-\theta_0)}  \notag \\
			=&\underbrace{-\frac{(1-\theta)n\theta_0^{n-1}}{(1-\theta_0)}}_{\text{cost}%
				(<0)}+\underbrace{\frac{\lambda_b(1-\theta)}{(1-\theta_0)}}_{\text{benefit}%
				(>0)}=0.  \label{cost-benefittheta0}
		\end{align}
		
		Raising the mass $\theta _{0}$ at signal $0$ has two countervailing effects
		on the objective in \eqref{finiteinfoalpha1}. First, by increasing $\theta
		_{0}$, $\sum_{i=1}^{n-1}\frac{\theta _{0}^{i}}{i}+\log (1-\theta _{0})$ in %
		\eqref{finiteinfoalpha1} decreases, which translates into a cost in
		proportion to the first term in \eqref{cost-benefittheta0}. Second, to obey
		the mean constraint, the probability $\theta $ is reduced, thus $%
		\sum_{i=1}^{n-1}\frac{-\theta ^{i}}{i}-\log (1-\theta )$ in %
		\eqref{finiteinfoalpha1} increases, which results in a benefit in proportion
		to the second term in \eqref{cost-benefittheta0}.
		
		Together with Equation \eqref{lambdat}, we have 
		\begin{align}  \label{lambda_app}
			\lambda_b= n\theta_0^{n-1}=\frac{n\left( \theta^{n-1}+\left(
				\sum_{i=1}^{n-1} \frac{\theta^{i}}{i}+\log (1-\theta)\right) -\left(
				\sum_{i=1}^{n-1}\frac{ \theta _{0}^{i}}{i}+\log (1-\theta _{0})\right)
				\right) }{\log (1-\theta)-\log (1-\theta _{0})}.
		\end{align}
		We have if $\lambda_b>0$, $\theta_0$ is positive; if $\lambda_b\leq0$, $%
		\theta_0=0$. In order that there exists a positive solution $%
		(\theta,\theta_0)$, we also have some restrictions on the mean $p$.
		
		More precisely, there exists another threshold,\footnote{%
			Note that $\theta_{r_b}>\theta_{b}$ and thereby $r_b<p_b$. This is because
			the factor $(1-\theta)\theta^{n-2}$ in Expression \eqref{foc_kbapp} makes $%
			J_b(\theta)$ arrives $0$ faster; and the function $(1-\theta )(1-\log
			(1-\theta ))$ is decreasing in $\theta$.} 
		\begin{align*}
			r_b=(1-\theta_{r_b})(1- \log(1-\theta_{r_b})),
		\end{align*}
		where $\theta_{r_b}$ satisfies Equation \eqref{lambda_app} with $\lambda_b=0$
		and $\theta_0=0$, that is, 
		\begin{align*}
			\theta_{r_b} ^{n-1}+\left( \sum_{i=1}^{n-1}\frac{ \theta_{r_b} ^{i}}{i}+\log
			(1-\theta_{r_b} )\right)=0.
		\end{align*}
		
		When $p\in\left(0,r_b\right)$, $\lambda_b$ will be positive. In this case, $%
		\theta_0$ will be also positive. When $p\in[r_b,p_b]$, $\frac{ \partial 
			\mathcal{L}_b}{\partial \theta_{0}}$ is always less than $0$, $\theta_0$
		will go down to 0 and both $k_b$ and $\theta_0$ will be corner solutions.
	\end{itemize}
	
	In summary, as stated in Theorem \ref{result_buyer}, the buyer optimal
	information ($\alpha =1$) is the truncated Pareto distribution $G_{b}$ in %
	\eqref{b1} parameterized by $\left(\theta_0,k_b,\theta_b\right)$.
	
	\begin{itemize}
		\item When $n=2$, for any $p\in(0,1)$, the optimal $k_b=0$ is a corner
		solution.
		
		\begin{itemize}
			\item When $p\in\left(0,r_b \right)$, $\theta_0 $ is an interior solution. $%
			\left(\theta_0,\theta_b\right)$ is jointly determined by the Euler-Lagrange
			condition and the mean constraint, 
			\begin{align*}
				& \theta_0 =\frac{\left( 2\theta_b +\log (1-\theta_b )\right) -\left( \theta
					_{0} +\log (1-\theta _{0})\right) }{\log (1-\theta_b )-\log (1-\theta _{0})},
				\\
				& \left( 1-\theta_b\right) \left( 1-\log \left( 1-\theta _{b}\right) +\log
				(1-\theta _{0})\right) =p.
			\end{align*}
			
			\item When $p\in\left[r_b,1\right)$, both $k_b=0$ and $\theta_0=0$ are
			corner solutions. The only parameter $x_{b}$ is then pinned down by the mean
			constraint $x_{b}(1-\log (x_{b}))=p.$
		\end{itemize}
		
		\item When $n=3$, there are two thresholds $r_b$ and $p_b$.
		
		\begin{itemize}
			\item When $p\in\left(p_b,1\right)$, $k_{b} $ is an interior solution and $%
			\theta_0=0$ is a corner solution. $\theta_b>0$ solves $J_b(\theta_b)=0$; and 
			$k_b$ is pinned down by the mean constraint 
			\begin{equation*}
				(1-k_{b})(1-\theta_b)(1-\log (1-\theta_b))+k_{b}=p.
			\end{equation*}
			
			\item When $p\in\left[r_b,p_b\right]$, both $k_b=0$ and $\theta_0=0$ are
			corner solutions. The only parameter $x_{b}$ is then pinned down by the mean
			constraint $x_{b}(1-\log (x_{b}))=p.$
			
			\item When $p\in\left(0,r_b \right)$, $k_b=0$ is a corner solution and $%
			\theta_0 $ is an interior solution. $\left(\theta_0,\theta_b\right)$ is
			jointly determined by the Euler-Lagrange condition and the mean constraint, 
			\begin{align*}
				& \theta_0^{n-1}=\frac{\left( \theta_b ^{n-1}+\left( \sum_{i=1}^{n-1} \frac{
						\theta_b ^{i}}{i}+\log (1-\theta_b )\right) -\left( \sum_{i=1}^{n-1}\frac{
						\theta _{0}^{i}}{i}+\log (1-\theta _{0})\right) \right) }{\log (1-\theta_b
					)-\log (1-\theta _{0})}, \\
				& \left( 1-\theta_b\right) \left( 1-\log \left( 1-\theta _{b}\right) +\log
				(1-\theta _{0})\right) =p.
			\end{align*}
		\end{itemize}
	\end{itemize}
	
	Lastly, we show the following lemma that the benefit in \eqref{foc_ks} is greater than the
benefit in \eqref{foc_kb}. Hence, the buyer-optimal information designer is
more reluctant to raise the low virtual value than a seller-worst
information designer.

\begin{claim}\label{bsgeqbb}
	For any $n\geq3$, the benefit in \eqref{foc_ks} is greater than the
	benefit in \eqref{foc_kb}.
\end{claim}

	\begin{proof}
		Define the difference of the benefit in \eqref{foc_ks} and the benefit in %
		\eqref{foc_kb} by $\triangle J_{n}(\theta )\times \theta $. Hence, we have 
		\begin{equation*}
			\triangle J_{n}(\theta )=\left( n-\theta ^{n-1}\right) \left( 1+\frac{%
				(1-\theta )\log (1-\theta )}{\theta }\right) -\sum_{i=1}^{n-1}\frac{\theta
				^{i}}{i}.
		\end{equation*}%
		We then define 
		\begin{equation*}
			\triangle J_{n2}(\theta )=\left( n-\theta ^{2-1}\right) \left( 1+\frac{%
				(1-\theta )\log (1-\theta )}{\theta }\right) -\sum_{i=1}^{n-1}\frac{\theta
				^{i}}{i}\text{.}
		\end{equation*}%
		We have $\triangle J_{n}(\theta )\geq \triangle J_{n2}(\theta )$ for any $%
		n\geq 3$. Then, taking the derivative of $\triangle J_{n2}$ with respect to $%
		\theta $, we obtain 
		\begin{equation*}
			\frac{\partial \triangle J_{n2}}{\partial \theta }=\frac{n}{\theta ^{2}}%
			\left( -\log (1-\theta )-\theta \right) +\log (1-\theta )-\frac{1-\theta
				^{n-1}}{1-\theta }.
		\end{equation*}%
		Moreover, 
		\begin{align*}
			\frac{\partial ^{2}\triangle J_{n2}}{\partial \theta \partial n}=& \frac{%
				-\log (1-\theta )-\theta }{\theta ^{2}}+\frac{\theta ^{n-1}\log (\theta )}{%
				1-\theta } \\
			\geq & \left. \frac{\partial ^{2}\triangle J_{n2}}{\partial \theta \partial n%
			}\right\vert _{n=3}(\theta )>0,\quad \forall \theta \in (0,1).
		\end{align*}
		
		That is, $\frac{\partial \triangle J_{n2}}{\partial \theta }$ is increasing
		in $n$. Hence, to show that $\frac{\partial \triangle J_{n2}}{\partial
			\theta }>0$ for any $n\geq 3$, it suffices to show $\frac{\partial \triangle
			J_{n2}}{\partial \theta }>0$ when $n=3$. First, we have 
		\begin{equation*}
			\left. \frac{\partial ^{2}\triangle J_{n2}}{\partial \theta ^{2}}\right\vert
			_{n=3}=\frac{6\log (1-\theta )-\frac{(\theta -2)\theta \left( \theta
					^{2}-3\right) }{\theta -1}}{\theta ^{3}}.
		\end{equation*}%
		Then, we have 
		\begin{equation*}
			\frac{\partial }{\partial \theta }\left( 6\log (1-\theta )-\frac{(\theta
				-2)\theta \left( \theta ^{2}-3\right) }{\theta -1}\right) =\frac{\theta
				^{2}(-3\theta ^{2}+8\theta -3)}{(1-\theta )^{2}}.
		\end{equation*}%
		For $\theta \in (0,\frac{1}{3}\left( 4-\sqrt{7}\right) )$, we have $-3\theta
		^{2}+8\theta -3<0$ and for $\theta \in (\frac{1}{3}\left( 4-\sqrt{7}\right)
		,1)$, we have $-3\theta ^{2}+8\theta -3>0$.
		
		Therefore, $\left. \frac{\partial ^{2}\triangle J_{n2}}{\partial \theta ^{2}}%
		\right\vert _{n=3}$ is first decreasing and then increasing. Since $%
		\lim_{\theta \downarrow 0}\left. \frac{\partial ^{2}\triangle J_{n2}}{%
			\partial \theta ^{2}}\right\vert _{n=3}(\theta )=-1$ and $\lim_{\theta
			\uparrow 1}\left. \frac{\partial ^{2}\triangle J_{n2}}{\partial \theta ^{2}}%
		\right\vert _{n=3}(\theta )\rightarrow \infty $, $\left. \frac{\partial
			\triangle J_{n2}}{\partial \theta }\right\vert _{n=3}$ is also decreasing
		first and then increasing. Therefore, $\left. \frac{\partial \triangle J_{n2}%
		}{\partial \theta }\right\vert _{n=3}$ has a unique minimum over $\theta \in
		(0,1)$ which is strictly positive. Hence, when $n=3$, $\frac{\partial
			\triangle J_{n2}}{\partial \theta }>0$ for any $\theta $. Thus, $\triangle
		J_{n}(\theta )>\triangle J_{n2}(\theta )\geq \triangle J_{n2}(0)=0.$
		Therefore, the benefit in \eqref{foc_ks} is greater than the benefit in %
		\eqref{foc_kb}.
	\end{proof}
	
	\subsubsection{Proof of Lemma \protect\ref{nonneg}}
	
	\label{proofnonneg}
	
	\begin{proof}[Proof of nonnegativity]
		For each buyer $i$, let $G_{i}$ be a signal distribution which assigns 
		positive probability on negative virtual values for signals in $\left[ 0,1  %
		\right] $ and let $x_{i0}=\inf \{x|\hat{\varphi}(x|G_{i})\geq 0\}$.
		
		First, we construct a distribution $\tilde{G}_{i}^{\theta _{i0}}$ similar to
		that in Appendix \ref{proofalmost}:  
		\begin{equation*}
			\tilde{G}_{i}^{\theta _{i0}}(x)= 
			\begin{cases}
				\theta _{i0}, & \text{ if }x\in \lbrack 0,x_{\theta _{i0}}); \\ 
				1-\frac{x_{0}(1-G_i(x_{0}^{-}))}{x}, & \text{ if }x\in \lbrack x_{\theta
					_{i0}},x_{i0}); \\ 
				G_i(x), & \text{ if }x\in \lbrack x_{i0},1],%
			\end{cases}%
		\end{equation*}
		By similar arguments, $\{\tilde{G}_{i}^{\theta _{i0}}\}_{i}$ generates 
		weakly less revenue than $\{G_{i}\}$.
		
		Second, define another signal distribution $\hat{G}_{i}^{x_{1}}$ similar to 
		that in Appendix \ref{proofalmost}:  
		\begin{equation*}
			\hat{G}_{i}^{x_{i1}}(x)= 
			\begin{cases}
				1-\frac{1-\tilde{G}_i^{\theta _{i0}}(x_{i1})}{x}, & \text{ if }x\in \lbrack
				1- \hat{G}_{i}^{\theta _{i0}}(x_{i1}),x_{i1}]; \\ 
				\tilde{G}_{i}^{\theta _{i0}}\left( x\right), & \text{ if }x\in (x_{i1},1].%
			\end{cases}%
		\end{equation*}
		By similar arguments, $\{\hat{G}_{i}^{x_{i1}}\}_{i}$ generates strictly less
		revenue than $\{\tilde{G}_{i}^{\theta _{i0}}\}_{i}$. 
	\end{proof}
	
	\begin{proof}[Proof of regularity]
		For each buyer $i$, let $G_{i}$ induce ironed virtual value on $
		[x_{i1},x_{i2}]$. First, we construct a distribution $\hat{G}_{i}^{\theta
			_{i0}}$ similar to that in Appendix \ref{proofregular}:  
		\begin{equation*}
			\hat{G}_{i}^{\theta _{i0}}\left( x\right) = 
			\begin{cases}
				\theta _{i0}, & \text{ if }x\in \lbrack 0,x_{\theta _{i0}}), \\ 
				\hat{G}_{i}(x), & \text{ if }x\in \lbrack x_{\theta _{i0}},1],%
			\end{cases}%
		\end{equation*}
		where  
		\begin{equation*}
			\hat{G}_{i}(x)= 
			\begin{cases}
				G_{i}(x), & \text{ if }x\not\in \lbrack x_{i1},x_{i2}]; \\ 
				1-\frac{(1-G_i(x_{i1}))(x_{i1}-k)}{x-k}, & \text{ if }x\in \lbrack
				x_{i1},x_{i2}].%
			\end{cases}%
		\end{equation*}
		By similar arguments, $\{\hat{G}_{i}^{\theta _{i0}}\}_{i}$ generates weakly 
		less revenue than $\{G_{i}\}$.
		
		Although this modification still puts a positive mass on signal $0$, by the 
		nonnegativity part of Lemma \ref{nonneg}, we can further modify this 
		distribution into the one with nonnegative virtual values and thereby 
		achieve strictly less seller revenue. Thus, the seller-worst distribution 
		must also be regular. 
	\end{proof}
	
	\subsubsection{Proof of Proposition \protect\ref{asy_buyer}}
	
	\label{proofasybuyer}
	
	\begin{proof}[Proof for $n=2$]
		Within the class documented in the context, let $\theta _{0i}$ denote the 
		mass on signal $0$ and $1-\theta _{i}$ denote the mass on signal $1$ for 
		buyer $i$. Without loss of generality, we assume that $\theta _{01}\leq 
		\theta _{02}$, and then the buyer-optimal information design problem is  
		\begin{align*}
			\max_{\theta _{0i},\theta _{i}}\quad & \int_{0}^{1}x\mathrm{d}
			(G_{1}G_{2})-(1-\theta _{1}\theta _{2}) \\
			\text{s.t. }\quad & (1-\theta _{i})\left( 1-\log (1-\theta _{i})+\log
			(1-\theta _{0i})\right) =p,\quad \forall i=1,2.
		\end{align*}
		And the objective can be rewritten as  
		\begin{align*}
			& 1-\int_{0}^{1}G_{1}G_{2}\mathrm{d}x-(1-\theta _{1}\theta _{2}) \\
			=& 1-\int_{0}^{x_{1}}\theta _{01}\theta _{02}\mathrm{d}x-
			\int_{x_{1}}^{x_{2}}\theta _{02}\left( 1-\frac{a_{1}}{x}\right) \mathrm{d}
			x-\int_{x_{2}}^{1}\left( 1-\frac{a_{1}}{x}\right) \left( 1-\frac{a_{2}}{x}
			\right) \mathrm{d}x-(1-\theta _{1}\theta _{2}) \\
			=& 1-\theta _{01}\theta _{02}x_{1}-\theta _{02}(x_{2}-x_{1}-a_{1}(\log
			(x_{2})-\log (x_{1})))-(1-x_{2}) \\
			& -(a_{1}+a_{2})\log (x_{2})-a_{1}a_{2}\left( 1-\frac{1}{x_{2}}\right)
			-(1-\theta _{1}\theta _{2}) \\
			=& 2(1-\theta _{1})(\theta _{02}-\theta _{2})+\theta _{02}(1-\theta
			_{1})\left( \log (1-\theta _{01})-\log (1-\theta _{1})\right) \\
			& +(2-(1-\theta _{1})\theta _{02}-\theta _{1}-\theta _{2})\left( \log
			(1-\theta _{02})-\log (1-\theta _{2})\right) .
		\end{align*}
		
		Therefore, the Lagrangian is  
		\begin{align*}
			\mathcal{L}=& 2(1-\theta _{1})(\theta _{02}-\theta _{2})+\theta
			_{02}(1-\theta _{1})\left( \log (1-\theta _{01})-\log (1-\theta _{1}))\right)
			\\
			& +(2-(1-\theta _{1})\theta _{02}-\theta _{1}-\theta _{2})\left( \log
			(1-\theta _{02})-\log (1-\theta _{2}))\right) \\
			& -\sum_{i=1}^{2}\lambda _{i}\left( (1-\theta _{i})\left( 1-\log (1-\theta
			_{i})+\log (1-\theta _{0i})\right) -p\right) .
		\end{align*}
		
		Taking the first derivative with respect to $\theta _{01}$ yields  
		\begin{equation*}
			\frac{\partial \mathcal{L}}{\partial \theta _{01}}=\frac{-\theta
				_{02}(1-\theta _{1})}{1-\theta _{01}}+\frac{\lambda _{1}(1-\theta _{1})}{
				1-\theta _{01}}=\frac{(\lambda _{1}-\theta _{02})(1-\theta _{1})}{1-\theta
				_{01}}.
		\end{equation*}
		Since $\lambda _{1}$ is constant, $\frac{\partial \mathcal{L}}{\partial
			\theta _{01}}$ is either always non-positive or always nonnegative. 
		Therefore, the optimal $\theta _{01}$ is a boundary solution. That is $
		\theta _{01}=\theta _{02}$ or $\theta _{01}=0$.
		
		First, if $\theta _{01}=\theta _{02}$, then the solution becomes the 
		symmetric buyer-optimal information structure design problem when $n=2$.
		
		Second, if $\theta _{01}=0$, then the information designer only needs to 
		choose the optimal $\theta _{02}$ to maximize the buyers' surplus. Since $
		p<r_{b}$, the symmetric buyer-optimal information structure puts a positive 
		mass on signal $0$. Hence, $\theta _{02}>0$; otherwise, by the mean 
		constraint, both buyers distributions become identical and place mass only 
		on virtual values $0$ and $1$, which, by Theorem \ref{result_buyer}, is not 
		optimal.
		
		However, to guarantee that the asymmetric case with $\theta _{01}=0$ and $
		\theta _{02}>0$ is not vacuous, we have to know the exact value of $\lambda 
		_{1}$ and $\theta _{02}$. We appeal to simulation. For instance, let $p=0.4$%
		, under the symmetric buyer-optimal information, the mass on signal $0$ is $
		\theta _{0}\simeq 0.1251$. Then, the mean constraint implies $\theta_b 
		\simeq 0.8581$. Hence, the buyers' surplus is $0.3082$. If we take $\theta 
		_{01}=0$ and $\theta _{02}=0.3$, then the mean constraint implies that $
		\theta _{1}\simeq 0.8677$ and $\theta _{2}\simeq 0.8374$. The buyers' 
		surplus is $0.3107>0.3082$. Moreover, under the resulting asymmetric 
		information structure, buyer 1's surplus is $0.1443<0.1541=0.3082/2$, 
		whereas buyer 2's surplus is $0.1663>0.1541$. Hence, buyer 2 benefits from 
		the asymmetric information structure more than the loss incurred by buyer 1. 
	\end{proof}
	
	\begin{proof}[Proof for $n\rightarrow \infty $]
		Let $\theta _{0}$ denote the mass on signal $0$ and $1-\theta $ denote the 
		mass on signal $1$, let buyer $i$'s signal distribution be  
		\begin{equation*}
			G_{i}(x)= 
			\begin{cases}
				\theta _{0}, & \mbox{if }x\in \lbrack 0,x_{1}); \\ 
				1-\frac{(1-\theta )(1-p)}{x-p}, & \mbox{if }x\in \lbrack x_{1},1); \\ 
				1, & \mbox{if }x=1,%
			\end{cases}%
		\end{equation*}
		where $x_{1}=p+(1-\theta )(1-p)/(1-\theta _{0})$, and all the other buyers' 
		signal distributions be the degenerate distribution as in Corollary \ref%
		{equivalence}. First, for $x\in \lbrack x_{1},1)$, the induced virtual value
		is $p$. And since $x_{1}>p$, by the allocation rule, the good will be 
		allocated to buyer $i$ if buyer $i$'s signal belongs to $[x_{1},1)$. Since 
		the seller can only get $p$ which is strictly less than $x_{1}$ when $
		x_{i}\in \lbrack x_{1},1)$, buyer $i$ must obtain some positive information 
		rents. Hence, the buyers' total surplus is strictly positive.\footnote{
			For example, let $p=0.4$, take $\theta _{0}=0.4751$ and $\theta =0.8661$ 
			which satisfies the mean constraint. Then, the buyers' surplus is $0.1097$ 
			which is strictly larger than 0 under the symmetric buyer-optimal 
			information structure when $n\rightarrow \infty $.} 
	\end{proof}
	
	\subsubsection{Proof of Proposition \protect\ref{ap}}
	
	\label{proofap}
	
	\begin{proof}
		With change of variable, for $n=2$, the seller-worst information designer's 
		problem can be written as  
		\begin{align}
			\max_{\{F_{i}(k)\}_{i=1}^{2}}& \int_{0}^{1}\prod_{i=1}^{2}F_{i}(k)\mathrm{d}
			k-1  \label{prob-asy} \\
			\text{s.t. }& \int_{0}^{1}(1-F_{i}(k))(1-\log (1-F_{i}(k)))\mathrm{d}
			k=p_{i},\quad \forall i=1,2.  \notag
		\end{align}
		Consider the following Lagrangian $\mathcal{L}$:  
		\begin{equation*}
			\mathcal{L}(F_{i}(k),\lambda
			_{i})=\int_{0}^{1}\prod_{i=1}^{2}F_{i}(k)-\sum_{i=1}^{2}\lambda _{i}\left(
			(1-F_{i}(k))(1-\log (1-F_{i}(k)))\right) \mathrm{d}k+\sum_{i=1}^{2}\lambda
			_{i}p_{i}\text{.}
		\end{equation*}
		By Theorem 4.2.1 of \cite{van2004isoperimetric}, for any state $k$, the 
		Euler-Lagrange equation with respect to $F_{i}(k)$ is  
		\begin{equation*}
			\prod_{j\neq i}F_{j}(k)-\lambda _{i}\log (1-F_{i}(k))=0,\quad \forall i=1,2.
		\end{equation*}
		Denote $\theta _{i}=F_{i}(k)$ and we have  
		\begin{align*}
			\theta _{2}& =\lambda _{1}\log (1-\theta _{1}), \\
			\theta _{1}& =\lambda _{2}\log (1-\theta _{2}).
		\end{align*}
		
		First, note that $\lambda _{i}$ should be negative. Then, we claim that 
		there is a unique solution pair $(\theta _{1}^{\ast },\theta _{2}^{\ast })$ 
		such that the Euler-Lagrange equations are satisfied. Hence, given any 
		solution $(\theta _{i},\theta _{j})$, we have  
		\begin{equation}
			\theta _{1}=\lambda _{2}\log (1-\lambda _{1}\log (1-\theta _{1})).
			\label{fixed}
		\end{equation}
		Taking the first and second derivative of the right-hand side, we obtain  
		\begin{align*}
			\frac{\partial }{\partial \theta _{1}}\lambda _{2}\log (1-\lambda _{1}\log
			(1-\theta _{1}))& =\frac{\lambda _{1}\lambda _{2}}{(1-\theta _{1})(1-\lambda
				_{1}\log (1-\theta _{1}))}>0, \\
			\frac{\partial ^{2}}{\partial \theta _{1}^{2}}\lambda _{2}\log (1-\lambda
			_{1}\log (1-\theta _{1}))& =\frac{\lambda _{1}\lambda _{2}(1-\lambda
				_{1}\log (1-\theta _{1})-\lambda _{1})}{(1-\theta _{1})^{2}(1-\lambda
				_{1}\log (1-\theta _{1}))^{2}}>0.
		\end{align*}
		Hence, the right-hand side is convex and increasing in $\theta _{1}$. It 
		follows that there will be only one solution of $\theta _{1}$ such that 
		Equation \eqref{fixed} holds.
		
		Hence, an optimal $F_{1}^{\ast }(k)$ and $F_{2}^{\ast}(k)$ are both 
		constantly equal to $\theta _{1}$ and $\theta _{2}$ for $k<1$ That is, both $
		F_{1}^{\ast }(k)$ and $F_{2}^{\ast }(k)$ have binary support $\{k,1\}$. By 
		part (1) of Lemma \ref{existence2}, the uniqueness of $\theta _{i}$ implies $
		\left( F_{1}^{\ast },F_{2}^{\ast}\right) $ with the binary support is also a
		global maximizer.
		
		Again, let $\theta _{i}$ be the mass on the virtual value $k$ for buyer $i$.
		Then the information design problem is reduced into  
		\begin{align*}
			\max_{k,\theta _{1},\theta _{2}}& (1-k)\theta _{1}\theta _{2}-1 \\
			\text{s.t. }& k+(1-k)\left( (1-\theta _{i})(1-\log (1-\theta _{i}))\right)
			=p_{i}.
		\end{align*}
		The Lagrangian with multiplier $\lambda _{i}$ is  
		\begin{equation*}
			\mathcal{L}(k,\theta_1,\theta_2,\lambda_1,\lambda_2)=(1-k)\theta _{1}\theta
			_{2}-1-\sum_{i=1}^{2}\lambda _{i}\left( k+(1-k)\left( (1-\theta _{i})(1-\log
			(1-\theta _{i}))\right) -p_{i}\right).
		\end{equation*}
		The Euler-Lagrange equation with respect to $\theta _{i}$ is  
		\begin{equation}  \label{prop2theta}
			\frac{\partial \mathcal{L}}{\partial \theta _{i}}=(1-k)\left( \theta
			_{-i}-\lambda _{i}\log (1-\theta _{i})\right) =0\Longrightarrow \lambda
			_{i}= \frac{\theta _{-i}}{\log (1-\theta _{i})}.
		\end{equation}
		Also the Euler-Lagrange equation with respect to $k$ is  
		\begin{equation}  \label{prop2k}
			\frac{\partial \mathcal{L}}{\partial k}=-\theta _{1}\theta
			_{2}-\sum_{i=1}^{2}\lambda _{i}\left( \theta _{i}+(1-\theta _{i})\log
			(1-\theta _{i})\right) \text{.}
		\end{equation}
		Plugging Equation \eqref{prop2theta} into Expression \eqref{prop2k}, we have
		\begin{equation*}
			\frac{\partial \mathcal{L}}{\partial k}=\theta _{1}\theta _{2}-\theta
			_{1}\left( 1+\frac{\theta _{2}}{\log (1-\theta _{2})}\right) -\theta
			_{2}\left( 1+\frac{\theta _{1}}{\log (1-\theta _{1})}\right) .
		\end{equation*}
		We claim that $\frac{\partial \mathcal{L}}{\partial k}\leq 0$. To see this, 
		it suffices to show that $1+\frac{\theta _{i}}{\log (1-\theta _{i})}\geq  
		\frac{\theta _{i}}{2}$. Indeed,  
		\begin{equation*}
			\frac{\partial }{\partial \theta _{i}}\left( -\left( 1-\frac{\theta _{i}}{2}
			\right) \log (1-\theta _{i})-\theta _{i}\right) =\frac{1}{2}\left( \frac{
				\theta _{i}}{1-\theta _{i}}+\log (1-\theta _{i})\right) \geq 0.
		\end{equation*}
		Hence, $-\left( 1-\frac{\theta _{i}}{2}\right) \log (1-\theta _{i})-\theta 
		_{i}\geq 0$. That is, $1+\frac{\theta _{i}}{\log (1-\theta _{i})}\geq \frac{
			\theta _{i}}{2}$. Therefore, in order the maximize the objective, the 
		information designer should choose $k=0$. 
	\end{proof}
	
	\subsubsection{Auctions under an irregular distribution}
	
	\label{irexample}Suppose that the support of the signal $x$ is $[1,2]$.
	Consider the signal distribution: 
	\begin{equation*}
		G(x)= 
		\begin{cases}
			2x-2, & \mbox{if }x\in \lbrack 1,\frac{4}{3}); \\ 
			\frac{x}{2}, & \mbox{if }x\in \lbrack \frac{4}{3},2].%
		\end{cases}%
	\end{equation*}
	The quantile function $x(\tau )$ of distribution $G$ is given by: 
	\begin{equation*}
		x(\tau )= 
		\begin{cases}
			\frac{\tau }{2}+1, & \mbox{if }\tau \in \lbrack 0,\frac{2}{3}); \\ 
			2\tau , & \mbox{if }\tau \in \lbrack \frac{2}{3},1].%
		\end{cases}%
	\end{equation*}
	The virtual value without ironing is given by 
	\begin{equation*}
		\varphi (x)=x-\frac{1-G(x)}{g(x)}= 
		\begin{cases}
			x-\frac{1-(2x-2)}{2}=2x-\frac{3}{2}, & \mbox{if }x\in \lbrack 1,\frac{4}{3});
			\\ 
			x-\frac{1-x/2}{1/2}=2x-2, & \mbox{if }x\in \lbrack \frac{4}{3},2].%
		\end{cases}%
	\end{equation*}
	Denote $\Phi (x)=\int_{1}^{x}\varphi (t)g(t)\mathrm{d}t$. Then, 
	\begin{equation*}
		\Phi (x)= 
		\begin{cases}
			\int_{1}^{x}2(2t-3/2)\mathrm{d}t=(x-1)(2x-1), & \mbox{if }x\in \lbrack 1, 
			\frac{4}{3}); \\ 
			\int_{1}^{4/3}2(2t-3/2)\mathrm{d}t+\int_{4/3}^{x}\frac{1}{2}(2t-2)\mathrm{d}
			t=\frac{1}{2}x(x-2)+1, & \mbox{if }x\in \lbrack \frac{4}{3},2].%
		\end{cases}%
	\end{equation*}
	Therefore, $\Phi (\tau )=\Phi (x(\tau ))$ is given by 
	\begin{equation*}
		\Phi (\tau )= 
		\begin{cases}
			(x(\tau )-1)(2x(\tau )-1)=\frac{\tau }{2}(\tau +1), & \mbox{if }\tau \in
			\lbrack 0,\frac{2}{3}); \\ 
			\frac{1}{2}x(\tau )(x(\tau )-2)+1=1-2\tau +2\tau ^{2}, & \mbox{if }\tau \in
			\lbrack \frac{2}{3},1].%
		\end{cases}%
	\end{equation*}
	Denote $\Psi (\tau )$ be the largest convex function such that $\Psi (\tau
	)\leq \Phi (\tau )$. Then 
	\begin{equation*}
		\Psi (\tau )= 
		\begin{cases}
			\frac{\tau }{2}(\tau +1), & \mbox{if }\tau \in \lbrack 0,\frac{1}{2}); \\ 
			\tau -\frac{1}{8}, & \mbox{if }\tau \in \lbrack \frac{1}{2},\frac{3}{4}); \\ 
			1-2\tau +2\tau ^{2}, & \mbox{if }\tau \in \lbrack \frac{3}{4},1].%
		\end{cases}%
	\end{equation*}
	Therefore, the ironed virtual value $\hat{\varphi}(\tau )=\Psi ^{\prime
	}(\tau )$ is given by 
	\begin{equation*}
		\hat{\varphi}(\tau )= 
		\begin{cases}
			\tau +\frac{1}{2}, & \mbox{if }\tau \in \lbrack 0,\frac{1}{2}); \\ 
			1, & \mbox{if }\tau \in \lbrack \frac{1}{2},\frac{3}{4}); \\ 
			4\tau -2, & \mbox{if }\tau \in \lbrack \frac{3}{4},1].%
		\end{cases}%
	\end{equation*}
	Replace $\tau $ by $G(x)$ and the ironed virtual value $\hat{\varphi}(x)$ in
	terms of $x$ is given by 
	\begin{equation*}
		\hat{\varphi}(x)= 
		\begin{cases}
			2x-\frac{3}{2}, & \mbox{if }x\in \lbrack 1,\frac{5}{4}); \\ 
			1, & \mbox{if }x\in \lbrack \frac{5}{4},\frac{3}{2}); \\ 
			2x-2, & \mbox{if }x\in \lbrack \frac{3}{2},1].%
		\end{cases}%
	\end{equation*}
	Note that $\hat{\varphi}(x)\geq \hat{\varphi}(1)=1/2$ is always positive.
	Hence, the optimal reserve price is zero. Let us then compute the expected
	highest ironed virtual value $\hat{\varphi}(x)$ and the revenue under a
	second-price auction.
	
	\begin{enumerate}
		\item First, by symmetry, the largest value $x^{(1)}$ induces the highest
		ironed virtual value and the highest value $x^{(1)}$ follows the
		distribution $G^{2}$. Then, we have 
		\begin{align*}
			& \mathbb{E}[\hat{\varphi}(x^{(1)})\vert G]=\int_{1}^{2}\hat{\varphi}(x) 
			\mathrm{d}G^{2}(x)=\int_{1}^{2}\hat{\varphi}(x)2\cdot g(x)G(x)\mathrm{d}x \\
			=& \int_{1}^{5/4}(2x-3/2)2\cdot 2(2x-2)\mathrm{d}x+\int_{5/4}^{4/3}1\cdot
			2\cdot 2(2x-2)\mathrm{d}x+\int_{4/3}^{3/2}1\cdot \frac{2x}{4}\mathrm{d}
			x+\int_{3/2}^{2}(2x-2)\cdot \frac{2x}{4}\mathrm{d}x \\
			=& \frac{5}{24}+\frac{7}{36}+\frac{17}{144}+\frac{2}{3}=\frac{19}{16}.
		\end{align*}
		
		\item Second, the lowest value $x^{(2)}$ follows the distribution $2G-G^2$,
		then we have 
		\begin{align*}
			&\mathbb{E}[x^{(2)}\vert G]=\int_{1}^{2}x\mathrm{d}(2G-G^2)=\int_{1}^{2}x
			\cdot2\cdot g(x)(1-G(x))\mathrm{d}x \\
			=&\int_{1}^{4/3}x\cdot2\cdot 2(3-2x)\mathrm{d}x+\int_{4/3}^{2}x\cdot 2\cdot 
			\frac{1}{2}(1-\frac{x}{2})\mathrm{d}x \\
			=&\frac{82}{81}+\frac{14}{81}=\frac{96}{81}=\frac{32}{27}=\frac{19}{16}- 
			\frac{1}{432}<\frac{19}{16}.
		\end{align*}
	\end{enumerate}
	
	Therefore, a second-price auction with an optimal reserve price $0$ obtains
	strictly less revenue than the expected highest ironed virtual value. Hence,
	the second-price auction with an optimal reserve price $0$ is not an optimal
	auction.
	
	Note that although $G(x)$ can induce the ironed virtual value $\hat{\varphi}
	(x)$, the regular distribution $\hat{G}$ which also induces the same virtual
	value $\hat{\varphi}(x)$ will first-order stochastically dominate $G$.
	Hence, the expectation of the lowest/second-highest value $x^{(2)}$ of $\hat{%
		G}$ is strictly larger than that of $G$. \ Formally, $\hat{G}$ is given by 
	\begin{equation*}
		\hat{G}(x)= 
		\begin{cases}
			2x-2, & \mbox{if }x\in \lbrack 1,\frac{5}{4}]; \\ 
			1-\frac{1}{8(x-1)}, & \mbox{if }x\in (\frac{5}{4},\frac{3}{2}); \\ 
			\frac{x}{2}, & \mbox{if }x\in \lbrack \frac{3}{2},2].%
		\end{cases}%
	\end{equation*}
	For $x\in\left[1,\frac{5}{4}\right]\cup \left[\frac{3}{2},2\right]$, we have 
	$\hat{		G	}(x)=G(x)$. For $x\in \left(\frac{5}{4},\frac{3}{2}\right)$, $\hat{%
		G}(x)<G(x)$. Hence $\hat{G}$ first-order stochastically dominates $G$.
	Moreover, we have ${\ \mathbb{E}}\left[x^{(2)}|\hat{G}\right]=\frac{19}{16}= 
	\mathbb{E}\left[\hat{\varphi} (x^{(1)})\vert G\right] $. Of course, this is
	an example of the standard result that the optimal auction is a second-price
	auction with an optimal reserve price, provided that the signal distribution
	is regular.
	
	\subsubsection{Inequivalence under a continuous prior with negative virtual
		values}
	
	\label{inequivir}
	
	\begin{claim}
		The buyer-optimal information structure and the seller-worst information
		structure are inequivalent under the following regular continuous prior $%
		\hat{H}$ with $\mathbb{E}_{\hat{H}}[x]\simeq0.26785$, 
		\begin{align*}
			\hat{H}(x)= 
			\begin{cases}
				5x, & \text{if } x\in[0,1/20]; \\ 
				1-\frac{9/80}{x-(-1/10)}, & \text{if } x\in[1/20,0.999); \\ 
				\frac{112500}{1099}x-\frac{111401}{1099}, & \text{if } x\in[0.999,1].%
			\end{cases}%
		\end{align*}
	\end{claim}
	
	It follows that $\hat{H}$ induces negative virtual values on $[0,0.999)$. In
	particular, the virtual value is $-1/10$ on $[1/20,0.999)$.
	
	\begin{proof}
		First, we provide an upper bound of the buyers' surplus under some 
		seller-worst information structure $G_{s}^*$ .
		
		\begin{itemize}
			\item Suppose that we only have the mean constraint instead of the 
			mean-preserving spread constraint. By Theorem \ref{result_seller}, the 
			following signal distribution is the unique seller-worst information 
			structure,  
			\begin{equation*}
				G_{s}= 
				\begin{cases}
					1-\frac{x_{s}}{x}, & \text{if }x\in \lbrack x_{s},1); \\ 
					1, & \text{if }x=1,%
				\end{cases}%
			\end{equation*}
			where $x_{s}\simeq 0.07445$ solves $x_{s}(1-\log (x_{s}))=\mathbb{E}_{\hat{H}%
			}[x]$. Since the  information designer faces a more strict mean-preserving
			spread constraint  under the prior $H$, the seller-worst information
			structure $G_{s}^{\ast }$  will generate a weakly higher seller's revenue
			than $G_{s}$ does.
			
			\item Even though $G_{s}$ is infeasible under the mean-preserving spread 
			constraint, we claim that the seller-worst information structure $
			G_{s}^{\ast }$ must generate a lower buyers' surplus than $G_{s}$ does.
			
			Since the seller-worst signal distribution must be regular and admit only 
			nonnegative virtual values even under continuous priors, $G_{s}$ must be a 
			mean-preserving spread of $G_{s}^{\ast }$. 
			
			\begin{itemize}
				\item To explain, since the virtual value  is $\varphi (x)=x-\frac{1-G(x)}{%
					G^{\prime }(x)}$, at an intersection of two  regular distributions, both of
				the distributions have the same $x$ and $G(x)$ , and thereby the
				distribution with a higher virtual value $\varphi (x)$  must have a higher
				slope $G^{\prime }(x)$ at $\left( x,G(x)\right) $.  Moreover, $G_{s}$ has
				zero virtual values on $[x_{s},1)$ and $G_{s}$ has  nonnegative virtual
				values on $[0,1]$. As illustrated in Figure \ref{single_crossing}, we claim
				that $G_{s}^{\ast }$ will only cross $G_{s}$  from the below. Suppose $%
				G_{s}^{\ast }$ crosses $G_{s}$ from the above at $ x_{0}$ as the blue curve
				in Figure \ref{single_crossing}. Since $G_{s}$ has  a higher slope than $%
				G_{s}^{\ast }$ at $\left( x_{0},G_{s}(x_{0})\right) $, $ \varphi
				(x_{0}|G_{s}^{\ast })<\varphi (x_{0}|G_{s})=0$ which contradicts  with the
				fact that $G_{s^{\ast }}$ must induce nonnegative virtual values. 
				Therefore, $G_{s}^{\ast }$ must cross $G_{s}$ from the below once and only 
				once. 
			\end{itemize}
			
			Therefore, by Lemma \ref{convex}, $G_{s}$ generates more total surplus  than 
			$G_{s}^{\ast }$. Since the good is always allocated under both $G_{s}^{\ast }
			$ and $G_{s}$,  the buyers' surplus (as the total surplus minus the seller's
			revenue) under $ G_{s}^{\ast }$ will be lower than that under $G_{s}$.
			
			\item The buyers' surplus under $G_{s}$ is $0.24898$, hence the buyers' 
			surplus under the seller-worst signal distribution $G_{s}^{\ast }$ can not 
			exceed $0.24898$. 
		\end{itemize}
		
		Second, to show the inequivalence, it suffices to find another feasible 
		signal distribution $\hat{G}$ which generates a higher buyers' surplus than 
		0.24898. And the construction is as follows,  
		\begin{align*}
			\hat{G}(x)= 
			\begin{cases}
				5x, & \text{if } x\in[0,1/50]; \\ 
				0.1, & \text{if } x\in(1/50,x_0); \\ 
				1-\frac{0.9x_0}{x}, & \text{if } x\in[x_0,0.999); \\ 
				1, & \text{if } x\in[0.999,1],%
			\end{cases}%
		\end{align*}
		where $x_0\simeq 0.0858$.
		
		The yield buyers' surplus under $\hat{G}$ is $0.25205> 0.24898$.
		
		Finally, we use Figure \ref{fig} to illustrate that $\hat{H}$ is a 
		mean-preserving spread of $\hat{G}$. 
	\end{proof}
	
	\begin{figure}[tbp]
		\begin{minipage}[l]{0.45\textwidth}
			\centering
			\begin{tikzpicture}
				\begin{axis}[
					tick label style={font=\scriptsize},
					xlabel={$x$},
					ylabel={ Distributions},
					ytick={0,1},
					yticklabels={,1 },
					xtick={0, 0.2,0.255, 1},
					xticklabels={0, , $x_0$, 1},
					no markers,
					line width=0.3pt,
					cycle list={{red,solid}},
					samples=200,
					smooth,
					domain=0:1.3,
					xmin=0, xmax=1,
					ymin=0, ymax=1,
					width=6cm, height=6cm,
					legend cell align=left,
					legend pos=  north west,
					legend style={draw=none,fill=none,name=legend},
					]
					\addplot[black,thick,domain=0.2:1]{1-0.2/x};
					\addplot[red,thick,domain=0.3:0.75]{1/(0.75-0.3)*(x-0.3)};
					\legend{$G_s$,$G_{s^*}$};		
					\addplot[blue,thick,domain=0.05:1]{1/(0.95)*(x-0.05)};
					\addplot[dashed] coordinates {
						(0.255,0)
						(0.255,0.2158)
					}; 
				\end{axis}
			\end{tikzpicture}
			\caption{$G_{s^*}$ single-crosses $G_s$}
			\label{single_crossing}
		\end{minipage}
		\begin{minipage}[r]{0.45\textwidth}
			\centering
			\begin{tikzpicture}
				\begin{axis}[
					tick label style={font=\scriptsize},
					xlabel={$x$},
					ylabel={ Distributions},
					ytick={0.1,1},
					yticklabels={0.1,1 },
					xtick={0, 1},
					xticklabels={0, 1},
					no markers,
					line width=0.3pt,
					cycle list={{red,solid}},
					samples=200,
					smooth,
					domain=0:1.3,
					xmin=0, xmax=1,
					ymin=0, ymax=1,
					width=6cm, height=6cm,
					legend cell align=left,
					legend pos=  north west,
					legend style={draw=none,fill=none,name=legend},
					]
					\addplot[black,thick,domain=0:0.05]{5*x};
					\addplot[red,thick,domain=0.0858:0.999]{1-0.0858*(1-0.1)/x};
					
					\legend{$\hat{H}$,$\hat{G}$};		
					
					\addplot[dashed, red] coordinates {
						(0.02,0.1)
						(0.0858,0.1)
					};
					\addplot[red,thick,domain=0:0.02]{5*x};
					\addplot[black,thick,domain=0.05:0.999]{1-9/(80*x+8)};
				\end{axis}
			\end{tikzpicture}
			\caption{$\hat{G}$ is feasible}
			\label{fig}
		\end{minipage}
	\end{figure}
	
\end{document}